\title{Zero-Rate Thresholds and New Capacity Bounds\\for List-Decoding and List-Recovery}
\author{
    Nicolas Resch\thanks{University of Amsterdam. Email: \href{mailto:n.a.resch@uva.nl}{\texttt{n.a.resch@uva.nl}}.}
    \and
    Chen Yuan\thanks{Shanghai Jiao Tong University. Email: \href{mailto:chen_yuan@sjtu.cn.edu}{\texttt{chen\_yuan@sjtu.cn.edu}}.}
    \and
    Yihan Zhang\thanks{Institute of Science and Technology Austria. Email: \href{mailto:zephyr.z798@gmail.com}{\texttt{zephyr.z798@gmail.com}}.}
}
\begin{document}

\maketitle

\begin{abstract}

In this work we consider the list-decodability and list-recoverability of arbitrary $q$-ary codes, for all integer values of $q\geq 2$. A code is called $(p,L)_q$-list-decodable if every radius $pn$ Hamming ball contains less than $L$ codewords; $(p,\ell,L)_q$-list-recoverability is a generalization where we place radius $pn$ Hamming balls on every point of a combinatorial rectangle with side length $\ell$ and again stipulate that there be less than $L$ codewords. 

% \todo[inline]{I spoke to Venkat, and he asked me how standard the ``Plotkin point'' terminology is. I'm afraid I might have made it up. He suggested ``zero-rate threshold'' for ``Plotkin point''. Maybe we should update that? Also we could change ``Plotkin bound'' to ``zero-rate size bound''.}

Our main contribution is to precisely calculate the maximum value of $p$ for which there exist infinite families of positive rate $(p,\ell,L)_q$-list-recoverable codes, the quantity we call the \emph{zero-rate threshold}. Denoting this value by $p_*$, we in fact show that codes correcting a $p_*+\varepsilon$ fraction of errors must have size $O_{\varepsilon}(1)$, i.e., independent of $n$. Such a result is typically referred to as a ``Plotkin bound.'' To complement this, a standard random code with expurgation construction shows that there exist positive rate codes correcting a $p_*-\varepsilon$ fraction of errors. We also follow a classical proof template (typically attributed to Elias and Bassalygo) to derive from the zero-rate threshold other tradeoffs between rate and decoding radius for list-decoding and list-recovery. 

Technically, proving the Plotkin bound boils down to demonstrating the Schur convexity of a certain function defined on the $q$-simplex as well as the convexity of a univariate function derived from it.  %Intriguingly, the proof is most difficult for the case of $\ell=1$ (i.e., the case relevant for list-decodability), and we only show convexity on a smaller interval. Quite intriguingly, this is not an artifact of the proof.  
We remark that an earlier argument claimed similar results for $q$-ary list-decoding; however, we point out that this earlier proof is flawed. 
\end{abstract}
\tableofcontents

\section{Introduction}
\label{sec:intro}

%\todo[inline]{$p_*(\ell,L;q)$ changed to $p_*(q,\ell,L)$ so it's consistent throughout. $\Sigma$ changed to $[q]$. -- Yihan.}

Given a code $\cC \subset [q]^n$, a fundamental problem of coding-theory is to determine how ``well-spread'' $\cC$ can be if we also insist that $\cC$ have large rate $R = \frac{\log_q|\cC|}{n}$. The most basic way of quantifying ``well-spread'' is by insisting that all pairs of codewords are far apart. That is, we hope that the minimum distance $d := \min\{\disth(\vec c,\vec c'): \vec c \neq \vec c' \in \cC\}$ is large, where $\disth(\cdot,\cdot)$ denotes Hamming distance, i.e., the number of coordinates on which the two strings differ. Equivalently, given any word $\vec y \in [q]^n$, we have that $|\bham(\vec y,r) \cap \cC| \leq 1$, where $r = \lfloor d/2 \rfloor$ and $\bham(\vec y,r) = \{\vec x \in [q]^n:\disth(\vec x,\vec y)\leq r\}$ denotes the Hamming ball of radius $r$ centered at $\vec y$. 

One can naturally relax this requirement to the notion of list-decodability: instead of upper-bounding $|\bham(\vec y,r) \cap \cC|$ by $1$, we upper bound it by a larger integer $L-1$.\footnote{We find it most convenient to let $L$ denote 1 more than the list-size, which is admittedly nonstandard, but will make our computations much cleaner.} Equivalently, if we place Hamming balls of radius $r$ on each codeword of $\cC$, no vector in $[q]^n$ is covered by $L$ or more balls. If $\cC$ satisfies this property we call it $(p,L)_q$-list-decodable. Initially introduced by Elias and Wozencraft in the 1950's~\cite{Elias57,Wozencraft58,elias1991error}, this relaxed notion of decoding has been intensively studied in recent years, in part motivated by purely coding-theoretic concerns, but also due to its connections with theoretical computer science more broadly~\cite{goldreich1989hard,babai1990bpp,lipton1990efficient,kushilevitz1993learning,jackson1997efficient,sudan2001pseudorandom}. 

A further generalization of list-decoding is provided by \emph{list-recoverability}. In this case, one considers tuples of input lists $\vcY = (\cY_1,\dots,\cY_n)$ where each $\cY_i \subset [q]$ is of size at most $\ell$, and the requirement is that the number of codewords $\vc$ satisfying $|\{i \in [n]:c_i \in \cY_i\}| \leq p n$ is at most $L-1$. Such a code is deemed $(p,\ell,L)_q$-list-recoverable. Note that $(p,1,L)_q$-list-recoverability is the same as $(p,L)_q$-list-decoding, demonstrating that list-recoverability is a more general notion. While it was originally defined as an abstraction required for the task of uniquely-/list-decoding concatenated codes~\cite{GuruswamiI01,GuruswamiI02,GuruswamiI03,GuruswamiI04}, it has since found myriad further applications in computer science more broadly, e.g., in cryptography~\cite{HIOS15,holmgren2021fiat}, randomness extraction~\cite{guruswami2009unbalanced}, hardness amplification~\cite{doron2020nearly}, group testing~\cite{INR10,NPR12}, streaming algorithms~\cite{doron2022high}, and beyond.

When it comes to list-decoding and list-recovery, the optimal tradeoff between decoding-radius $p$ and rate $R$ is well-understood if one is satisfied with list-sizes $L = O(1)$.\footnote{Or indeed, if we insist on $L$ just being subexponential.} That is, there exist $(p,\ell,O(\ell/\eps))_q$-list-recoverable codes of rate $1-H_{q,\ell}(p)-\eps$ where\footnotemark{} 
\[
    H_{q,\ell}(p) \coloneqq p\log_q\paren{\frac{q-\ell}{p}}+(1-p)\log_q\paren{\frac{\ell}{1-p}};
\]
conversely, if the rate is at least $1-H_{q,\ell}(p)+\eps$ then it will not be list-recoverable for any $L = o\paren{q^{\eps n}}$~\cite[Theorem~2.4.12]{resch2020thesis}. (Note that setting $\ell=1$ recovers the more well-known list-decoding capacity theorem.) While this already provides some ``coarse-grained'' information concerning the list-decodability/-recoverability of codes, it leaves many questions unanswered. 
\footnotetext{For $\ell = 1$, $ H_{q,1} $ reduces to the $q$-ary entropy function denoted by $ H_q $. }

For example, one can ask about the maximum rate of a $(p,3)_q$-list-decodable code. That is, what is the maximum rate of a code that never contains more than 2 points from a Hamming ball of radius $pn$? However, this question as stated appears to be quite difficult to solve: any improvement for the special case of $L=2$ and $q=2$ would require improving either on the Gilbert-Varshamov bound~\cite{gilbert1952comparison,varshamov1957estimate} (on the ``possibility'' side) or the linear programming bounds~\cite{mrrw1,mrrw2,delsarte-1973} (on the ``impossibility'' side). Unfortunately, despite decades of interest in this basic question hardly any asymptotic improvements on these bounds have been provided in the past fifty years. 

\paragraph{Zero-rate thresholds for list-decoding and -recovery} We therefore begin by targeting a more modest question: what is the maximum $p_* = p_*(q,\ell,L)$ such that for any $p<p_*$ there exist infinite families of $q$-ary $(p,\ell,L)_q$-list-recoverable codes of positive rate? That is, imagining the curve describing the achievable tradeoffs with the rate $R$ on the $y$-axis and decoding radius $p$ on the $x$-axis, instead of asking to describe this entire curve, we simply seek to determine the point where this curve crosses the $x$-axis (clearly, this curve is monotonically decreasing). 

Over the binary alphabet, setting $\ell=1$ and $L=2$ in this question we recover a famous result of Plotkin~\cite{plotkin-1960}: the maximum fraction of errors that can be uniquely-decoded by an infinite family of positive rate binary codes is $1/4$. Over general $q$-ary alphabets, this value is similarly known to be $\frac{q-1}{2q}$. The value of $p_*(2,1,L)$ has been computed by Blinovsky~\cite{blinovsky-1986-ls-lb-binary} for all $L$, and is known to be 
\begin{align*}
    p_*(2,1,L) = \frac{1}{2} - \frac{\binom{2k}{k}}{2^{2k+1}} \text{ if } L=2k \text{ or } L=2k+1 .
\end{align*}
While this expression is quite impenetrable at first glance, here is a natural probabilistic interpretation: given $x_1,\dots,x_L \in \{0,1\}$ let $\pl(x_1,\dots,x_L)$ denote the number of times the more popular bit appears.\footnote{We use $\pl$ to stand for ``plurality''. However, we caution that this function does not output the more popular symbol (as is perhaps more in line with the standard meaning of plurality), but the number of $i \in [L]$ for which $x_i$ equals the most popular symbol.} We then have 
\begin{align*}
    p_*(2,1,L) = 1 - \frac{1}{L}\exptover{(X_1,\cdots,X_L)\sim \bern(1/2)^{\ot L}}{\pl(X_1,\cdots,X_L)} ,
\end{align*}
where the notation $(X_1,\cdots,X_L)\sim \bern(1/2)^{\ot L}$ denotes that $L$ independent unbiased bits are sampled. 

It is then not difficult to conjecture the value for $p_*(q,\ell,L)$: if $\plur_\ell(x_1,\dots,x_L)$ denotes the top-$\ell$-plurality value of $x_1,\dots,x_L \in [q]$, i.e., $\plur_\ell(x_1,\dots,x_L) = \max_{\Sigma \subseteq [q]:|\Sigma|=\ell} |\{i \in [L]:x_i \in \Sigma\}|$, then it should be that 
\begin{align} \label{eq:intro-q-ary-plotkin}
    p_*(q,\ell,L) = 1 - \frac{1}{L}\exptover{(X_1,\cdots,X_L)\sim \unif([q])^{\ot L}}{\plur_\ell(X_1,\cdots,X_L)} .
\end{align}
For the case of $\ell=1$, i.e., $q$-ary list-decoding, a proof is claimed in~\cite{blinovsky-2005-ls-lb-qary,blinovsky-2008-ls-lb-qary-supplementary}; however, as we outline in \Cref{sec:discussion-blinovsky} this proof is flawed. In this work we provide a rigorous derivation of \Cref{eq:intro-q-ary-plotkin} for all values of $\ell$, $L$ and $q$ with $1\leq \ell \leq q$. 

More precisely, we obtain the following results:
\begin{itemize}
    \item A proof that $(p,\ell,L)_q$-list-recoverable $q$-ary codes with $p > p_*(q,\ell,L)$ have \emph{constant-size}, i.e., independent of $n$. This should be interpreted as a generalization of the Plotkin bound~\cite{plotkin-1960}, which states that binary codes uniquely-decodable from a $1/4+\eps$ fraction of errors have size at most $O(1/\eps)$. For this reason we call our result a ``Plotkin bound for list-recovery.''
    \item Adapting the Elias-Bassalygo argument~\cite{bassalygo-pit1965}, we subsequently derive upper bounds on the rate of $(p,\ell,L)_q$-list-recoverable $q$-ary codes when $p < p_*(q,\ell,L)$.
    \item To complement this, we show that there exist infinite families of positive rate $q$-ary codes that are $(p,\ell,L)_q$-list-recoverable whenever $p<p_*(q,\ell,L)$. We are therefore justified in calling $p_*(q,\ell,L)$ the zero-rate threshold for list-recovery. 
\end{itemize}
We now describe our techniques in more detail. 

%\todo[inline]{Add the right citation for the Elias-Bassalygo bound.}
\subsection{Our techniques}

\paragraph{Schur convexity of the function $f_{q,L,\ell}$} Following prior work~\cite{blinovsky-2005-ls-lb-qary},\footnote{In fact, \cite{blinovsky-2005-ls-lb-qary} only considers list-decoding, so a slight adaptation of this argument is required for list-recovery.} our task requires us to answer the following question. Consider the function on distributions $P$ over the alphabet $[q]$ defined as 
\begin{align*}
    f_{q,L,\ell}(P) &\coloneqq \exptover{(X_1,\cdots,X_L)\sim P^{\ot L}}{\plur_\ell(X_1,\cdots,X_L)} .
\end{align*}
Analogously to before, the notation $(X_1,\cdots,X_L)\sim P^{\ot L}$ means that $L$ independent samples are taken from the distribution $P$. A crucial ingredient for deriving the Plotkin bound is a demonstration that this function is minimized by the uniform distribution. 

There is a well-studied class of functions on finite distributions with the property that they are minimized by the uniform distribution: \emph{Schur convex} functions. These are the functions that are monotonically-increasing with respect to the \emph{majorization}-ordering, which compares vectors of real numbers by first sorting the vectors in descending order and then checking to see if all the prefix sums of one vector is greater than or equal to the prefix sums of the other. The important detail for us is that the uniform vector $(1/q,\dots,1/q) \in \mathbb R^q$, corresponding to the uniform distribution, is majorized by \emph{every} other vector corresponding to a distribution over $[q]$. 

To demonstrate the Schur convexity of this function, we use the Schur-Ostrowski criterion, which states that Schur-convexity is equivalent to the non-negativity of a certain expression involving partial derivatives. Showing that this expression is non-negative boils down to a combinatorial accounting game, where we can show that the positive contributions arising from certain terms exceed the negative contributions arising from others. 

\paragraph{Convexity of the univariate function $g_{q,L,\ell}$} Another important technical ingredient that we need for the proof of the Plotkin bound is the convexity of the univariate function
\[
    g_{q,L,\ell}(w) \coloneqq f_{q,L,\ell}(P_{q,\ell,w}),
\]
where the distribution $P_{q,\ell,w} = (p_1,\dots,p_q)$ is defined as 
\[
    p_i = \begin{cases}
        \frac{w}{q-\ell} & \text{if } i \leq q-\ell\\
        \frac{1-w}{\ell} & \text{if }i \geq q-\ell+1
    \end{cases} . 
\]
In order to show the function is convex, we prove the second derivative is non-negative. In differentiating, we use the expression for $g_{q,L,\ell}$ in terms of $f_{q,L,\ell}$ and apply the chain rule. Showing the resulting expression is positive is again a sort of combinatorial accounting game: we can show the positive terms contribute more than the negative terms. 

Quite interestingly, for $\ell = 1$ (i.e., the case relevant for list-decoding) we only prove the convexity of the function $f_{q,1,L}$ on the interval $[0,(q-1)/q]$. Fortunately, as we can also easily show that $g_{q,1,L}$ decreases on the interval $[0,(q-1)/q]$ and then increases on the interval $[(q-1)/q,1]$,\footnote{This is in fact an easy corollary of the Schur convexity of $f_{q,1,L}$.} convexity of $f_{q,1,L}$ on $[0,(q-1)/q]$ suffices for our purposes. And indeed, this is not an artifact of the proof: Blinovsky had already observed that convexity of $f_{q,1,L}$ does not hold on the entire interval $[0,1]$~\cite{blinovsky-2005-ls-lb-qary,blinovsky-2008-ls-lb-qary-supplementary}. However, for $\ell \geq 2$ we obtain that convexity of $f_{q,\ell,L}$ does indeed hold on the entire interval $[0,1]$. We note that the second derivative does behave qualitatively differently, so this is perhaps not too surprising in hindsight; we comment on this further in \Cref{rem:ell-2-convexity}. 

%\todo[inline]{Is it worth saying anything about the $\max_\ell$ function and how it plays a role? I lean towards no. I think the above is enough. -Nic}

\paragraph{Plotkin bound} Armed with these (Schur-)convexity results, we aim to prove a Plotkin bound for list-decoding/-recovery. That is, if a $q$-ary code is $(p,\ell,L)_q$-list-recoverable with $p \geq p_*(q,\ell,L)+\eps$, how large can the code be? Following the template of the standard argument (although certain subtleties arise when generalizing to list-recovery), we can show that such a code must be of constant size, i.e., independent of $n$. 

Informally, the argument begins with a ``preprocessing step'' that prunes away some (but, crucially, not too many) codewords and yields a more structured subcode that we can subsequently analyze. We start with a code that is almost constant weight (if necessary, we may pass to a subcode of codewords whose weights lie in a narrow interval), and then pass to an almost ``equi-coupled'' subcode $\cC'$, which is proven to exist via Ramsey-theoretic results. Informally, equi-coupled means that every $L$-subset has roughly the same type, i.e., roughly the same empirical distribution of length $L$-vectors if we imagine creating an $L \times n$ matrix from the $L$ codewords in the subset and then randomly sampling a column. One can then argue that either this type is symmetric or the obtained subcode is already very small (and we are done). 

To analyze this subcode $\cC'$ (which we may now assume to be symmetric) we apply a double-counting argument to the average radius to cover $L$-subsets (where for list-recoverability, this radius is measured via the distance to a tuple of input lists). The lower bound on this quantity follows quite naturally from the list-decodability/-recoverability of the code, together with the equi-coupled property of the subcode. For the upper bound, we compute the radius of an $L$-subset in terms of the empirical distribution of a coordinate $k \in [n]$, i.e., each $x \in [q]$ is assigned probability mass $P_k(x) = \frac{1}{M}\sum_{\vec x \in\cC'} \indicator{x_k=x}$. By the Schur convexity of the function $f_{q,L,\ell}$ and the convexity of the univariate function $g_{q,L,\ell}$, we can bound this in terms of a distribution placing total mass $w \leq \frac{q-\ell}{q}$ on the last $\ell$ elements of $[q]$ and mass $\frac{1-w}{q-\ell}$ on each of the others. The result then follows.

We remark that, due to our use of Ramsey-theoretic arguments, the precise bound we obtain on the code size is quite poor. We have made no effort to optimize this constant. However, we do believe it would be interesting to improve this bound; we discuss this further in \Cref{sec:open-q}. 

\paragraph{Elias-Bassalygo-style bound} After deriving this Plotkin bound, a well-known argument template (typically attributed to Elias and Bassalygo~\cite{bassalygo-pit1965}) allows one to derive more general tradeoffs between the rate $R$ and the noise-resilience parameters $(p,\ell,L)_q$. Informally, this proceeds by covering the space $[q]^n$ by a bounded number of list-recovery balls. 
The radius of these balls is carefully chosen to allow one to apply the Plotkin bound to the subcodes obtained by taking the intersection of the code with these balls.
On the other hand, the number of list-recovery balls needed to cover $[q]^n$, known as the covering number, can be sharply estimated. 
From the above two bounds (the Plotkin bound and the covering number), a bound on the size of the whole code can be derived. 

\paragraph{Possibility result: random code with expurgation} To complement the Plotkin bound, we show that if the decoding radius $p$ is less than $p_*(q,\ell,L)$ then there exist infinite families of $(p,\ell,L)_q$-list-recoverable $q$-ary codes. This justifies our ``zero-rate threshold'' terminology for $p_*(q,\ell,L)$. The argument is completely standard, obtained by sampling a random code and subsequently expurgating.
In fact, the lower bound on achievable rate is derived from the exact large deviation exponent of a certain quantity known as the average radius (cf.\ \Cref{def:avg-rad-list-rec}) of a tuple of random vectors.
Therefore the bound holds under a stronger notion called average-radius list-recovery: namely, for any subset of $L$ codewords $\vec x_1,\dots,\vec x_j$ and any tuple of input lists $(\mathcal{Y}_1,\dots,\mathcal{Y}_n)$, we have 
\[
    \sum_{j=1}^L |\{i\in[n]:x_{j,i} \notin \mathcal{Y}_i\}| > Lpn \ .
\]

\subsection{Discussion on related work}
\label{sec:related-work}

\paragraph{Lower bounds for small $q$ and/or $L$}
For the case of $ (p,3)_2 $-list-decoding, it was shown in \cite[Theorem 6.1]{gmrsw-2020-sharp-threshold} that the \emph{threshold rate}\footnote{We warn the reader not not to confuse this concept with that of the zero-rate threshold.} of random binary \emph{linear} codes equals
\begin{align}
    \frac{1}{2}(2 - H_2(3p) - 3p\log_2(3)) . \label{eqn:gmrsw-lin-thr}
\end{align}
The term \emph{threshold} refers to the critical rate below which a random binary linear code is $ (p,3)_2 $-list-decodable with high probability and above which it is not with high probability. 
This result was recently extended to the following two cases \cite{resch-yuan-arxiv}. 
For $ (p,4)_2 $-list-decoding, the threshold rate of random binary linear code is lower bounded by \cite[Theorem 1.3]{resch-yuan-arxiv}
\begin{align}
    \frac{1}{3} \min_{\substack{x_1,x_2\ge0 \\ x_1 + 2x_2 \le 4p \\ x_1 + x_2 \le 1}} {3 - \eta_2(x_1,x_2) - 2x_1 - x_2\log_2(3)} . \label{eqn:ry-1} 
\end{align}
Here we use the notation 
\begin{align}
    \eta_q(x_1,\dots,x_t) &\coloneqq \sum_{i=1}^t x_i \log_q \frac{1}{x_i} + \paren{1 - \sum_{i = 1}^t x_i} \log_q\frac{1}{1 - \sum_{i = 1}^t x_i} \notag 
\end{align}
for a partial probability vector $ (x_1,\dots,x_t)\in\bbR_{\ge0}^t $ satisfying $ t\le q $ and $ x_1 + \dots + x_t\le1 $. 
Note that $ \eta_2(x) = H_2(x) $, however, this is no longer the case for $q>2$. 
Moreover, for $ (p,3)_q $-list-decoding, \cite[Theorem 1.5]{resch-yuan-arxiv} showed that the threshold rate of random linear code is at least
\begin{align}
    \frac{1}{2} \min_{\substack{x_1,x_2\ge0 \\ x_1 + 2x_2 \le 3p \\ x_1 + x_2 \le 1}} {2 - \eta_q(x_1,x_2) - x_1\log_q(3(q-1)) - x_2\log_q(q-1)(q-2)} . \label{eqn:ry-2} 
\end{align}
Our general lower bound (cf.\ \Cref{thm:lb}) for list-recovery (numerically) matches \Cref{eqn:gmrsw-lin-thr,eqn:ry-1,eqn:ry-2} upon particularizing the parameters $ q,\ell,L $ suitably. 
See \Cref{fig:gmrsw,fig:ry1,fig:ry2}. 
It is possible to analytically prove this observation, though we do not pursue it in the current paper. 
The rationale underlying this phenomenon is that the threshold rate of random linear codes for list-recovery is expected to match the rate achieved by random codes with expurgation (with the notable exception of zero-error list-recovery~\cite{glmrsw-2020-bounds-list-dec-rand-lin}). 
This conjecture, in its full generality, remains unproved, although it is partially justified in several recent works \cite{mosheiffetal-2019-ldpc,glmrsw-2020-bounds-list-dec-rand-lin,gmrsw-2020-sharp-threshold,resch-yuan-arxiv}. 

\paragraph{Hash codes}
One may note that for $\ell\ge2$, our upper and lower bounds typically exhibit a large gap even at $ p = 0 $. 
See \Cref{fig:this-hash1,fig:this-hash2,fig:this-bk-hash,fig:this-listrec}. 
We provide evidence below indicating that closing this gap is in general a rather challenging task and necessarily requires significantly new ideas. 
Let us focus on the vertical axis $p=0$, known as \emph{zero-error list-recovery}. 
We observe that some configurations of $q,\ell,L$ in this regime encode several longstanding open questions in combinatorics. 
Indeed, consider $\ell = q-1,L = q$. 
The $(0,q-1,q)_q$-list recoverability condition can then be written as: for any $ \cY_1,\cdots,\cY_n\in\binom{[q]}{q-1} $,
\begin{align}
    \card{\curbrkt{\vx\in\cC : \card{\curbrkt{j\in[n] : x_j\notin\cY_j}} = 0}} &\le L-1 , \notag 
\end{align}
i.e., 
\begin{align}
    \card{\curbrkt{\vx\in\cC : \forall j\in[n], \, x_j\in\cY_j}} &\le L-1 . \notag 
\end{align}
Taking the contrapositive, we note that this condition is further equivalent to: for any $ \{\vx_1,\cdots,\vx_L\}\in\binom{\cC}{L} $, there exists $j\in[n]$ such that 
$ \card{\curbrkt{x_{1,j},\cdots,x_{L,j}}} = q $. 
In words, for any $ q $-tuple of codewords in a $(0,q-1,q)_q$-list-recoverable code, there must exist one coordinate such that the corresponding $q$-ary symbols in the tuple are all distinct. 
Such a code is also known as a \emph{$q$-hashing} in combinatorics. 
It is well-known \cite{FK,K86} that a probabilistic construction yields such codes of rate\footnotemark{} at least 
\begin{align}
    C_{(0,q-1,q)_q} &\ge \frac{1}{q-1}\log_q\frac{1}{1 - \frac{q!}{q^q}} . \label{eqn:fredman-komlos-lower} 
\end{align}
\footnotetext{The bounds in \cite{KM,FK} are slightly adjusted so that they are consistent with our definition of code rate which adopts a $ \log_q $ normalization (cf.\ \Cref{def:cap}).}
In the same paper \cite{FK} also proved an upper bound 
\begin{align}
    C_{(0,q-1,q)_q} &\le \frac{q!}{q^{q-1}}\log_q(2) . \label{eqn:fredman-komlos-upper}
\end{align}
Another upper bound 
\begin{align}
    C_{(0,q-1,q)_q} &\le \log_q\frac{q}{q - 1}  \label{eqn:korner-marton-upper}
\end{align}
can be proved using either a double-counting argument (a.k.a.\ first moment method), or (hyper)graph entropy \cite{K86,KM,korner-graph-ent}. 
\Cref{eqn:fredman-komlos-upper} is much better than \Cref{eqn:korner-marton-upper} for $q\ge4$. 
However, the latter bound $\log_3\frac{3}{2}$ remains the best known for $q = 3$ (called the \emph{trifference problem} by \korner). 
For larger $q$, both lower \cite{XY21} and upper bounds \cite{Arikan,DGR17,GR,CD20,dalai-bk-hashing-2022} can be improved. 
However, improving the bound for $q=3$ is recognized as a formidable challenge. 
We will show in \Cref{rk:lb-lr-hashing} in \Cref{sec:list-rec-lb} that our lower bound for list-recovery (cf.\ \Cref{thm:lb}) recovers \Cref{eqn:fredman-komlos-lower} for $q$-hashing upon setting $ \ell = q-1,L=q $. 
Furthermore, our upper bound \Cref{thm:eb-list-rec-main} recovers \Cref{eqn:korner-marton-upper} for $q$-hashing (cf.\ \Cref{rk:eb-hashing}). 

A generalization of $q$-hashing known as \emph{$(q,L)$-hashing} ($q\ge L$) can also be cast as zero-error list-recoverable codes with more general values of $\ell,L$. 
Indeed, taking $ L = \ell + 1 $ and $ \ell\le q-1 $, we can write $ (0,\ell,\ell+1)_q $-list-recoverability alternatively as: for any $ \{\vx_1,\cdots,\vx_{\ell+1}\}\in\binom{\cC}{\ell+1} $, there exists $ j\in[n] $ such that $ \card{\curbrkt{x_{1,j},\cdots,x_{\ell+1,j}}} = \ell+1 $. 
This is in turn the precise definition of $(q,\ell+1)$-hashing. 
It can be immediately seen that $(q,q)$-hashing is nothing but $q$-hashing. 
The upper and lower bounds in \cite{FK} also extend to $(q,\ell+1)$-hashing and read as follows:
\begin{align}
    \frac{1}{\ell}\log_q\frac{1}{1 - \frac{\binom{q}{\ell+1} (\ell+1)!}{q^{\ell+1}}} \le 
    C_{(0,\ell,\ell+1)_q} &\le \frac{\binom{q}{\ell}\ell!}{q^{\ell}} \log_q(q - \ell + 1) . \label{eqn:fredman-komlos-bk}
\end{align}
Our lower bound for list-recovery in \Cref{thm:lb} also recovers the above lower bound for $(q,\ell+1)$-hashing by \cite{FK} upon setting $L = \ell+1$ (see \Cref{rk:list-rec-lb-recover}). 
The upper bound was later improved in \cite{KM} for $q>L$ using the notion of hypergraph entropy:
\begin{align}
    C_{(0,\ell,\ell+1)_q} &\le \min_{0\le j\le \ell-1} \frac{\binom{q}{j+1}(j+1)!}{q^{j+1}} \log_q\frac{q-j}{\ell-j} , \label{eqn:korner-marton-upper-bk}
\end{align}
though it coincides with \Cref{eqn:fredman-komlos-upper} when $\ell = q-1$. 
Some improved upper bounds in \cite{GR,dalai-bk-hashing-2022} apply to $(q,\ell+1)$-hashing as well. 
To the best of our knowledge, no improvement on lower bounds is known for $ \ell<q-1 $. 

% As for list recovery setting, a $(0, \ell, \ell+1)_q$-list recoverable code $\cC\subseteq [q]^n$ is in fact a $(\ell+1, q)$-hash code and vice versa. To see this, let us first introduce the $(b,k)$-hash code. A code $\cC$ over $[b]$ is called a $(b,k)$-hash code if for any $k$-codeword subset $\{\vc_1,\ldots,\vc_k\}\in \binom{\cC}{k}$, there exists an index $i\in [n]$ such that the $i$-th component of  $\vc_1,\ldots,\vc_k$ are all distinct. The problem of determining the optimal rate of $(b,k)$-hash code turns out to be a very challenging task. A well-known probabilistic argument shows an existence lower bound $\frac1{k-1}\log_2\left(\frac1{1-\binom{b}{k}k!/b^k}\right)$ \cite{FK,K86}.  This is still the best-known lower bound till now except for the case $b=k=3$  for which K\"{o}rner and Marton \cite{KM} found that the concatenation technique could lead to a $(3,3)$-hash code beating this the probabilistic lower bound. Recently, Xing and Yuan \cite{XY21} further showed that such probabilistic lower bound can be improved for almost all $(k,k)$-hash codes. For $b>k$, the probabilistic lower bound is still the best. 
% There are a few progresses in the direction of the upper bound \cite{Arikan,DGR17,GR,CD20,dalai-bk-hashing-2022}. However, there is still a big gap to be closed. 
% This implies in general, it is very challenging to pinning down the achievable rate of  $(w, \ell, \ell+1)_q$-list recoverable code. 

% !TEX root = ./list-dec-qary-and-list-rec.tex

% ======================================================================

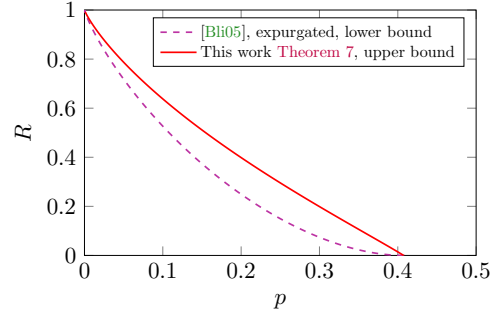
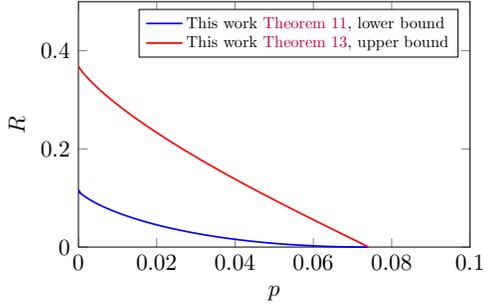
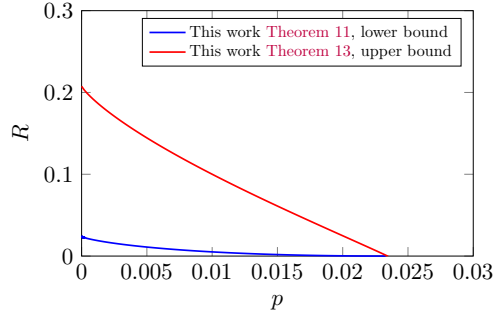
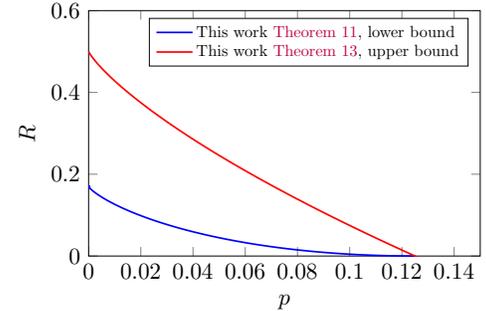
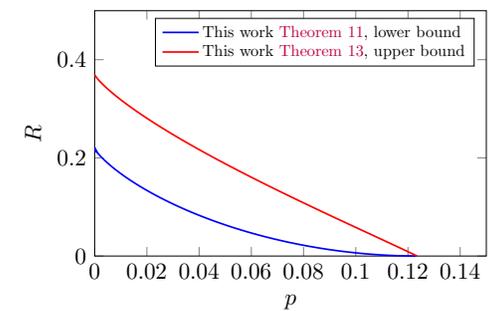
\begin{figure}[htbp]
	\centering
	\begin{subfigure}[b]{0.49\linewidth}
	\centering
	\begin{tikzpicture}[thick,scale=0.8]
	\pgfplotsset{compat = 1.3}
	\begin{axis}[
		legend style={nodes={scale=0.7, transform shape}},
		legend cell align={left},
		width = \columnwidth,
		height = 0.7\columnwidth,
		xlabel = {$p$},
		xlabel style={nodes={scale=0.8, transform shape}},
		ylabel = {$R$},
		ylabel style={nodes={scale=0.8, transform shape}},
		% ylabel near ticks,
		xmin = 0,
		xmax = 0.3,
		ymin = 0.0,
		ymax = 1.0,
		legend pos = north east,
		y tick label style={
	        /pgf/number format/.cd,
	            fixed,
	            fixed,
	            precision=2,
	        /tikz/.cd
	    },
	    x tick label style={
	        /pgf/number format/.cd,
	            fixed,
	            fixed,
	            precision=2,
	        /tikz/.cd
	    }
	]

	\addplot[color=cyan, mark=none, thick] table {code-and-data/lb_rate_gmrsw_bin_2.txt};
	\addlegendentry{\cite{gmrsw-2020-sharp-threshold}, linear, threshold};

	\addplot[color=byzantine, mark=none, thick, dashed] table {code-and-data/lb_rate_bli_bin_even_2.txt};
	\addlegendentry{\cite{blinovsky-1986-ls-lb-binary}, expurgated, lower bound};

	\addplot[color=orange, mark=none, thick, dashed] table {code-and-data/ub_rate_bli_bin_2.txt};
	\addlegendentry{\cite{blinovsky-1986-ls-lb-binary}, upper bound};

	% \addplot[color=red, mark=none, thick, dashed] table {code-and-data/ub_rate_bli_q2_2.txt};
	% \addlegendentry{\cite{blinovsky-2005-ls-lb-qary} $ q=2,L-1=2 $, upper bound};

	\end{axis}
	\end{tikzpicture}
    \caption{$ q=2,\ell=1,L=3 $ (list-decoding).}
	\label{fig:gmrsw}
	\end{subfigure}
	%%%%%%%%%%%%%%%%%%%%%%%%%%%%%%%%%%%%%%%%%%%%%%%%%%%%%%%%%%%%%%%%%%%%%%%%%%%%%%%%%%%%%%%%%%%%%%%%%%%%
	\begin{subfigure}[b]{0.49\linewidth}
    \centering
    \begin{tikzpicture}[thick,scale=0.8]
	\pgfplotsset{compat = 1.3}
	\begin{axis}[
		legend style={nodes={scale=0.7, transform shape}},
		legend cell align={left},
		width = \columnwidth,
		height = 0.7\columnwidth,
		xlabel = {$p$},
		xlabel style={nodes={scale=0.8, transform shape}},
		ylabel = {$R$},
		ylabel style={nodes={scale=0.8, transform shape}},
		% ylabel near ticks,
		xmin = 0,
		xmax = 0.4,
		ymin = 0.0,
		ymax = 1.0,
		legend pos = north east,
		y tick label style={
	        /pgf/number format/.cd,
	            fixed,
	            fixed,
	            precision=2,
	        /tikz/.cd
	    },
	    x tick label style={
	        /pgf/number format/.cd,
	            fixed,
	            fixed,
	            precision=2,
	        /tikz/.cd
	    }
	]

	\addplot[color=green, mark=none, thick] table {code-and-data/lb_rate_ry_bin_3.txt};
	\addlegendentry{\cite{resch-yuan-arxiv}, linear, lower bound};

	\addplot[color=byzantine, mark=none, thick, dashed] table {code-and-data/lb_rate_bli_bin_odd_3.txt};
	\addlegendentry{\cite{blinovsky-1986-ls-lb-binary}, expurgated, lower bound};

	\addplot[color=orange, mark=none, thick, dashed] table {code-and-data/ub_rate_bli_bin_3.txt};
	\addlegendentry{\cite{blinovsky-1986-ls-lb-binary}, upper bound};

	% \addplot[color=red, mark=none, thick, dashed] table {code-and-data/ub_rate_bli_q2_3.txt};
	% \addlegendentry{\cite{blinovsky-2005-ls-lb-qary} $ q=2,L-1=3 $, upper bound};

	\end{axis}
	\end{tikzpicture}
	\caption{$ q=2,\ell=1,L=4 $ (list-decoding).}
	\label{fig:ry1}
    \end{subfigure}
    %%%%%%%%%%%%%%%%%%%%%%%%%%%%%%%%%%%%%%%%%%%%%%%%%%%%%%%%%%%%%%%%%%%%%%%%%%%%%%%%%%%%%%%%%%%%%%%%%%%%
    \begin{subfigure}[b]{0.49\linewidth}
    \centering
    \begin{tikzpicture}[thick,scale=0.8]
	\pgfplotsset{compat = 1.3}
	\begin{axis}[
		legend style={nodes={scale=0.7, transform shape}},
		legend cell align={left},
		width = \columnwidth,
		height = 0.7\columnwidth,
		xlabel = {$p$},
		xlabel style={nodes={scale=0.8, transform shape}},
		ylabel = {$R$},
		ylabel style={nodes={scale=0.8, transform shape}},
		% ylabel near ticks,
		xmin = 0,
		xmax = 0.4,
		ymin = 0.0,
		ymax = 1.0,
		legend pos = north east,
		y tick label style={
	        /pgf/number format/.cd,
	            fixed,
	            fixed,
	            precision=2,
	        /tikz/.cd
	    },
	    x tick label style={
	        /pgf/number format/.cd,
	            fixed,
	            fixed,
	            precision=2,
	        /tikz/.cd
	    }
	]

	\addplot[color=green, mark=none, thick] table {code-and-data/lb_rate_ry_q_2.txt};
	\addlegendentry{\cite{resch-yuan-arxiv}, linear, lower bound}; 

	\addplot[color=byzantine, mark=none, thick, dashed] table {code-and-data/lb_rate_bli_q3_2.txt};
	\addlegendentry{\cite{blinovsky-2005-ls-lb-qary}, expurgated, lower bound};

	% \addplot[color=black, mark=none, thick, dashed] table {code-and-data/lb_rate_list_rec_q3_ls2_ell1.txt}; 
	% \addlegendentry{This work $ q=3,L-1=2 $}; % sanity check

	\addplot[color=orange, mark=none, thick, dashed] table {code-and-data/ub_rate_bli_q3_2.txt}; 
	\addlegendentry{\cite{blinovsky-2005-ls-lb-qary}, upper bound};

	\end{axis}
	\end{tikzpicture}
	\caption{$ q=3,\ell=1,L=3 $ (list-decoding).}
	\label{fig:ry2}
    \end{subfigure}
    %%%%%%%%%%%%%%%%%%%%%%%%%%%%%%%%%%%%%%%%%%%%%%%%%%%%%%%%%%%%%%%%%%%%%%%%%%%%%%%%%%%%%%%%%%%%%%%%%%%%
    \begin{subfigure}[b]{0.49\linewidth}
    \centering
    \begin{tikzpicture}[thick,scale=0.8]
	\pgfplotsset{compat = 1.3}
	\begin{axis}[
		legend style={nodes={scale=0.7, transform shape}},
		legend cell align={left},
		width = \columnwidth,
		height = 0.7\columnwidth,
		xlabel = {$p$},
		xlabel style={nodes={scale=0.8, transform shape}},
		ylabel = {$R$},
		ylabel style={nodes={scale=0.8, transform shape}},
		% ylabel near ticks,
		xmin = 0,
		xmax = 0.5,
		ymin = 0.0,
		ymax = 1.0,
		legend pos = north east,
		y tick label style={
	        /pgf/number format/.cd,
	            fixed,
	            fixed,
	            precision=2,
	        /tikz/.cd
	    },
	    x tick label style={
	        /pgf/number format/.cd,
	            fixed,
	            fixed,
	            precision=2,
	        /tikz/.cd
	    }
	]

	\addplot[color=byzantine, mark=none, thick, dashed] table {code-and-data/lb_rate_bli_q3_3.txt};
	\addlegendentry{\cite{blinovsky-2005-ls-lb-qary}, expurgated, lower bound};

	\addplot[color=red, mark=none, thick] table {code-and-data/ub_rate_bli_q3_3.txt}; 
	\addlegendentry{This work \Cref{thm:eb-qary-list-dec-main}, upper bound}; 

	\end{axis}
	\end{tikzpicture}
	\caption{$ q=3,\ell=1,L=4 $ (list-decoding).}
	\label{fig:bli-qary}
    \end{subfigure}
    %%%%%%%%%%%%%%%%%%%%%%%%%%%%%%%%%%%%%%%%%%%%%%%%%%%%%%%%%%%%%%%%%%%%%%%%%%%%%%%%%%%%%%%%%%%%%%%%%%%%
    \begin{subfigure}[b]{0.49\linewidth}
    \centering
    \begin{tikzpicture}[thick,scale=0.8]
	\pgfplotsset{compat = 1.3}
	\begin{axis}[
		legend style={nodes={scale=0.7, transform shape}},
		legend cell align={left},
		width = \columnwidth,
		height = 0.7\columnwidth,
		xlabel = {$p$},
		xlabel style={nodes={scale=0.8, transform shape}},
		ylabel = {$R$},
		ylabel style={nodes={scale=0.8, transform shape}},
		% ylabel near ticks,
		xmin = 0,
		xmax = 0.1,
		ymin = 0.0,
		ymax = 0.5,
		legend pos = north east,
		y tick label style={
	        /pgf/number format/.cd,
	            fixed,
	            fixed,
	            precision=2,
	        /tikz/.cd
	    },
	    x tick label style={
	        /pgf/number format/.cd,
	            fixed,
	            fixed,
	            precision=2,
	        /tikz/.cd
	    }
	]

	\addplot[color=blue, mark=none, thick] table {code-and-data/lb_rate_list_rec_q3_ls2_ell2.txt}; 
	\addlegendentry{This work \Cref{thm:lb-listrec}, lower bound}; 

	\addplot[color=red, mark=none, thick] table {code-and-data/ub_rate_list_rec_q3_ls2_ell2.txt};
	\addlegendentry{This work \Cref{thm:eb-list-rec-main}, upper bound};

	\end{axis}
	\end{tikzpicture}
	\caption{$ q=3,\ell=2,L=3 $ (list-recovery, $3$-hashing).}
	\label{fig:this-hash1}
    \end{subfigure}
    %%%%%%%%%%%%%%%%%%%%%%%%%%%%%%%%%%%%%%%%%%%%%%%%%%%%%%%%%%%%%%%%%%%%%%%%%%%%%%%%%%%%%%%%%%%%%%%%%%%%
    \begin{subfigure}[b]{0.49\linewidth}
    \centering
    \begin{tikzpicture}[thick,scale=0.8]
	\pgfplotsset{compat = 1.3}
	\begin{axis}[
		legend style={nodes={scale=0.7, transform shape}},
		legend cell align={left},
		width = \columnwidth,
		height = 0.7\columnwidth,
		xlabel = {$p$},
		xlabel style={nodes={scale=0.8, transform shape}},
		ylabel = {$R$},
		ylabel style={nodes={scale=0.8, transform shape}},
		% ylabel near ticks,
		xmin = 0,
		xmax = 0.03,
		ymin = 0.0,
		ymax = 0.3,
		legend pos = north east,
		y tick label style={
	        /pgf/number format/.cd,
	            fixed,
	            fixed,
	            precision=2,
	        /tikz/.cd
	    },
	    x tick label style={
	        /pgf/number format/.cd,
	            fixed,
	            fixed,
	            precision=3,
	        /tikz/.cd
	    },
	    scaled x ticks = false
	]

	\addplot[color=blue, mark=none, thick] table {code-and-data/lb_rate_list_rec_q4_ls3_ell3.txt}; 
	\addlegendentry{This work \Cref{thm:lb-listrec}, lower bound}; 

	\addplot[color=red, mark=none, thick] table {code-and-data/ub_rate_list_rec_q4_ls3_ell3.txt};
	\addlegendentry{This work \Cref{thm:eb-list-rec-main}, upper bound};

	\end{axis}
	\end{tikzpicture}
	\caption{$ q=4,\ell=3,L=4 $ (list-recovery, $4$-hashing).}
	\label{fig:this-hash2}
    \end{subfigure}
    %%%%%%%%%%%%%%%%%%%%%%%%%%%%%%%%%%%%%%%%%%%%%%%%%%%%%%%%%%%%%%%%%%%%%%%%%%%%%%%%%%%%%%%%%%%%%%%%%%%%
    \begin{subfigure}[b]{0.49\linewidth}
    \centering
    \begin{tikzpicture}[thick,scale=0.8]
	\pgfplotsset{compat = 1.3}
	\begin{axis}[
		legend style={nodes={scale=0.7, transform shape}},
		legend cell align={left},
		width = \columnwidth,
		height = 0.7\columnwidth,
		xlabel = {$p$},
		xlabel style={nodes={scale=0.8, transform shape}},
		ylabel = {$R$},
		ylabel style={nodes={scale=0.8, transform shape}},
		% ylabel near ticks,
		xmin = 0,
		xmax = 0.15,
		ymin = 0.0,
		ymax = 0.6,
		legend pos = north east,
		y tick label style={
	        /pgf/number format/.cd,
	            fixed,
	            fixed,
	            precision=2,
	        /tikz/.cd
	    },
	    x tick label style={
	        /pgf/number format/.cd,
	            fixed,
	            fixed,
	            precision=2,
	        /tikz/.cd
	    }
	]

	\addplot[color=blue, mark=none, thick] table {code-and-data/lb_rate_list_rec_q4_ls2_ell2.txt}; 
	\addlegendentry{This work \Cref{thm:lb-listrec}, lower bound}; 

	\addplot[color=red, mark=none, thick] table {code-and-data/ub_rate_list_rec_q4_ls2_ell2.txt};
	\addlegendentry{This work \Cref{thm:eb-list-rec-main}, upper bound};

	\end{axis}
	\end{tikzpicture}
	\caption{$ q=4,\ell=2,L=3 $ (list-recovery, $(4,3)$-hashing).}
	\label{fig:this-bk-hash}
    \end{subfigure}
    %%%%%%%%%%%%%%%%%%%%%%%%%%%%%%%%%%%%%%%%%%%%%%%%%%%%%%%%%%%%%%%%%%%%%%%%%%%%%%%%%%%%%%%%%%%%%%%%%%%%
    \begin{subfigure}[b]{0.49\linewidth}
    \centering
    \begin{tikzpicture}[thick,scale=0.8]
	\pgfplotsset{compat = 1.3}
	\begin{axis}[
		legend style={nodes={scale=0.7, transform shape}},
		legend cell align={left},
		width = \columnwidth,
		height = 0.7\columnwidth,
		xlabel = {$p$},
		xlabel style={nodes={scale=0.8, transform shape}},
		ylabel = {$R$},
		ylabel style={nodes={scale=0.8, transform shape}},
		% ylabel near ticks,
		xmin = 0,
		xmax = 0.15,
		ymin = 0.0,
		ymax = 0.5,
		legend pos = north east,
		y tick label style={
	        /pgf/number format/.cd,
	            fixed,
	            fixed,
	            precision=2,
	        /tikz/.cd
	    },
	    x tick label style={
	        /pgf/number format/.cd,
	            fixed,
	            fixed,
	            precision=2,
	        /tikz/.cd
	    }
	]

	\addplot[color=blue, mark=none, thick] table {code-and-data/lb_rate_list_rec_q3_ls4_ell2.txt}; 
	\addlegendentry{This work \Cref{thm:lb-listrec}, lower bound}; 

	\addplot[color=red, mark=none, thick] table {code-and-data/ub_rate_list_rec_q3_ls4_ell2.txt};
	\addlegendentry{This work \Cref{thm:eb-list-rec-main}, upper bound};

	\end{axis}
	\end{tikzpicture}
	\caption{$ q=3,\ell=2,L=5 $ (list-recovery).}
	\label{fig:this-listrec}
    \end{subfigure}
	
	\caption{Plots of upper and lower bounds in \cite{blinovsky-1986-ls-lb-binary,blinovsky-2005-ls-lb-qary,gmrsw-2020-sharp-threshold,resch-yuan-arxiv} and this work for various values of $q\ge2,1\le\ell\le q-1,L\ge2$. }
	\label{fig:all}
\end{figure}

\paragraph{Zero-rate thresholds for general adversarial channels}
The problem of locating the zero-rate threshold has been addressed in a much more general context \cite{zhang-2019-list-dec-general}. 
The results in \cite{zhang-2019-list-dec-general} on \emph{general adversarial channel} model can be specialized to the list-recovery setting and read as follows. 
Given $ q,p,\ell,L $, define the \emph{confusability set} $\cK_{(p,\ell,L)_q}$ as the set of types\footnotemark{} (cf.\ \Cref{def:type}) of all ``confusable'' $L$-tuple of codewords in the sense that they can fit into a certain list-recovery ball (cf.\ \Cref{def:metric-list-rec}) of radius $np$. 
\footnotetext{More precisely, the confusability set is the \emph{closure} of the set of types of all confusable codeword tuples, since types are dense in distributions.}
Specifically, 
\begin{align}
    \cK_{(p,\ell,L)_q} &\coloneqq \curbrkt{\sum_{\cY\in\binom{[q]}{\ell}} P_{X_1,\cdots,X_L,Y = \cY}\in\Delta\paren{[q]^L} : 
    \begin{array}{l}
        P_{X_1,\cdots,X_L,Y} \in \Delta\paren{[q]^L\times\binom{[q]}{\ell}} \\
        \forall i\in[L] , 
        \displaystyle\sum_{\substack{(x,\cY)\in[q]\times\binom{[q]}{\ell} \\ x\notin\cY}} P_{X_i, Y}(x,\cY) \le p
    \end{array}} . \notag 
\end{align}
In the above definition, we use the notation $ \sum_{b} P_{A,B = b} $ to denote the marginalization of $ P_{A,B} $ onto the first variable $A$, and use $ P_{X_i,Y} $ to denote the marginal of $ P_{X_1,\cdots,X_i,Y} $ on $ (X_i,Y) $. 
It is not hard to verify that the confusability set is $(i)$ ``increasing'' in $p$ in the sense that $ \cK_{(p,\ell,L)_q}\subset\cK_{(p',\ell,L)_q} $ if $ p\le p' $, and $(ii)$ convex. 
Define also the convex cone of \emph{completely positive (CP) tensors} of order $L$, i.e., tensors that can be written as a sum of \emph{element-wise non-negative} rank-one tensors:
\begin{align}
    \cp_q^{\ot L} &\coloneqq \curbrkt{\sum_{i = 1}^k \vp_i^{\ot L} \in (\bbR_{\ge0}^q)^L : k\in\bbZ_{\ge1}, (\vp_1,\cdots,\vp_k)\in(\bbR_{\ge0}^q)^k} . \notag 
\end{align}
It is proved in \cite{zhang-2019-list-dec-general} that the zero-rate threshold $ p_*(q,\ell,L) $ can be expressed as the smallest $p$ such that all completely positive distributions are confusable:
\begin{align}
    p_*(q,\ell,L) &= \inf\curbrkt{p\in[0,1] : \cp_q^{\ot L}\cap\Delta([q]^L)\subset\cK_{(p,\ell,L)_q}} . \label{eqn:cp-char}
\end{align}
The above characterization is \emph{single-letter} in the sense that it is independent of the blocklength $n$. 
For $ q,\ell,L $ independent of $n$ (which is assumed to be the case in the current paper), the optimization problem on the RHS of \Cref{eqn:cp-char} can be solved in constant time. 
However, it does not immediately provide an explicit formula of $ p_*(q,\ell,L) $ and analytically solving the optimization problem does not appear easy to the authors. 
On the other hand, the characterization $p_*(q,\ell,L) = 1 - \frac{1}{L}\expt{\plur_\ell(X_1,\cdots,X_L)}$ (where the expectation is over $(X_1,\cdots,X_L)\sim\unif([q])^{\ot L}$, cf.\ \Cref{eq:intro-q-ary-plotkin}) in this paper can be seen as the explicit solution to the optimization problem, though the way it is obtained is \emph{not} by solving the latter problem per se. 
Instead, we prove the characterization from the first principle by leveraging specific structures of list-recovery. 
We hope that our characterization can shed light on the geometry of the high-dimensional polytopes -- the confusability set and the set of CP distributions -- involved in the characterization in \Cref{eqn:cp-char}. 

\paragraph{Organization} 
We begin with the necessary preliminaries in \Cref{sec:prelim}, and subsequently state our main results in \Cref{sec:main-results}. 
% \Cref{sec:related-work} contains additional discussion of related work. 
\Cref{sec:schu-convex-fql} through~\Cref{sec:eb-list-rec-pf} contain proofs of our main technical results. \Cref{sec:list-rec-lb} derives a lower bound on list-recovery capacity via a random coding argument. We summarize our results and state open problems in \Cref{sec:open-q} and provide acknowledgements in \Cref{sec:ack}. 

\section{Preliminaries}
\label{sec:prelim}

\subsection{Notation}
\label{sec:notation}
We use $ \Delta(\Sigma) $ to denote the probability simplex over a finite set $ \Sigma $, whose elements are typically denoted by $P$, perhaps with subscripts. In other cases, 
vectors are denoted by \textbf{boldface} letters, e.g., $ \va $, and the $i$-th entry of $ \va $ is denoted by $ a_i $. 
The symmetric group of order $n$ is denoted by $ S_n $. 
For a finite set $ \Sigma $ and an integer $ 0\le k\le|\Sigma| $, denote by
\begin{align}
\binom{\Sigma}{k} &\coloneqq \curbrkt{\Pi\subset\Sigma:|\Pi| = k} \notag 
\end{align}
the collection of all size-$k$ subsets of $ \Sigma $. 
We use $ \exp(x) $ and $ \exp_b(x) $ (where $b\in\bbR$) to denote $ e^x $ and $ b^x $, respectively. 
For a positive integer $k$, we use $[k]$ to denote the set $\{1,2,\ldots,k\}$.

\subsection{Basic definitions}
\label{sec:basic-def}

Throughout the paper, $ q\in\bbZ_{\ge2} $ denotes the alphabet size, $ L - 1\in\bbZ_{\ge1} $ denotes the list size, 
$ n\in\bbZ_{\ge1} $ denotes the blocklength.
The parameter $ \ell $ used in list-recovery takes integer values between $ 1 $ and $q-1$. 
The $\ell=1$ case reduces to list-decoding and the $\ell=q$ case is trivial (cf.\ \Cref{def:list-rec}).

\subsubsection{List-decodability and list-recoverability}
\label{sec:def-list-dec-rec}

\begin{definition}[Hamming metric]
\label{def:hamming}
We equip the space $ [q]^n $ with the \emph{Hamming metric}:
\begin{align}
\disth(\vx,\vy) &\coloneqq \sum_{i = 1}^n \indicator{x_i \ne y_i} \notag 
\end{align}
for $ \vx,\vy\in[q]^n $. 
The \emph{Hamming weight} of a vector $ \vx\in[q]^n $ is defined as 
\begin{align}
    \wth(\vx) \coloneqq \disth(\vx,\vq) , \notag
\end{align}
where $\vq = (q,q,\dots,q) \in [q]^n$ is the all-$q$ vector.

A \emph{Hamming ball} or \emph{Hamming sphere} centered around $ \vy\in[q]^n $ of radius $ 0\le r\le n $ is defined as 
\begin{align}
\bham(\vy,r) &\coloneqq \curbrkt{\vx\in[q]^n: \disth(\vx,\vy)\le r} . \notag 
\end{align}
or 
\begin{align}
\sham(\vy,r) &\coloneqq \curbrkt{\vx\in[q]^n: \disth(\vx,\vy)=r} , \notag 
\end{align}
respectively.
\end{definition}

\begin{remark}
    Our definition of Hamming weight is nonstandard; typically, the weight is defined to be the distance from the all-$0$ string. However, as our arguments regularly work with probability distributions over the alphabet, and in said arguments it is convenient for us to view these probability distributions as vectors in real-space, it is notationally simplest for us to assume the alphabet is $[q]=\{1,2,\dots,q\}$. We therefore have chosen to make $q$ the ``distinguished'' symbol from the alphabet, i.e., it takes the role that $0$ typically plays. (Note further that as we are not concerned with linear codes, there is no mathematical reason that we should have $0$ playing this distinguished role.)
\end{remark}

\begin{definition}[List-decodability]
\label{def:list-dec}
A \emph{code} $ \cC\subset[q]^n $ is said to be \emph{$ (p,L)_q $-list-decodable} if for every $ \vy\in[q]^n $, $ \card{\cC\cap\bham(\vy,np)} <L  $. 
\end{definition}

\begin{remark}
When we talk about \emph{a} code $ \cC\subset[q]^n $, we always mean \emph{a sequence} of codes $ \{\cC_n\}_n $ for an infinite increasing sequence of $ n\in\bbZ_{\ge1} $. 
\end{remark}

% To ease the presentation of our list-recovery results, we extend these notations as follows. 
Analogous definitions can be made for list-recovery. 

\begin{definition}
\label{def:metric-list-rec}
Given a vector $\vx \in [q]^n$ and a tuple of sets $\vcY \in \binom{[q]}{\ell}^n$ for $1 \leq \ell \leq q-1$, we define
\begin{align}
    \distlr(\vx,\vcY) &\coloneqq \sum_{i=1}^n \indicator{x_i \notin \cY_i} , \quad 
    \wtlr(\vx) \coloneqq \distlr(\vx,\vcY_\ell) , 
    \notag
\end{align}
where $ \vcY_\ell\coloneqq\{q-\ell+1,\dots,q\}^n\in\binom{[q]}{\ell}^n $. 
We also define the list-recovery ball and sphere centered at $\vcY \in \binom{[q]}{\ell}^n$ of radius $ 0\le r\le n $ as
\begin{align}
    \blr(\vcY,r) &\coloneqq \curbrkt{\vx\in[q]^n: \distlr(\vx,\vcY)\le r} , \quad 
    \slr(\vcY,r) \coloneqq \curbrkt{\vx\in[q]^n: \distlr(\vx,\vcY)= r} ,
    \notag 
\end{align}
respectively. 
\end{definition}

\begin{definition}[List-recoverability]
\label{def:list-rec}
% Let $ 1\le\ell\le q $. 
A code $ \cC\subset[q]^n $ is said to be \emph{$ (p,\ell,L)_q $-list-recoverable} if for every $\vcY \in \binom{[q]}{\ell}^n$, $\card{\cC \cap \blr(\vcY,np)} < L$.
\end{definition}

\subsubsection{Radius, capacity and zero-rate threshold}
\label{sec:def-rad}

\begin{definition}[Radius]
\label{def:list-dec-rad}
The \emph{radius} of an $L$-set of vectors $ \vx_1,\dots,\vx_L\in[q]^n $ is defined as the radius of the smallest Hamming ball containing the set $\{\vx_1,\dots,\vx_L\}$:
\begin{align}
    \rad(\vx_1,\dots,\vx_L) &\coloneqq \min_{\vy\in[q]^n} \max_{i\in[L]} \disth(\vx_i,\vy) . \label{eqn:def-rad} 
\end{align}
The \emph{$L$-radius} of a code $ \cC\subset[q]^n $ of size at least $ L $ is defined as the maximum radius of $L$-lists of codewords in $\cC$:
\begin{align}
    \rad_L(\cC) &\coloneqq \max_{\cL\in\binom{\cC}{L}} \rad(\cL) . \label{eqn:def-rad-code} 
\end{align}
\end{definition}

In fact, the radius of a code characterizes its list-decodability. 
Indeed, inspecting \Cref{def:list-dec} and \Cref{eqn:def-rad-code}, one observes that a code $ \cC\subset[q]^n $ is $ (p,L)_q $-list-decodable if and only if $ \rad_L(\cC)>np $. 
A quantity of central interest in the study of combinatorial list-decoding is the \emph{list-decoding capacity} defined as follows. 

\begin{definition}[Rate, list-decoding capacity and zero-rate threshold]
\label{def:cap}
The \emph{rate} of a code $ \cC\subset[q]^n $ is defined as $ R(\cC)\coloneqq\frac{1}{n}\log_q|\cC| $. 
The \emph{$(p,L)_q$-list-decoding capacity} is defined as the maximum rate of $ (p,L)_q $-list-decodable codes:
\begin{align}
    C_{(p,L)_q} &\coloneqq \limsup_{n\to\infty} \max_{\substack{\cC_n\subset[q]^n \\ \rad_L(\cC_n)>np}} R(\cC_n) . \notag 
\end{align}
Since $ C_{(p,L)_q} $ is a decreasing function in $ p $, we define the \emph{zero-rate threshold} for $ (p,L)_q $-list-decoding as the maximum $p$ such that the capacity is strictly positive:
\begin{align}
    p_*(q,L) &\coloneqq \sup\curbrkt{p\ge0 : C_{(p,L)_q}>0} . \notag 
\end{align}
\end{definition}

Again, we extend the above definitions to list-recovery. 
Let us start with \emph{$\ell$-radius} for list-recovery.

\begin{definition}[$\ell$-radius]
\label{def:list-rec-rad}
The \emph{$\ell$-radius} of an $L$-set of vectors $ \vx_1,\dots,\vx_L\in[q]^n $ is defined as the radius of the smallest list-recovery ball containing the set $\{\vx_1,\dots,\vx_L\}$:
\begin{align}
    \rad_\ell(\vx_1,\dots,\vx_L) &\coloneqq \min_{\vcY\in\binom{[q]}{\ell}^n} \max_{i\in[L]} \distlr(\vx_i,\vcY) . \label{eqn:def-rad-lr} 
\end{align}
The \emph{$(\ell,L)$-radius} of a code $ \cC\subset[q]^n $ of size at least $ L $ is defined as the maximum $\ell$-radius of $L$-lists of codewords in $\cC$:
\begin{align}
    \rad_{\ell,L}(\cC) &\coloneqq \max_{\cL\in\binom{\cC}{L}} \rad_\ell(\cL) . \notag 
\end{align}
\end{definition}

Capacity and zero-rate threshold for list-recovery are similarly defined by noting that $(p,\ell,L)_q$-list-recoverability of any $\cC\subset[q]^n$ is equivalent to $ \rad_{\ell,L}(\cC)>np $. 

\begin{definition}[Capacity and zero-rate threshold for list-recovery]
\label{def:cap-lr}
The \emph{$ (p,\ell,L)_q $-list recovery capacity} is defined as the largest rate of any $ (p,\ell,L)_q $-list-recoverable code:
\begin{align}
    C_{(p,\ell,L)_q} &\coloneqq \limsup_{n\to\infty} \max_{\substack{\cC_n\subset[q]^n \\ \rad_{\ell,L}(\cC_n)>np}} R(\cC_n) , \notag
\end{align}
and the \emph{zero-rate threshold} is defined to be the value of $p$ at which $ C_{(p,\ell,L)_q} $ hits zero:
\begin{align}
    p_*(q,\ell,L) &\coloneqq \sup\curbrkt{p\ge0 : C_{(p,\ell,L)_q} > 0} . \notag 
\end{align}
\end{definition}

\subsubsection{Average radius}
\label{sec:def-avg-rad}

Tightly related to the above definition of radius is another widely-studied notion called \emph{average radius}. 
To introduce this concept, let us first define the \emph{plurality} of a sequence of symbols. 

\begin{definition}[Plurality]
\label{def:plur}
For $ L\in\bbZ_{\ge2} $ and $ x_1,\dots,x_L\in[q] $, the \emph{plurality} of $ x_1,\dots,x_L $ is defined to be the number of times the most frequent symbol appears:
\begin{align}
\plur(x_1,\dots,x_L) &\coloneqq \max_{a\in[q]} \sum_{i=1}^L \indicator{x_i = a} . \notag 
\end{align}
The maximizer is denoted by $ \argplur(x_1,\cdots,x_L)\in[q] $. 
\end{definition}

The notion of \emph{average radius} can then be defined using plurality. 

\begin{definition}[Average radius]
\label{def:avg-rad-list-dec}
The \emph{average radius} of an $L$-set of vectors $ \vx_1,\dots,\vx_L\in[q]^n $ is defined as 
\begin{align}
    \ol{\rad}(\vx_1,\dots,\vx_L) &\coloneqq \sum_{j=1}^n \paren{1 - \frac{1}{L}\plur(x_{1,j},\dots,x_{L,j})} . \label{eqn:def-avg-rad}
\end{align}
% The \emph{$L$-average radius} of a code $ \cC\subset[q]^n $ can be analogously defined as
% \begin{align}
%     \ol{\rad}_L(\cC) &\coloneqq \max_{\cL\in\binom{\cC}{L}} \ol{\rad}(\cL) . \notag 
% \end{align}
\end{definition}

We pause to explain the motivation for defining average radius in the above particular way. It is well-known in the literature and not hard to check that the average radius defined in \Cref{eqn:def-avg-rad} is the solution to the following optimization problem:
\begin{align}
    \min_{\vy\in[q]^n} \exptover{I\sim\unif([L])}{\disth(\vx_I,\vy)} , \label{eqn:relax-ld}
\end{align}
and the minimizer is given by $ \vy^*\in[q]^n $ defined as $ y_i^* = \argplur(x_{1,i},\dots,x_{L,i}) $ for every $i\in[n]$. 
\Cref{eqn:relax-ld} is in turn a lower bound on the standard radius in \Cref{eqn:def-rad} where the inner maximum is replaced with average over $ i\in[L] $. 
This is also the reason why the expression in \Cref{eqn:def-avg-rad} is named \emph{average radius}. 

For list-recovery, we define \emph{$\ell$-average radius} through \emph{$\ell$-plurality}.

\begin{definition}[$\ell$-plurality]
\label{def:l-plur}
Let $ L\in\bbZ_{\ge2} $ and $ 1\le\ell\le q-1 $ be an integer. 
The \emph{$ \ell $-plurality} of an $L$-tuple of symbols $ x_1,\dots,x_L\in[q] $ is defined as the total number of times that the top $\ell$ most frequent symbols appear:
\begin{align}
\plur_\ell(x_1,\dots,x_L) &\coloneqq \max_{\Sigma\in\binom{[q]}{\ell}} \sum_{i=1}^L \indicator{x_i\in\Sigma} . \notag 
\end{align}
The maximizer is denoted by $ \argplur_\ell(x_1,\cdots,x_L)\in\binom{[q]}{\ell} $. 
\end{definition}

\begin{remark}
It is easy to see that $ \plur_1(\cdot) = \plur(\cdot) $. 
\end{remark}

\begin{definition}[$\ell$-average radius]
\label{def:avg-rad-list-rec}
The \emph{$\ell$-average radius} of an $L$-set of vectors $ \vx_1,\dots,\vx_L\in[q]^n $ is defined as 
\begin{align}
    \ol{\rad}_\ell(\vx_1,\dots,\vx_L) &\coloneqq \sum_{j=1}^n \paren{1 - \frac{1}{L}\plur_\ell(x_{1,j},\dots,x_{L,j})} . \label{eqn:def-avg-rad-lr}
\end{align}
% The \emph{$(\ell,L)$-average radius} of a code $ \cC\subset[q]^n $ can be analogously defined as
% \begin{align}
%     \ol{\rad}_{\ell,L}(\cC) &\coloneqq \max_{\cL\in\binom{\cC}{L}} \ol{\rad}_\ell(\cL) . \notag 
% \end{align}
\end{definition}

Similarly, \Cref{eqn:def-avg-rad-lr} is the solution to the following relaxation of \Cref{eqn:def-rad-lr}:
\begin{align}
    \min_{\vcY\in\binom{[q]}{\ell}^n} \exptover{I\sim\unif([L])}{\distlr(\vx_I,\vcY)} , \notag 
\end{align}
with minimizer $ \vcY^*\in\binom{[q]}{\ell}^n $ given by $ \cY_i^* = \argplur_\ell(x_{1,i},\dots,x_{L,i}) $ for every $i\in[n]$. 

\section{Main results}
\label{sec:main-results}

\subsection{$q$-ary list-decoding}
\label{sec:results-list-dec-qary}

Define $ f_{q,L}\colon\Delta([q])\to\bbR_{\ge0} $ as
\begin{align}
f_{q,L}(P) &\coloneqq \exptover{(X_1,\dots,X_L)\sim P^{\ot L}}{\plur(X_1,\dots,X_L)} \label{eqn:def-fql}
\end{align}
for $ P\in\Delta([q]) $. 

For $ w\in[0,1] $, let $ P_{q,w}\in\Delta([q]) $ denote the following probability vector:
\begin{align}
P_{q,w} &\coloneqq \paren{\frac{w}{q-1},\dots,\frac{w}{q-1},1-w} . \label{eqn:def-pqw} 
\end{align}
Define $ g_{q,L}\colon[0,1]\to\bbR_{\ge0} $ as
\begin{align}
g_{q,L}(w) &\coloneqq f_{q,L}(P_{q,w}) . \label{eqn:def-gql}
\end{align}

\begin{definition}[Majorization]
\label{def:majorization}
Let $ \va,\vb\in\bbR^d $.
Let $ \va^{\downarrow},\vb^{\downarrow}\in\bbR^d $ denote the vectors obtained by sorting the elements in $ \va $ and $ \vb $ in descending order, respectively. 
We say that \emph{$ \va $ majorizes $ \vb $}, written as $ \va\succeq\vb $, if 
\begin{align}
\sum_{i=1}^k a^{\downarrow}_i &\ge \sum_{i=1}^k b^{\downarrow}_i \notag 
\end{align}
for every $ k\in[d] $, and
\begin{align}
\sum_{i=1}^d a_i &= \sum_{i=1}^d b_i . \notag 
\end{align}
\end{definition}

\begin{definition}[Schur convexity]
\label{def:schur-convex}
A function $ f\colon\bbR^d\to\bbR $ is called \emph{Schur-convex} if $ f(\vx)\ge f(\vy) $ for every $ \vx,\vy\in\bbR^d $ such that $ \vx\succeq\vy $ (in the sense of \Cref{def:majorization}). 
\end{definition}

\begin{theorem}[Schur convexity of $ f_{q,L} $]
\label{thm:schur-convex-fql}
For any $ q\in\bbZ_{\ge2} $ and $ L\in\bbZ_{\ge2} $, the function $ f_{q,L}\colon\Delta([q])\to\bbR_{\ge0} $ defined in \Cref{eqn:def-fql} is Schur convex. 
\end{theorem}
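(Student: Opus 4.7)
The plan is to invoke the Schur--Ostrowski criterion. Since $\plur$ is invariant under relabeling of the alphabet $[q]$, the function $f_{q,L}$ is symmetric in the entries of $P$, so it suffices to show that for every $i\ne j\in[q]$,
\[
    (p_i - p_j)\paren{\frac{\partial f_{q,L}}{\partial p_i} - \frac{\partial f_{q,L}}{\partial p_j}} \ge 0 .
\]
Assuming without loss of generality that $p_i\ge p_j$, this reduces to proving $\partial_i f_{q,L}\ge\partial_j f_{q,L}$.

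Differentiating the polynomial expansion $f_{q,L}(P) = \sum_{(x_1,\dots,x_L)\in[q]^L}\plur(x_1,\dots,x_L)\prod_{k=1}^L p_{x_k}$ and exploiting the symmetry of $\plur$ in its $L$ arguments gives
\[
    \frac{\partial f_{q,L}}{\partial p_i}(P) = L\cdot\exptover{(X_2,\dots,X_L)\sim P^{\ot(L-1)}}{\plur(i,X_2,\dots,X_L)} .
\]
The key structural identity I would use is
\[
    \plur(i,x_2,\dots,x_L) = M + \indicator{c_i = M} ,
\]
where $c_s\coloneqq\card{\curbrkt{k\in\{2,\dots,L\}:x_k=s}}$ is the number of times $s$ appears and $M\coloneqq\max_{s\in[q]} c_s$; indeed, prepending the symbol $i$ raises the plurality by $1$ precisely when $i$ already attains the maximum count, and otherwise leaves it unchanged. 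Subtracting the analogous identity for $j$, the $M$-terms cancel and one obtains
\[
    \frac{\partial f_{q,L}}{\partial p_i} - \frac{\partial f_{q,L}}{\partial p_j}
    = L\paren{\Pr[c_i = M] - \Pr[c_j = M]} ,
\]
where the probability is taken over $(X_2,\dots,X_L)\sim P^{\ot(L-1)}$.

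The main obstacle, and the principal combinatorial content of the argument, is to verify that this probability difference is nonnegative when $p_i\ge p_j$. For this I would exhibit the coordinatewise involution $\sigma\colon[q]^{L-1}\to[q]^{L-1}$ defined by replacing every occurrence of $i$ with $j$ and every occurrence of $j$ with $i$. Because $\sigma$ simply interchanges the counts $c_i$ and $c_j$ while preserving all other counts (and therefore the maximum $M$), it restricts to a bijection between $\curbrkt{(x_2,\dots,x_L) : c_j = M,\, c_i < M}$ and $\curbrkt{(x_2,\dots,x_L) : c_i = M,\, c_j < M}$. For any element of the former set, writing $a = c_i$ and $b = c_j$ with $a<b$, a direct computation of the probability under $P^{\ot(L-1)}$ yields the pointwise ratio $\Pr[\sigma(x_2,\dots,x_L)]/\Pr[(x_2,\dots,x_L)] = (p_i/p_j)^{b-a}\ge 1$. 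Summing this inequality along the bijection gives $\Pr[c_i = M,\,c_j<M]\ge\Pr[c_j=M,\,c_i<M]$, and since configurations with $c_i=c_j=M$ contribute equally to $\Pr[c_i=M]$ and $\Pr[c_j=M]$, the desired inequality $\Pr[c_i = M]\ge\Pr[c_j=M]$ follows, completing the argument. The edge case $p_j = 0$ can be handled either directly (the ``smaller'' set is empty) or by continuity.
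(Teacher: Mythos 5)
Your proposal is correct and takes essentially the same route as the paper: both verify the Schur--Ostrowski criterion, rewrite the partial derivative as a degree-$(L-1)$ polynomial/expectation, and then establish nonnegativity of the difference $\partial_i - \partial_j$ by a pairing bijection that transposes $i$ and $j$ and compares $p_i^{a}p_j^{b}$ against $p_i^{b}p_j^{a}$. The paper phrases this at the level of type vectors $\va$ and the transposition $\pi_{k,j}$, while you phrase it at the level of strings $(x_2,\dots,x_L)$ with the identity $\plur(i,x_2,\dots,x_L)=M+\indicator{c_i=M}$; the latter is a somewhat cleaner packaging of the same case analysis, but the argument is the same.
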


\begin{proof}
See \Cref{sec:schu-convex-fql}. 
\end{proof}

\begin{theorem}[Convexity of $ g_{q,L} $]
\label{thm:convex-gql}
For any $ q\in\bbZ_{\ge2} $ and $ L\in\bbZ_{\ge2} $, the function $ g_{q,L}\colon[0,1]\to\bbR_{\ge0} $ defined in \Cref{eqn:def-gql} is convex in the interval $ [0,(q-1)/q] $. 
\end{theorem}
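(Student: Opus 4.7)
The plan is to prove $ g_{q,L}''(w)\ge 0 $ on $ [0,(q-1)/q] $ by direct differentiation followed by combinatorial bookkeeping.

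First, I would expand $ g_{q,L}(w) $ explicitly as a polynomial in $w$. Grouping the terms in the definition of $ f_{q,L} $ by $ k(\vx)\coloneqq\card{\{i\in[L]:x_i=q\}} $ and averaging over the $ (q-1)^{L-k} $ ways of filling the non-$q$ coordinates, one finds
\begin{align*}
    g_{q,L}(w) &= \sum_{k=0}^L \binom{L}{k} (1-w)^k w^{L-k} \phi(k) ,
\end{align*}
where $ \phi(k)\coloneqq \mathbb{E}[\max(k, M_{L-k})] $ and $ M_m $ is the maximum bin count when $m$ balls are thrown independently and uniformly into $q-1$ bins. This is (up to re-indexing) a Bernstein polynomial in $w$, so the standard second-derivative identity for Bernstein polynomials yields
\begin{align*}
    g_{q,L}''(w) &= L(L-1) \sum_{k=0}^{L-2} \binom{L-2}{k} (1-w)^k w^{L-2-k} \, \Delta^2\phi(k) ,
\end{align*}
where $ \Delta^2\phi(k)\coloneqq\phi(k+2) - 2\phi(k+1)+\phi(k) $. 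Thus convexity of $ g_{q,L} $ on $[0,(q-1)/q]$ reduces to showing that this weighted sum of discrete second differences is non-negative for $w$ in this range.

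To analyse the differences themselves, I would couple the three experiments underlying $ \phi(k),\phi(k+1),\phi(k+2) $ by fixing an infinite stream of independent uniform balls $ U_1,U_2,\dots $ in $[q-1]$ and letting $ M_m $ be the maximum bin count of $U_1,\dots,U_m$. Splitting $ \Delta^2\phi(k) $ according to whether the plurality is attained at the symbol $q$ or at some non-$q$ symbol, one observes that $\phi$ is exactly linear in $k$ in the ``$q$-dominant'' regime $ k>M_{L-k} $, so $\Delta^2\phi$ vanishes there; the non-trivial contributions to $\Delta^2\phi(k)$ therefore arise only from events where removing one of the balls $U_{L-k-1}$ or $U_{L-k}$ strictly changes the non-$q$ maximum, and these events admit an explicit combinatorial description.

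The main obstacle is that $\Delta^2\phi(k)$ need not be non-negative for every $k$: if it were, the theorem would hold on all of $[0,1]$, which the introduction flags as false. One therefore cannot argue term by term, and the restriction $w\le (q-1)/q$, equivalently $w/(1-w)\le q-1$, must be invoked to control the ratio of adjacent Bernstein weights. The endgame is to pair each potentially-negative $\Delta^2\phi(k)$ with positive contributions at neighbouring values of $k$, whose Bernstein weights are larger by at most a factor depending on $q-1$, and to show via a case analysis based on the relative sizes of $k$ and $L-k$ and the concentration of $M_{L-k}$ near $(L-k)/(q-1)$ that the positive contributions dominate on the required interval. This bookkeeping, structurally analogous to Blinovsky's binary argument for $q=2$ but considerably more delicate in the $q$-ary setting, forms the technical heart of the proof.
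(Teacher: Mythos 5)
Your Bernstein reparametrization is correct, and in fact it is the same expression the paper obtains in \Cref{lemma:derivs}: since $P_{q,w}$ assigns mass $w/(q-1)$ to each $i<q$ and $1-w$ to $q$, the factor $\paren{\frac{w}{q-1}}^{L-2}\paren{(q-1)\frac{1-w}{w}}^{a_q}$ in \Cref{eqn:exp-for-g''} equals $(1-w)^{a_q}\paren{\frac{w}{q-1}}^{L-2-a_q}$, and summing over $\va\in\cA_{q,L-2}$ with $a_q=k$ recovers exactly your $\binom{L-2}{k}(1-w)^k w^{L-2-k}\,\Delta^2\phi(k)$. Your worry that $\Delta^2\phi(k)$ may be negative is also well founded: e.g.\ for $q=3$, $L=6$ one has $\phi(0)=63/16$, $\phi(1)=55/16$, $\phi(2)=11/4$, giving $\Delta^2\phi(0)=-3/16<0$. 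So a termwise argument does fail and the constraint $w\le(q-1)/q$ must genuinely be used.

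The gap is that the non-negativity argument itself is never supplied. You explicitly defer "the technical heart of the proof" to a pairing/bookkeeping step that you describe only in outline ("pair each potentially-negative $\Delta^2\phi(k)$ with positive contributions at neighbouring values of $k$", controlled by "concentration of $M_{L-k}$"). That is a plan, not a proof, and it is far from clear it can be executed at the coarse level of $\Delta^2\phi(k)$. The paper instead stays at the finer level of individual $\va\in\cA_{q,L-2}$ and partitions the sum into $S_q$-orbits $\cO(\va^*)$. Within a fixed orbit, the subsets $\cO_1=\{\va: a_q<\max\{\va\}\}$ and $\cO_2=\{\va: a_q=\max\{\va\}\}$ have the \emph{exact} proportions $(q-r)/q$ and $r/q$, where $r$ is the multiplicity of the maximum; the case analysis gives the matched bounds $G(\va)\ge -r(r-1)/(q-1)^2$ on $\cO_1$ and $G(\va)\ge\frac{2(r-1)(q-1)-(r-1)(r-2)}{(q-1)^2}$ on $\cO_2$; and monotonicity of $\paren{(q-1)\frac{1-w}{w}}^{a_q}$ in $a_q$ (this is where $w\le(q-1)/q$, i.e.\ $(q-1)\frac{1-w}{w}\ge1$, is used) lets one replace the exponent by $\max\{\va\}$ throughout and conclude that each orbit's total is non-negative. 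This exact $r/q$ counting is what makes the cancellation work; it is transverse to your grouping by $k=a_q$, and nothing in your sketch supplies a substitute. Until that combinatorial accounting is actually carried out, the proposal is incomplete.
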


\begin{proof}
See \Cref{sec:convex-gql}. 
\end{proof}

\begin{remark}
\label{rk:binary-easy}
In the binary case (i.e., $ q=2 $), understanding the functions $ f_{2,L} $ and $ g_{2,L} $ is an easier task. 
In fact, $ f_{2,L} $ collapses to a univariate function and coincides with $ g_{2,L} $. 
% More importantly, using the known expression for the expected translation distance of a one-dimensional random walk, one can derive an explicit formula of $ g_{2,L} $ without involving a summation like in \Cref{lemma:derivs}. 
It can be computed \cite[Eqn.\ (2.15) and (2.16)]{blinovsky-book} that for $ L = 2k,2k+1 $, 
\begin{align}
    p_*(2,L,w) &\coloneqq 1 - \frac{1}{L} g_{2,L}(w) = \sum_{i = 1}^k \frac{\binom{2i-2}{i-1}}{i} (w(1-w))^i , \notag 
\end{align}
and 
\begin{align}
    \frac{\partial^2}{\partial w^2} p_*(2,L,w) &= -k\binom{2k}{k} (w(1-w))^{k-1} . \notag 
\end{align}
The concavity (see also \cite[Lemma 8]{polyanskiy-2016-list-dec}) and monotonicity of $ p_*(2,L,w) $ immediately follow. 
Such explicit computation cannot be performed in the $q>2$ case (and for list-recovery) and we have to work with summations like in \Cref{lemma:derivs}. 
Other approaches to arguing monotonicity such as induction \cite[Lemma 8(d)]{abp-2018} do not seem to work well either for larger $q$. 
\end{remark}

As convexity only holds in the interval $[0,(q-1)/q]$, we will also require the following monotonicity properties, which follow easily from the Schur convexity of $f_{q,L}$. 

\begin{restatable}{lemma}{monotonicitygqL} \label{prop:mono-gqL}
    For any $q \in \bbZ_{\ge2}$, the function $ g_{q,L}\colon[0,1]\to\bbR_{\ge0} $ defined in \Cref{eqn:def-gql} is non-increasing on $[0,(q-1)/q]$ and non-decreasing on $[(q-1)/q,1]$.
\end{restatable}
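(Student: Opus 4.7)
The plan is to deduce both monotonicity statements from the Schur convexity of $f_{q,L}$ established in \Cref{thm:schur-convex-fql}, by showing that along the curve $w\mapsto P_{q,w}$ the majorization order changes direction exactly at $w=(q-1)/q$. Concretely, I would verify that for $0\le w_1\le w_2\le (q-1)/q$ one has $P_{q,w_1}\succeq P_{q,w_2}$, while for $(q-1)/q\le w_1\le w_2\le 1$ one has $P_{q,w_2}\succeq P_{q,w_1}$; in both cases Schur convexity then yields the desired monotonicity of $g_{q,L}(w)=f_{q,L}(P_{q,w})$.

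To carry out the first case, I would note that when $w\le (q-1)/q$ we have $1-w\ge 1/q\ge w/(q-1)$, so the sorted (descending) version of $P_{q,w}$ is $(1-w,\,w/(q-1),\dots,w/(q-1))$. A direct computation then gives the $k$-th prefix sum as $1-w\cdot\frac{q-k}{q-1}$ for $1\le k\le q-1$, which is a decreasing function of $w$ on $[0,(q-1)/q]$, while the full sum is always $1$. This shows $P_{q,w_1}\succeq P_{q,w_2}$ whenever $w_1\le w_2\le (q-1)/q$, and Schur convexity gives $g_{q,L}(w_1)\ge g_{q,L}(w_2)$, establishing non-increasingness on $[0,(q-1)/q]$.

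For the second case, when $w\ge (q-1)/q$ the inequalities reverse: $w/(q-1)\ge 1/q\ge 1-w$, so the sorted version of $P_{q,w}$ is $(w/(q-1),\dots,w/(q-1),1-w)$ with $q-1$ copies of $w/(q-1)$ first. Now the $k$-th prefix sum for $1\le k\le q-1$ is $k\cdot w/(q-1)$, which is increasing in $w$, while again the full sum equals $1$. Hence $P_{q,w_2}\succeq P_{q,w_1}$ for $(q-1)/q\le w_1\le w_2\le 1$, and Schur convexity yields $g_{q,L}(w_2)\ge g_{q,L}(w_1)$.

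There is no real obstacle here: the only thing to be careful about is keeping track of which of the two coordinates ($1-w$ or $w/(q-1)$) is larger on each sub-interval, since this determines how the vector is sorted and hence the form of the prefix sums. Once the majorization comparison is correctly set up, the result is an immediate consequence of \Cref{thm:schur-convex-fql}.
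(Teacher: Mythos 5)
Your proposal is correct and follows essentially the same argument as the paper: deduce monotonicity from the Schur convexity of $f_{q,L}$ by checking that along $w \mapsto P_{q,w}$ the majorization order is monotone on each sub-interval, with the sorting of $P_{q,w}$ flipping at $w = (q-1)/q$. The prefix-sum computations you give are equivalent to the paper's.
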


\begin{proof}
    See \Cref{subsec:mono-gqL}.
\end{proof}

Define 
\begin{align}
p_*(q,L,w) &\coloneqq 1 - \frac{1}{L} g_{q,L}(w) . \label{eqn:def-p-star-q-l-w}
% = 1 - \frac{1}{L} \exptover{(X_1,\dots,X_L)\sim\paren{\frac{w}{q-1},\dots,\frac{w}{q-1},1-w}^{\ot L}}{\plur(X_1,\dots,X_L)} . \notag
\end{align}
% When $ w = (q-1)/q $, we denote 
% \begin{align}
%     p_*(q,L) &\coloneqq p_*\paren{q,L,\frac{q-1}{q}} \label{eqn:def-p-star-q-l} 
% \end{align}
% for convenience. 

\begin{theorem}[Plotkin bound for $q$-ary list-decoding]
\label{thm:plotkin-qary-list-dec-main}
Let $ \cC\subset[q]^n $ be an arbitrary $ (p,L)_q $-list-decodable code with $ p = p_*\paren{q,L,\frac{q-1}{q}} + \tau $ for any constant $ \tau\in(0,1) $. 
Then there exists a constant $ M_* = M_*(q,L,\tau) $ independent of $n$ such that $ |\cC|\le M_* $. 
As a consequence, in partucular, we have 
\begin{align}
    p_*(q,L) &\le p_*\paren{q,L,\frac{q-1}{q}} = 1 - \frac{1}{L} g_{q,L}\paren{\frac{q-1}{q}} . \notag 
\end{align}
\end{theorem}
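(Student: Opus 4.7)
The plan is to adapt the classical Plotkin--Elias double-counting strategy to $q$-ary list-decoding, combining the Schur convexity of $f_{q,L}$ (\Cref{thm:schur-convex-fql}), the convexity of $g_{q,L}$ on $[0,(q-1)/q]$ (\Cref{thm:convex-gql}), and the monotonicity \Cref{prop:mono-gqL}. I argue by contrapositive: assuming $|\mathcal{C}|$ exceeds a suitably large constant $M_*(q,L,\tau)$, I produce an $L$-subset of $\mathcal{C}$ whose $L$-radius is at most $pn$, contradicting $(p,L)_q$-list-decodability.

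First I would perform a two-stage preprocessing. Partition $[0,1]$ into $O(1/\eta)$ buckets of width $\eta = \eta(\tau) \ll \tau$ and pigeonhole on the normalized weights $\wth(\vec{c})/n$ to obtain a subcode $\mathcal{C}_1$ of size $\Omega(\eta|\mathcal{C}|)$ whose codewords share a common normalized weight $\bar{w} \in [0,1]$ up to tolerance $\eta$. Then quantize $\Delta([q]^L)$ into $N=N(q,L,\eta)$ type-cells and color each $L$-subset of $\mathcal{C}_1$ by the cell of its empirical column type; once $|\mathcal{C}_1|$ exceeds the $L$-uniform hypergraph Ramsey number $R_L(L;N)$, standard Ramsey theory extracts an \emph{equi-coupled} subcode $\mathcal{C}_2 \subseteq \mathcal{C}_1$ of size $L$ whose $L$-subset has a well-defined joint column type $\pi \in \Delta([q]^L)$. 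Symmetrizing $\pi$ over $S_L$---which preserves both list-decodability and $\ol{\rad}$---forces $\pi$ to be exchangeable, so its single-symbol marginal $P$ places mass $1-\bar{w}$ on $q$ and mass $\bar{w}$ on $\{1,\dots,q-1\}$.

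Next I would evaluate radii. Expanding $\ol{\rad}(\mathcal{C}_2) = \sum_j \bigl(1 - \tfrac{1}{L}\plur(x_{1,j},\dots,x_{L,j})\bigr)$ and invoking the equi-coupled joint type, this quantity equals $n\bigl(1 - \tfrac{1}{L} f_{q,L}(P)\bigr)$ up to $O(\eta n)$ error. By \Cref{thm:schur-convex-fql}, $P$ is majorized by the canonical distribution $P_{q,\bar{w}}$ of \Cref{eqn:def-pqw}, giving $f_{q,L}(P) \ge g_{q,L}(\bar{w})$ and hence
\[
    \ol{\rad}(\mathcal{C}_2) \le n\, p_*(q,L,\bar{w}) + O(\eta n).
\]
The monotonicity \Cref{prop:mono-gqL} locates the minimum of $g_{q,L}$ at $w = (q-1)/q$, so $p_*(q,L,\bar{w}) \le p_*(q,L,(q-1)/q)$ regardless of which side of $(q-1)/q$ the weight $\bar{w}$ lies on. Moreover, exchangeability of $\pi$ guarantees that the $L$ distances from the codewords of $\mathcal{C}_2$ to their plurality center $\vec{y}^* \coloneqq (\argplur(x_{1,k},\dots,x_{L,k}))_k$ are all essentially equal to $\ol{\rad}(\mathcal{C}_2)$, so the worst-case radius obeys $\rad(\mathcal{C}_2) \le \ol{\rad}(\mathcal{C}_2) + O(\eta n) \le n\, p_*(q,L,(q-1)/q) + O(\eta n)$. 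For $\eta$ chosen small compared to $\tau$, this is strictly less than $pn$, contradicting list-decodability. The bound $M_*$ is then the product of the pigeonhole and Ramsey thresholds, which depend only on $q, L, \tau$.

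The main obstacle I anticipate is the bridge from average radius (where the Schur-convexity reduction to $P_{q,\bar{w}}$ applies cleanly) to the worst-case radius that list-decoding actually controls: this requires leveraging the exchangeability of $\pi$ to argue that the plurality vector is nearly equidistant from all $L$ codewords of $\mathcal{C}_2$, so that $\rad \approx \ol{\rad}$ on the equi-coupled subcode. The Ramsey bookkeeping that ensures $M_*$ is truly independent of $n$ is routine but requires quantization precision $\eta$ chosen uniformly small relative to $\tau$ before invoking the hypergraph Ramsey number.
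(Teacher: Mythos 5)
Your proposal is built around extracting an equi-coupled subcode $\cC_2$ of size exactly $L$ and analyzing the average radius of that single $L$-subset. This is where it breaks. For a fixed $L$-tuple $\{\vx_1,\dots,\vx_L\}$ with joint column type $\pi\in\Delta([q]^L)$, the average radius is
\[
\ol{\rad}(\vx_1,\dots,\vx_L) \;=\; n\Bigl(1-\tfrac{1}{L}\exptover{\vec X\sim \pi}{\plur(\vec X)}\Bigr),
\]
which depends on the full joint type, not on the single-symbol marginal $P$. Your claimed identity $\ol{\rad}(\cC_2)=n\bigl(1-\tfrac{1}{L}f_{q,L}(P)\bigr)$ requires $\pi = P^{\otimes L}$, which does not hold and cannot be forced by exchangeability (exchangeable distributions are mixtures of i.i.d.\ laws, not i.i.d.\ themselves). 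A concrete counterexample: $q=L=2$, $\vx_1 = 01$, $\vx_2 = 10$. Then $\pi$ is supported on $\{(1,2),(2,1)\}$ and $\ol{\rad}=n/2$, while $P=(\tfrac12,\tfrac12)$ gives $n(1-\tfrac12 f_{2,2}(P))=n/4$. The single-subset average radius can be arbitrarily larger than the marginal-based quantity, so your upper bound fails exactly where the argument needs it.

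What the paper does instead, and what your proposal is missing, is a \emph{double-counting over all $L$-tuples drawn with replacement} from a large ($M\gg L$) equi-coupled subcode. Averaging over i.i.d.\ draws of $L$ codewords is exactly what produces the product measure $P_k^{\otimes L}$ per column, which is what makes the Schur convexity of $f_{q,L}$ (applied to each column's empirical distribution $P_k$) applicable. This also explains the role of \Cref{thm:convex-gql}: the column distributions $P_k$ vary across columns, so one needs Jensen to combine $\sum_k g_{q,L}(w_k)$ into $n\,g_{q,L}(\bar w)$; monotonicity alone (\Cref{prop:mono-gqL}) does not accomplish that. Two smaller points: (i) ``symmetrizing $\pi$'' is not an operation you can perform on a code; the paper instead invokes the Koml\'os-type \Cref{eqn:komlos} to deduce that the common type of a large equi-coupled subcode is \emph{forced} to be approximately symmetric, and carries the approximation error explicitly. (ii) Your bridge from $\ol{\rad}$ to $\rad$ via exchangeability is the right idea, but in the paper this step uses the approximate symmetry of the common type across the \emph{large} subcode, not a symmetrized type on a size-$L$ subcode. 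With $|\cC_2|=L$ there is only one $L$-subset and the equi-coupled and symmetrization hypotheses have no content.
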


\begin{proof}
The proof of this theorem can be found in \Cref{sec:plotkin-list-dec}. 
Specifically, a theorem (cf.\ \Cref{thm:plotkin-list-dec-qary}) of the above kind will be first proved for \emph{approximately constant-weight} codes in which all codewords have approximately the same Hamming weight. 
This theorem can then be used to prove \Cref{thm:plotkin-qary-list-dec-main} above (see \Cref{cor:plotkin-list-dec-qary-unconstr} for a more quantitative version) by partitioning a general (weight-unconstrained) code into a constant number of almost constant-weight subcodes. 
\end{proof}

The lower bound on the zero-rate threshold in \Cref{thm:plotkin-qary-list-dec-main} is in fact sharp. 
It turns out that positive rate $(p,L)_q$-list-decodable codes exist for any $ p $ strictly smaller than the bound $1 - \frac{1}{L} g_{q,L}\paren{\frac{q-1}{q}}$ in \Cref{thm:plotkin-qary-list-dec-main}. 
Indeed, Blinovsky \cite{blinovsky-2005-ls-lb-qary} proved the following lower bound on the $(p,L)_q$-list-decoding capacity which remains the best known to date. 
It can also be implied by our lower bound (\Cref{thm:lb-listrec} below) for list-recovery upon setting $\ell=1$. 

\begin{theorem}[{\cite[Sec.\ 2]{blinovsky-2005-ls-lb-qary}}]
\label{thm:bli-lb-listdec}
The following lower bound on the $(p,L)_q$-list-decoding capacity holds:
\begin{align}
C_{(p,L)_q} &\ge \frac{L}{L-1} - \frac{1}{L-1}\curbrkt{\lambda_*p + \log_q\sqrbrkt{\sum_{\va\in\cA_{q,L}} \binom{L}{\va} \exp_q\paren{-\lambda_*\paren{1 - \frac{1}{L}\max\curbrkt{\va}}}}}, \notag 
\end{align}
where $ \lambda_*=  \lambda_*(q,L,p) $ is the solution to the following equation
\begin{align}
p &= \frac{\sum\limits_{\va\in\cA_{q,L}} \binom{L}{\va} \exp_q\paren{-\lambda_*\paren{1 - \frac{1}{L}\max\curbrkt{\va}}} \paren{1 - \frac{1}{L}\max\curbrkt{\va}}}{\sum\limits_{\va\in\cA_{q,L}} \binom{L}{\va} \exp_q\paren{-\lambda_*\paren{1 - \frac{1}{L}\max\curbrkt{\va}}}}. \notag 
\end{align}
\end{theorem}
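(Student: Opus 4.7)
The plan is a random coding with expurgation argument that in fact establishes the stronger guarantee of positive-rate \emph{average-radius} $(p,L)_q$-list-decodability (which implies standard $(p,L)_q$-list-decodability since the average radius in \Cref{eqn:def-avg-rad} is a lower bound on the $\min$-$\max$ radius in \Cref{eqn:def-rad}). The key observation is that for $L$ i.i.d.\ uniform codewords, the average radius is a sum of i.i.d.\ bounded random variables whose large-deviation exponent matches the expression in the statement.

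First I would sample $M = \lceil q^{Rn}\rceil$ codewords $\vx_1,\dots,\vx_M$ independently and uniformly from $[q]^n$. For any fixed choice of $L$ distinct codewords, the column symbols are i.i.d.\ across $j\in[n]$, so $\ol{\rad}(\vx_{i_1},\dots,\vx_{i_L}) = \sum_{j=1}^n Z_j$ where $Z_j \coloneqq 1 - \frac{1}{L}\plur(x_{i_1,j},\dots,x_{i_L,j})$ are i.i.d. Conditioning on the type $\va \in \cA_{q,L}$ of $(X_1,\dots,X_L)\sim\unif([q])^{\ot L}$ (which occurs with probability $\binom{L}{\va}/q^L$ and has plurality $\max\curbrkt{\va}$), the $q$-ary moment generating function evaluates to
\[
\mathbb{E}\sqrbrkt{q^{-\lambda Z_1}} = q^{-L}\sum_{\va\in\cA_{q,L}} \binom{L}{\va}\, q^{-\lambda\paren{1 - \max\curbrkt{\va}/L}},
\]
and the Chernoff bound in base $q$ yields, for every $\lambda\ge 0$,
\[
\Pr\sqrbrkt{\ol{\rad}(\vx_{i_1},\dots,\vx_{i_L})\le pn} \;\le\; q^{n\sqrbrkt{\lambda p + \log_q \mathbb{E}[q^{-\lambda Z_1}]}}.
\]

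To obtain the tightest exponent I would optimize over $\lambda$. Since $\lambda\mapsto\log_q\mathbb{E}[q^{-\lambda Z_1}]$ is convex, setting the derivative of $\lambda p + \log_q \mathbb{E}[q^{-\lambda Z_1}]$ to zero is equivalent to demanding that the exponentially tilted mean of $Z_1$ equal $p$; a direct calculation shows that this yields exactly the equation defining $\lambda_*$ in the statement, with a unique positive solution whenever $p$ lies strictly between the essential infimum of $Z_1$ and its untilted mean $\mathbb{E}[Z_1] = 1 - \frac{1}{L}g_{q,L}((q-1)/q)$, which is precisely the range in which the bound is informative.

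Finally, the expected number of $L$-tuples of distinct codewords with $\ol{\rad}\le pn$ is at most $\binom{M}{L}$ times the Chernoff bound above. Requiring this to be at most $M/2$ and substituting $\log_q \mathbb{E}[q^{-\lambda_* Z_1}] = -L + \log_q\sum_{\va} \binom{L}{\va}\, q^{-\lambda_*(1 - \max\curbrkt{\va}/L)}$ gives, after rearrangement, the rate constraint
\[
R \;\le\; \frac{L}{L-1} - \frac{1}{L-1}\curbrkt{\lambda_* p + \log_q\sum_{\va\in\cA_{q,L}} \binom{L}{\va}\, q^{-\lambda_*\paren{1 - \max\curbrkt{\va}/L}}} - o(1).
\]
Deleting one codeword from each violating $L$-tuple leaves a subcode of size at least $M/2$ whose $L$-average-radius exceeds $pn$, hence is $(p,L)_q$-list-decodable. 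The only real obstacle is checking the existence, uniqueness, and positivity of $\lambda_*$ in the informative regime of $p$ (which is handled by the standard strict convexity of the log-MGF); everything else is a routine Cramér-style large-deviation computation.
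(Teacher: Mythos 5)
Your proposal is correct and matches the paper's proof essentially step for step: both sample a uniformly random codebook, bound the lower-tail probability of the i.i.d.\ sum defining the average radius via a Cram\'er/Chernoff tilting argument (the tilt parameter $\lambda_*$ being pinned down by the same first-order condition), and expurgate one codeword per bad $L$-tuple to obtain a positive-rate average-radius list-decodable code, yielding the rate $R = E/(L-1) - o(1)$. The only cosmetic difference is that you spell out the explicit Chernoff bound and the ``at most $M/2$ bad tuples'' Markov step, while the paper invokes its Cram\'er-theorem statement directly; the content is the same.
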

Blinovsky's lower bound is plotted in \Cref{fig:bli-qary}. 
It is not hard to verify that the lower bound above vanishes at 
\begin{align}
    p % = 1 - \frac{1}{L} g_{q,L}\paren{\frac{q-1}{q}} 
    = q^{-L} \sum_{\va\in\cA_{q,L}} \binom{L}{\va} \paren{1 - \frac{1}{L}\max\{\va\}} , \notag
\end{align}
and the corresponding $\lambda_*$ equals $0$. 

\Cref{thm:plotkin-qary-list-dec-main,thm:bli-lb-listdec} together pin down the exact value of $ p_*(q,\ell,L) $ shown in the following corollary. 
\begin{corollary}
\label{cor:plotkin-pt-listdec}
The zero-rate threshold $ p_*(q,\ell,L) $ for $(p,L)_q$-list-decoding is given by 
\begin{align}
    p_*(q,L) &= p_*\paren{q,L,\frac{q-1}{q}} = 1 - \frac{1}{L} g_{q,L}\paren{\frac{q-1}{q}}
    = q^{-L} \sum_{\va\in\cA_{q,L}} \binom{L}{\va} \paren{1 - \frac{1}{L}\max\{\va\}}. \label{eqn:plotkin-pt-listdec}
\end{align}
\end{corollary}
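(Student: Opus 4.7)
The plan is to derive the corollary by combining the Plotkin-type upper bound of \Cref{thm:plotkin-qary-list-dec-main} with Blinovsky's random-coding lower bound (\Cref{thm:bli-lb-listdec}), and then checking that the two thresholds match the closed-form expression in \Cref{eqn:plotkin-pt-listdec} via a straightforward computation.

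First, I would observe that \Cref{thm:plotkin-qary-list-dec-main} already yields the ``$\le$'' direction: for every $\tau>0$, any $(p,L)_q$-list-decodable code with $p = p_*(q,L,(q-1)/q) + \tau$ has size bounded by a constant $M_*(q,L,\tau)$ independent of $n$, so its rate tends to $0$. Consequently
\[
    p_*(q,L) \;\le\; p_*\!\paren{q,L,\tfrac{q-1}{q}} \;=\; 1 - \tfrac{1}{L} g_{q,L}\!\paren{\tfrac{q-1}{q}} .
\]

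Next, for the matching lower bound, I would use \Cref{thm:bli-lb-listdec}. As stated in the excerpt, Blinovsky's expression vanishes exactly at $\lambda_* = 0$, which corresponds to the value
\[
    p \;=\; q^{-L} \sum_{\va\in\cA_{q,L}} \binom{L}{\va}\paren{1 - \tfrac{1}{L}\max\{\va\}} .
\]
Since the lower bound on $C_{(p,L)_q}$ is a continuous, decreasing function of $p$ that is strictly positive at every $p$ slightly smaller than this critical value, we conclude $p_*(q,L) \ge q^{-L}\sum_{\va}\binom{L}{\va}(1 - \tfrac{1}{L}\max\{\va\})$.

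Finally, I would verify the algebraic identity that the matching upper and lower bounds are the same number, i.e.,
\[
    1 - \tfrac{1}{L} g_{q,L}\!\paren{\tfrac{q-1}{q}} \;=\; q^{-L} \sum_{\va\in\cA_{q,L}} \binom{L}{\va}\paren{1 - \tfrac{1}{L}\max\{\va\}} .
\]
This is immediate by unfolding definitions: $g_{q,L}((q-1)/q) = f_{q,L}(\mathrm{Unif}([q]))$, and the expectation of $\plur(X_1,\dots,X_L)$ under $L$ independent uniform draws organizes by type class into $\sum_{\va\in\cA_{q,L}} \binom{L}{\va} q^{-L} \max\{\va\}$; using $\sum_{\va\in\cA_{q,L}}\binom{L}{\va} = q^L$ gives the identity.

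The proof is essentially assembly; the only mild subtlety is making sure one invokes \Cref{thm:plotkin-qary-list-dec-main} in the ``for every $\tau>0$'' form to conclude $p_*(q,L) \le p_*(q,L,(q-1)/q)$ (rather than $<$), which is a standard limiting argument. There is no genuine obstacle here beyond this bookkeeping, since the hard technical work (Schur convexity of $f_{q,L}$, convexity of $g_{q,L}$ on $[0,(q-1)/q]$, the Plotkin bound, and Blinovsky's achievability) has all been established earlier in the paper.
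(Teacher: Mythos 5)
Your proposal is correct and follows the same route as the paper: combine the Plotkin bound (\Cref{thm:plotkin-qary-list-dec-main}) for the upper bound on $p_*(q,L)$ with Blinovsky's expurgated random-coding lower bound (\Cref{thm:bli-lb-listdec}), whose vanishing point at $\lambda_*=0$ yields the matching lower bound, and then check the simple identity $1 - \frac{1}{L}g_{q,L}((q-1)/q) = q^{-L}\sum_{\va\in\cA_{q,L}}\binom{L}{\va}(1-\frac{1}{L}\max\{\va\})$ via $P_{q,(q-1)/q} = \mathrm{Unif}([q])$ and the multinomial theorem. This is exactly the derivation the paper intends when it states that \Cref{thm:plotkin-qary-list-dec-main} and \Cref{thm:bli-lb-listdec} together pin down the value of $p_*(q,L)$.
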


From now on, we will use $ p_*(q,L) $ to denote the RHS of \Cref{eqn:plotkin-pt-listdec}. 

\begin{theorem}[Elias--Bassalygo bound for $q$-ary list-decoding]
\label{thm:eb-qary-list-dec-main}
Suppose $ p<p_*(q,L) $. 
Then the $ (p,L)_q $-list-decoding capacity can be upper bounded as $C_{(p,L)_q} \le 1 - H_q(w_{q,L})$ where $ w_{q,L} $ is the solution to the equation $ p_*(q,L,w) = p $ in $ w\in[0,(q-1)/q] $. 
\end{theorem}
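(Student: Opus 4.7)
The plan is to execute the classical Elias--Bassalygo template: cover $[q]^n$ by Hamming balls and apply the approximately-constant-weight Plotkin bound (\Cref{thm:plotkin-list-dec-qary}, the precursor of \Cref{thm:plotkin-qary-list-dec-main} proved in \Cref{sec:plotkin-list-dec}) to whichever ball contains many codewords. Let $\cC \subset [q]^n$ be an arbitrary $(p,L)_q$-list-decodable code of size $M$; the aim is to show $\frac{1}{n}\log_q M \le 1 - H_q(w_{q,L}) + o_n(1)$. The case $p=0$ reduces the target bound to $R\le 1$ and is vacuous, so I may assume $w_{q,L} > 0$.

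First I would fix a small $\delta > 0$ and set $w := w_{q,L} - \delta$. By \Cref{prop:mono-gqL}, $g_{q,L}$ is non-increasing on $[0,(q-1)/q]$, so $p_*(q,L,\cdot) = 1 - \frac{1}{L}g_{q,L}(\cdot)$ is non-decreasing there. Because $g_{q,L}(0) = L$ whereas $g_{q,L}(w_{q,L}) = L(1-p) < L$, a sufficiently small $\delta$ ensures $\tau := p - p_*(q,L,w) > 0$ (and $\tau\to 0$ as $\delta\to 0$ by continuity of $g_{q,L}$, which is an expectation of a bounded function of $P_{q,w}$).

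The covering step is the standard random-shift averaging: for $\vy$ drawn uniformly from $[q]^n$,
\[
    \mathbb{E}_{\vy}\bigl[|\cC \cap \bham(\vy,wn)|\bigr] \;=\; M\cdot\frac{|\bham(\mathbf{0},wn)|}{q^n} \;\ge\; M\cdot q^{-n(1-H_q(w))-o(n)},
\]
so some $\vy^*$ realizes at least this many codewords. Relabeling the alphabet symbol-by-symbol at each coordinate sends $\vy^*$ to the all-$q$ vector and preserves Hamming distance and $(p,L)_q$-list-decodability; hence we obtain a subcode $\cC'$ of the same size in which every codeword has Hamming weight at most $wn$. A pigeonhole over the at most $wn+1$ possible weights extracts $w^* \in [0,w]$ and a subcode $\cC'' \subseteq \cC'$ of size at least $|\cC'|/(wn+1)$ in which every codeword has weight exactly $w^* n$ (widening to a narrow weight band if \Cref{thm:plotkin-list-dec-qary} demands this, at the expense of at most a polynomial factor in $n$).

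Since $w^* \le w < w_{q,L}$, monotonicity gives $p_*(q,L,w^*) \le p_*(q,L,w) = p - \tau$, so $\cC''$ is an approximately constant-weight $(p,L)_q$-list-decodable code whose decoding radius exceeds its constant-weight zero-rate threshold by at least $\tau$. \Cref{thm:plotkin-list-dec-qary} then yields a constant $M_* = M_*(q,L,\tau)$, independent of $n$, with $|\cC''| \le M_*$. Chaining the bounds gives $M \le (wn+1)\cdot M_* \cdot q^{n(1-H_q(w))+o(n)}$, hence $R(\cC) \le 1 - H_q(w) + o_n(1)$. Sending $n \to \infty$ and then $\delta \to 0$, continuity of $H_q$ delivers the desired $C_{(p,L)_q} \le 1 - H_q(w_{q,L})$. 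The main obstacle is the careful interface with \Cref{thm:plotkin-list-dec-qary}: one must verify its (approximately) constant-weight hypothesis on the pigeonholed subcode and confirm that the constant it produces depends only on $q,L,\tau$ (not on $n$ or on $w^*$); beyond this bookkeeping the remaining steps are routine.
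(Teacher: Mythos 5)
Your proposal is correct and follows essentially the same Elias--Bassalygo template as the paper's proof (\Cref{thm:eb-qary-list-dec} in \Cref{sec:eb-list-dec-qary-pf}): choose a radius $w$ slightly inside $w_{q,L}$ so that the remaining slack $\tau = p - p_*(q,L,w)$ is positive, extract a Hamming-ball subcode of size at least $M/|\text{cover}|$, shift and pigeonhole to an (almost) constant-weight subcode, and invoke the approximately-constant-weight Plotkin bound (\Cref{thm:plotkin-list-dec-qary}). The only cosmetic difference is that you locate one good center by averaging over a random $\vy$ whereas the paper explicitly constructs a covering of $[q]^n$ via \Cref{lem:cov} and sums over all pieces; both yield the same exponent $1-H_q(w)$ up to polynomial-in-$n$ factors, and your bookkeeping concerns (uniformity of $M_*$ over $w^*$, strict monotonicity of $p_*(q,L,\cdot)$ to ensure $\tau>0$) are readily dispatched exactly as you indicate.
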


\begin{proof}
The above theorem is implied by \Cref{thm:eb-qary-list-dec} proved in \Cref{sec:eb-list-dec-qary-pf}. 
The latter theorem shows that for any $ (p,L)_q $-list-decodable code $ \cC\subset[q]^n $ with $ p<p_*(q,L) $ and any sufficiently small constant $ \tau>0 $, $ |\cC| $ is at most $ B\cdot n^{1.5}\cdot q^{n(1 - H_q(w_{q,L,\tau}))} $, where $ B = B(q,L,\tau) $ is a constant and $ w_{q,L,\tau} $ is the solution to $ p_*(q,L,w) = p - \tau $. 
Taking $ \tau\to0 $ and neglecting polynomial factors, we obtain the upper bound on the list-decoding capacity. 
\end{proof}

The above upper bound is plotted in \Cref{fig:bli-qary}. 

\subsection{List-recovery}
\label{eqn:results-list-rec}

Define $ f_{q,L,\ell}\colon\Delta([q])\to\bbR_{\ge0} $ as
\begin{align}
f_{q,L,\ell}(P) &\coloneqq \exptover{(X_1,\dots,X_L)\in P^{\ot L}}{\plur_\ell(X_1,\dots,X_L)} \label{eqn:def-fqll}
\end{align}
for $ P\in\Delta([q]) $.
Define $ g_{q,L,\ell}\colon[0,1]\to\bbR_{\ge0} $ as
\begin{align}
g_{q,L,\ell}(w) &\coloneqq f_{q,L,\ell}(P_{q,\ell,w}) , 
\label{eqn:def-gqll}
\end{align}
where the distribution $ P_{q,\ell,w}\in\Delta([q]) $ is defined as
\begin{align}
    P_{q,\ell,w}(i) &= \begin{cases}
    \frac{w}{q-\ell} , & 1\le i\le q-\ell \\
    \frac{1-w}{\ell} , & q-\ell + 1\le i\le q
    \end{cases} . \label{def:p-q-ell-rho}
\end{align}
\begin{theorem}[Schur convexity of $ f_{q,L,\ell} $]
\label{thm:schur-convex-fqll}
For any $ q\in\bbZ_{\ge2} $, $ L\in\bbZ_{\ge2} $ and integer $ 1\le\ell\le q-1 $, the function $ f_{q,L,\ell}\colon\Delta([q])\to\bbR_{\ge0} $ defined in \Cref{eqn:def-fqll} is Schur convex. 
\end{theorem}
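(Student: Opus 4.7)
My plan is to apply the Schur-Ostrowski criterion: since $f_{q,L,\ell}$ is symmetric in the entries of $P = (p_1,\dots,p_q)$ (because $\plur_\ell$ is invariant under any permutation of the alphabet $[q]$), it suffices to fix any pair of distinct indices $i \neq j \in [q]$ and show $(p_i - p_j)\paren{\partial_{p_i} f_{q,L,\ell} - \partial_{p_j} f_{q,L,\ell}} \ge 0$. Expanding $f_{q,L,\ell}(P) = \sum_{\vx\in[q]^L}\plur_\ell(\vx)\prod_k p_{x_k}$ and using symmetry in the $L$ positions, one obtains
\[
    \frac{\partial f_{q,L,\ell}}{\partial p_i} = L\cdot \exptover{(X_2,\dots,X_L)\sim P^{\ot(L-1)}}{\plur_\ell(i,X_2,\dots,X_L)} .
\]
The key reformulation is to view $\plur_\ell$ through count vectors. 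Letting $N_a \coloneqq \card{\curbrkt{k\ge 2 : X_k = a}}$ and writing $\Psi(\vec{n}) \coloneqq \max_{\Sigma\in\binom{[q]}{\ell}} \sum_{a\in\Sigma} n_a$ for the sum of the top $\ell$ entries, we have $\plur_\ell(i,X_2,\dots,X_L) = \Psi(\vec{N} + \vec{e}_i)$, where $\vec{e}_i$ is the $i$-th standard basis vector. The task therefore reduces to showing that $\exptover{(X_2,\dots,X_L)\sim P^{\ot(L-1)}}{\Psi(\vec{N}+\vec{e}_i) - \Psi(\vec{N}+\vec{e}_j)}$ has the same sign as $p_i - p_j$.

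The next step is a pairing argument. Expanding the expectation as a sum over count vectors $\vec n$ and pairing the multinomial term indexed by $(n_i, n_j, \vec{n}_{-ij})$ with the ``swapped'' term indexed by $(n_j, n_i, \vec{n}_{-ij})$, and using that $\Psi$ depends only on the multiset of entries, the combined contribution per unordered pair equals
\[
    \sqrbrkt{\Psi(n_i{+}1, n_j, \vec{n}_{-ij}) - \Psi(n_i, n_j{+}1, \vec{n}_{-ij})} \binom{L-1}{n_i, n_j, \vec{n}_{-ij}} \paren{p_i^{n_i}p_j^{n_j} - p_i^{n_j}p_j^{n_i}} \prod_{a\neq i,j}p_a^{n_a} .
\]
Assuming without loss of generality $p_i\ge p_j$ and restricting to $n_i \ge n_j$ (the diagonal $n_i = n_j$ contributes zero by the same symmetry), the factor $p_i^{n_i}p_j^{n_j} - p_i^{n_j}p_j^{n_i} = (p_ip_j)^{n_j}(p_i^{n_i-n_j} - p_j^{n_i-n_j})$ is manifestly nonnegative.

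It remains to show $\Psi(n_i+1, n_j, \vec{n}_{-ij}) \ge \Psi(n_i, n_j+1, \vec{n}_{-ij})$ whenever $n_i \ge n_j$. Here I would invoke the standard fact that the top-$\ell$ sum is Schur convex on $\bbR^q$: it is the maximum of the linear functions $\curbrkt{\vec n\mapsto \sum_{a\in\Sigma} n_a : \Sigma\in\binom{[q]}{\ell}}$ (hence convex) and invariant under permuting entries (hence symmetric), so Schur convexity follows from the Marshall-Ostrowski-Olkin criterion. When $n_i\ge n_j$, the two-vector $(n_i+1, n_j)$ majorizes $(n_i, n_j+1)$ (both sum to $n_i+n_j+1$, and $n_i+1 \ge \max(n_i, n_j+1)$); since the two full count vectors agree outside coordinates $i, j$, they also majorize, and Schur convexity of $\Psi$ gives the desired inequality. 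The main obstacle I anticipate is purely organizational -- verifying that the pairing covers every multinomial term exactly once and that the sign conditions on both factors align as claimed -- after which summing per-pair contributions over $\vec n_{-ij}$ and $(n_i, n_j)$ with $n_i \ge n_j$ establishes $(p_i - p_j)(\partial_{p_i} f_{q,L,\ell} - \partial_{p_j} f_{q,L,\ell}) \ge 0$, completing the proof.
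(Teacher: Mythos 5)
Your proposal is correct and follows essentially the same path as the paper: both verify the Schur--Ostrowski criterion and then use the same pairing of a count vector with its $(i,j)$-swap to reduce to a per-pair nonnegativity claim. The only real difference is in how the sign of $\Psi(\vec n+\ve_i)-\Psi(\vec n+\ve_j)$ is established when $n_i\ge n_j$: the paper does a direct case analysis showing $\maxl\{\va+\ve_k\}-\maxl\{\va+\ve_j\}\in\{-1,0,1\}$ according to whether $k,j$ lie in $\argmaxl\{\va\}$, whereas you invoke the general fact that the top-$\ell$ sum is symmetric and convex (being a maximum of linear functionals) and hence Schur convex, then check that $\vec n+\ve_i$ majorizes $\vec n+\ve_j$. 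Your route is a touch more conceptual and avoids the case split, but it is the same argument at heart, and both close the proof identically.
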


\begin{proof}
See \Cref{sec:proof-schur-convex-fqll}. 
\end{proof}

\begin{theorem}[Convexity of $ g_{q,L,\ell} $]
\label{thm:convex-gqll}
For any $ q\in\bbZ_{\ge2} $, $ L\in\bbZ_{\ge2} $ and integer $ 2\le\ell\le q-1 $, the function $ g_{q,L,\ell}\colon\Delta([q])\to\bbR_{\ge0} $ defined in \Cref{eqn:def-gqll} is convex in the interval $ w\in [0,1] $. 
\end{theorem}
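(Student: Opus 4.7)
The plan is to prove $g_{q,L,\ell}''(w)\ge 0$ for all $w\in[0,1]$ via the Bernstein-polynomial representation of $g_{q,L,\ell}$. Since $P_{q,\ell,w}$ places total mass $w$ uniformly on $A\coloneqq\{1,\dots,q-\ell\}$ and total mass $1-w$ uniformly on $B\coloneqq\{q-\ell+1,\dots,q\}$, if $(X_1,\dots,X_L)\sim P_{q,\ell,w}^{\ot L}$ and $K$ counts how many of the $X_i$ land in $A$, then $K\sim\mathrm{Binomial}(L,w)$, and conditionally on $K=k$ the $k$ ``$A$-samples'' are i.i.d.\ uniform on $A$ while the $L-k$ ``$B$-samples'' are i.i.d.\ uniform on $B$. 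Writing $T_k$ for the conditional expectation of $\plur_\ell(X_1,\dots,X_L)$ given $K=k$, this produces the Bernstein expansion
\[
    g_{q,L,\ell}(w) \;=\; \sum_{k=0}^L \binom{L}{k}w^k(1-w)^{L-k}\,T_k.
\]

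Differentiating twice through the standard Bernstein-basis identity yields
\[
    g_{q,L,\ell}''(w) \;=\; L(L-1)\sum_{k=0}^{L-2}\binom{L-2}{k}w^k(1-w)^{L-2-k}\,\Delta^2 T_k,
    \qquad \Delta^2 T_k \coloneqq T_{k+2}-2T_{k+1}+T_k.
\]
A sufficient condition for $g_{q,L,\ell}$ to be convex on $[0,1]$ is therefore discrete convexity of the sequence $(T_k)_{k=0}^L$, i.e.\ $\Delta^2 T_k\ge 0$ for every $k\in\{0,\dots,L-2\}$.

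To get a handle on $\Delta^2 T_k$, I would set up the following coupling. Fix $L-2$ ``shared'' samples $S_1,\dots,S_{L-2}$, of which $k$ are i.i.d.\ uniform on $A$ and $L-2-k$ are i.i.d.\ uniform on $B$; let $U_1,U_2$ be i.i.d.\ uniform on $A$ and $V_1,V_2$ be i.i.d.\ uniform on $B$, all jointly independent. Then $\Delta^2 T_k$ equals the expectation of the alternating sum
\[
    \plur_\ell(\vec S,U_1,U_2)-\plur_\ell(\vec S,U_1,V_2)-\plur_\ell(\vec S,V_1,U_2)+\plur_\ell(\vec S,V_1,V_2).
\]
The task thus reduces to showing that this mixed second difference of $\plur_\ell$, averaged over the uniform marginals on $A$ and on $B$, is non-negative for every realisation of $\vec S$.

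The main obstacle is proving this pointwise non-negativity, and it is here that the hypothesis $\ell\ge 2$ is expected to come into play (recall that the analogous claim fails globally for $\ell=1$, as reflected in \Cref{thm:convex-gql}). My plan would be to case-split on the ``type'' of $\vec S$, in particular on how the subset $\Sigma\in\binom{[q]}{\ell}$ achieving the top-$\ell$ plurality of $\vec S$ decomposes across $A$ and $B$. The symmetries within $A$ and within $B$, which are preserved both by the uniform marginals of $U_i,V_j$ and by $P_{q,\ell,w}$ itself, collapse the verification to a small catalogue of canonical configurations parameterised by these splits. In each case one compares the four values $\plur_\ell(\vec S,\cdot,\cdot)$ evaluated on $(U_1,U_2),(U_1,V_2),(V_1,U_2),(V_1,V_2)$ and checks that the alternating sum is non-negative on average. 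I expect the Schur convexity of $f_{q,L,\ell}$ (\Cref{thm:schur-convex-fqll}) to be a key ingredient, since it allows one to replace certain averages over uniform distributions on $A$ and $B$ by their values at extremal points, dominating the negative contributions. The genuinely delicate cases are those in which the optimal $\ell$-subset of $\vec S$ straddles $A$ and $B$, because a single coordinate change from $B$ to $A$ can then swap the optimal $\ell$-subset and cause $\plur_\ell$ to ``jump''; handling these phase transitions through a careful combinatorial accounting will be the hardest step.
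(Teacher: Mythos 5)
Your Bernstein-polynomial reformulation is a valid and equivalent packaging of what the paper does. Conditioning on the binomial count $K$ of ``$A$-samples'' and differentiating the Bernstein basis reduces convexity to $\Delta^2 T_k\ge 0$, and your coupling correctly identifies $\Delta^2 T_k$ (for a fixed realization $\vec S$ of shared samples) as exactly the quantity $G_\ell(\va)$ that the paper introduces via \Cref{eq:exp-gqL''}, where $\va$ is the type of $\vec S$: the $(U_1,U_2)$ term averages to $\frac{1}{(q-\ell)^2}\sum_{i,j\le q-\ell}\maxl\{\va+\ve_i+\ve_j\}$, the cross terms to $\frac{1}{(q-\ell)\ell}\sum\maxl\{\va+\ve_i+\ve_j\}$, and so on. So your reduction lands precisely on the same crux as the paper.

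However, the proposal has a genuine gap: you identify the hard step but do not carry it out. The entire combinatorial content of the theorem is in proving the pointwise non-negativity of the mixed second difference, and here you only gesture at ``case-split on the type of $\vec S$,'' ``a small catalogue of canonical configurations,'' and ``careful combinatorial accounting.'' The paper's argument at this step is actually quite short and deserves to be stated: fix an index $j$ with $a_j$ equal to the $\ell$-th largest entry of $\va$, set $S=\{i:a_i\ge a_j\}$ (so $|S|\ge\ell$), split $S=S_1\sqcup S_2$ with $S_1=S\cap[q-\ell]$ and $S_2 = S\cap\{q-\ell+1,\dots,q\}$, write $r_1=|S_1|$, $r_2=|S_2|$. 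Since each $\ve$-increment changes $\maxl$ by $0$, $1$, or $2$ depending on membership in $S$, one computes the three sums directly: the $AA$ sum is $\ge 2r_1(q-\ell)$, the $AB$ sum is exactly $r_1\ell+r_2(q-\ell)$, and the $BB$ sum is $\ge 2r_2\ell$. Plugging these into $G_\ell(\va)$ makes the positive and negative contributions cancel exactly to $0$, giving $G_\ell(\va)\ge 0$. Two further points: (i) your expectation that Schur convexity of $f_{q,L,\ell}$ is a key ingredient is misplaced — the paper's proof of $G_\ell(\va)\ge 0$ is a self-contained counting argument that does not invoke Schur convexity at all; and (ii) the ``delicate straddling'' cases you anticipate are handled uniformly by the $r_1,r_2$ bookkeeping and do not require a separate case analysis, which is precisely why $\ell\ge 2$ simplifies the argument relative to the $\ell=1$ case. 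Without the actual bounds on the three sums, the proposal does not constitute a proof.
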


\begin{proof}
See \Cref{sec:proof-convex-gqll}. 
\end{proof}

Define 
\begin{align}
p_*(q,\ell,L,w) &\coloneqq 1 - \frac{1}{L} g_{q,L,\ell}(w) . \label{eqn:def-p-star-q-ell-l-w} 
% = 1 - \frac{1}{L} \exptover{(X_1,\dots,X_L)\sim\paren{\frac{w}{q-1},\dots,\frac{w}{q-1},1-w}^{\ot L}}{\plur(X_1,\dots,X_L)} . \notag
\end{align}
% When $ w = (q-\ell)/q $, we denote 
% \begin{align}
%     p_*(q,\ell,L) &\coloneqq p_*\paren{q,\ell,L,\frac{q-\ell}{q}} \label{eqn:def-p-star-q-ell-l}
% \end{align}
% for convenience. 

\begin{theorem}[Plotkin bound for list-recovery]
\label{thm:plotkin-list-rec-main}
Let $ \cC\subset[q]^n $ be an arbitrary $ (p,\ell,L)_q $-list-recoverable code with $ p = p_*\paren{q,\ell,L,\frac{q-\ell}{q}} + \tau $ for any constant $ \tau\in(0,1) $. 
Then there exists a constant $ M_* = M_*(q,\ell,\tau) $ independent of $n$ such that $ |\cC|\le M_* $. 
This implies, in particular, 
\begin{align}
    p_*(q,\ell,L) &\le p_*\paren{q,\ell,L,\frac{q-\ell}{q}} = 1 - \frac{1}{L} g_{q,L,\ell}\paren{\frac{q-\ell}{q}} . \notag
\end{align}
\end{theorem}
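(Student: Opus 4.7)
\emph{Overview and preprocessing.} The plan is to mimic the structure of the Plotkin bound for $q$-ary list-decoding (\Cref{thm:plotkin-qary-list-dec-main}), substituting at each step the Schur convexity of $f_{q,L,\ell}$ (\Cref{thm:schur-convex-fqll}) and the convexity of $g_{q,L,\ell}$ on $[0,1]$ (\Cref{thm:convex-gqll}) for their list-decoding counterparts. The overall strategy is to pass from $\cC$ to a highly structured subcode and then extract a contradiction via a double-counting identity on the $\ell$-average radius (\Cref{def:avg-rad-list-rec}). First, after relabelling $[q]$ so that the average $\ell$-weight (\Cref{def:metric-list-rec}) of the code is at most $(q-\ell)/q$, bucket $\cC$ by $\wtlr$ into $O(1/\delta)$ windows of width $\delta n$ for a small $\delta\in(0,\tau)$; pigeonhole extracts $\cC_1\subseteq\cC$ of size $\Omega_\delta(|\cC|)$ whose codewords all have $\ell$-weight within $\delta n$ of a common value $wn$ with $w\in[0,(q-\ell)/q+\delta]$. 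Second, since the (discretized) joint type of an $L$-subset of codewords lies in a set of only polynomial size, an $L$-uniform hypergraph Ramsey argument carves out of $\cC_1$ an \emph{equi-coupled} subcode $\cC'\subseteq\cC_1$ of any desired constant size $N=N(q,\ell,L,\tau)$, provided $|\cC|$ exceeds the corresponding Ramsey threshold. Averaging over $S_L$ we may further assume the common joint type $P_{X_1,\dots,X_L}\in\Delta([q]^L)$ is symmetric under permutations.

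\emph{Double-counting and applying (Schur-)convexity.} Let $P_k\in\Delta([q])$ denote the empirical marginal of $\cC'$ at coordinate $k$. For $N$ large, the $k$-th coordinates of a uniformly random $L$-subset of $\cC'$ look like $L$ IID samples from $P_k$, so the equi-coupled structure yields
\begin{align*}
\exptover{\cL\sim\binom{\cC'}{L}}{\frac{1}{n}\ol{\rad}_\ell(\cL)} &= 1 - \frac{1}{Ln}\sum_{k=1}^n f_{q,L,\ell}(P_k) + o_N(1).
\end{align*}
By \Cref{thm:schur-convex-fqll}, for each $k$ with light-half mass $w_k\coloneqq P_k([q-\ell])$ the distribution $P_{q,\ell,w_k}$ is majorized by $P_k$, so $f_{q,L,\ell}(P_k)\ge g_{q,L,\ell}(w_k)$. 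By \Cref{thm:convex-gqll} (for $\ell\ge 2$ convexity holds on all of $[0,1]$; the $\ell=1$ case is handled by combining \Cref{thm:convex-gql} with \Cref{prop:mono-gqL}, which is why the constraint $w\le(q-\ell)/q$ was imposed), averaging over $k$ gives $\frac{1}{n}\sum_k g_{q,L,\ell}(w_k)\ge g_{q,L,\ell}(\bar w)$ with $\bar w=\frac{1}{n}\sum_k w_k=w+O(\delta)$. Since $P_{q,\ell,(q-\ell)/q}$ is the uniform distribution and hence is majorized by every $P_{q,\ell,w}$, the function $g_{q,L,\ell}$ attains its minimum at $w=(q-\ell)/q$, and chaining the inequalities yields
\begin{align*}
\exptover{\cL}{\frac{1}{n}\ol{\rad}_\ell(\cL)} &\le 1 - \frac{1}{L}g_{q,L,\ell}\paren{\frac{q-\ell}{q}} + O(\delta) + o_N(1) = p_*\paren{q,\ell,L,\frac{q-\ell}{q}} + O(\delta) + o_N(1).
\end{align*}

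\emph{Lower bound, contradiction, and main obstacle.} Conversely, list-recoverability forces $\rad_\ell(\cL)>pn$ for every $L$-subset $\cL$. In a symmetric equi-coupled subcode, the coordinatewise choice $\vcY^\star=\argplur_\ell(x_{1,\cdot},\dots,x_{L,\cdot})$ in \Cref{eqn:def-rad-lr} is an $S_L$-equivariant function of the $L$-tuple, so by exchangeability of the joint type the distances $\distlr(\vx_i,\vcY^\star)$ coincide across $i\in[L]$ up to fluctuations $o_N(n)$ once averaged over $\cL$, giving $\max_i\approx\avg_i$ and hence $\exptover{\cL}{\ol{\rad}_\ell(\cL)/n}\ge p - o_N(1) = p_*\paren{q,\ell,L,\frac{q-\ell}{q}}+\tau - o_N(1)$. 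For $\delta<\tau/4$ and $N$ large enough, this contradicts the upper bound, forcing $|\cC|\le M_*(q,\ell,L,\tau)$. The main obstacle is the quantitative control required in this last step: the $\argplur_\ell$ operator can switch discontinuously between candidate $\ell$-subsets when ties occur between the top candidates, and list-recovery affords extra combinatorial flexibility in choosing $\vcY$ coordinatewise. The Ramsey discretization and the $S_L$-symmetrization must be calibrated finely enough to guarantee that only a negligible fraction of coordinates are ``ambiguous,'' so that the gap between $\ol{\rad}_\ell(\cL)$ and $\rad_\ell(\cL)$ genuinely vanishes for typical $L$-subsets.
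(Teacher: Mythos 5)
Your proposal follows essentially the same route as the paper: bucket by $\ell$-weight, pass to a constant-weight subcode, apply hypergraph Ramsey to extract an equi-coupled subcode, argue approximate symmetry of the common joint type, and then close with a double-counting argument on the $\ell$-average radius whose upper bound uses the Schur convexity of $f_{q,L,\ell}$ and the convexity of $g_{q,L,\ell}$ on $[0,1]$, and whose lower bound uses the approximate symmetry to upgrade the average radius to a genuine radius bound and invoke list-recoverability. This is exactly the skeleton of \Cref{thm:plotkin-list-rec} and \Cref{cor:plotkin-list-rec-unconstr}.

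Two points deserve more care. First, the sentence ``Averaging over $S_L$ we may further assume the common joint type $P_{X_1,\dots,X_L}$ is symmetric under permutations'' is not justified as stated: the joint type is a property of the actual codewords and cannot simply be symmetrized by fiat. What the paper proves (cf.\ the Koml\'os-type result in \Cref{eqn:komlos}, due to Blinovsky and Bondaschi--Dalai) is that whenever the equi-coupled subcode is large enough, the type is \emph{forced} to be approximately symmetric, with a quantitative bound of the form $O\bigl(L^3\sqrt{L/M} + L^3\sqrt{q^{L-2}\eps_2}\bigr)$. This step is not free; it is a nontrivial combinatorial input, and your proposal glosses over it.

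Second, your ``main obstacle'' paragraph is largely a red herring. The ambiguity of $\argplur_\ell$ under ties poses no real difficulty: the $\ell$-average radius $\ol{\rad}_\ell$ is unambiguously defined regardless of the tiebreak, and the distances $\distlr(\vx_i,\vcY^\star)$ are all computed against the same fixed $\vcY^\star$ for a given list, so the argument only needs that $\vcY^\star_k$ depends on the multiset $\{x_{1,k},\dots,x_{L,k}\}$ through some deterministic rule. The genuine subtlety is precisely the step from $\ol{\rad}_\ell$ to $\rad_\ell$, which is controlled by the approximate symmetry of the joint type bounded in the paper's \Cref{eqn:lr-dist-close}, not by controlling a set of ``ambiguous coordinates.'' Once those two imprecisions are filled in along the paper's lines, the proof goes through.
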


\begin{proof}
The proof structure is similar to that of \Cref{thm:plotkin-qary-list-dec-main}. 
We first prove the analogous the statement for almost constant-weight codes (in which all codewords have approximately the same list-recovery weight) in \Cref{thm:plotkin-list-rec} and then pass to general codes by weight partitioning (cf.\ \Cref{cor:plotkin-list-rec-unconstr}). 
Since the technical proofs bear many similarities to those in the list-decoding case, we only present proof sketches in \Cref{sec:plotkin-list-rec}. 
\end{proof}

To complement \Cref{thm:plotkin-list-rec-main}, we prove in \Cref{sec:list-rec-lb} the following lower bound on the $(p,\ell,L)_q$-list-recovery capacity. 
To the best of our knowledge, this is the first bound for list-recovery with $q,\ell,L$ all being \emph{constants} (independent of $p$ and $n$). 
We believe that improving it likely requires novel techniques beyond expurgation. 

\begin{theorem}
\label{thm:lb-listrec}
The following lower bound on the $(p,\ell,L)_q$-list-recovery capacity holds:
\begin{align}
C_{(p,\ell,L)_q} &\ge \frac{L}{L-1} - \frac{1}{L-1}\curbrkt{\lambda_*p + \log_q\sqrbrkt{\sum_{\va\in\cA_{q,L}} \binom{L}{\va} \exp_q\paren{-\lambda_*\paren{1 - \frac{1}{L}\maxl\curbrkt{\va}}}}}, \notag 
\end{align}
where $ \lambda_*=  \lambda_*(q,\ell,L,p) $ is the solution to the following equation
\begin{align}
p &= \frac{\sum\limits_{\va\in\cA_{q,L}} \binom{L}{\va} \exp_q\paren{-\lambda_*\paren{1 - \frac{1}{L}\maxl\curbrkt{\va}}} \paren{1 - \frac{1}{L}\maxl\curbrkt{\va}}}{\sum\limits_{\va\in\cA_{q,L}} \binom{L}{\va} \exp_q\paren{-\lambda_*\paren{1 - \frac{1}{L}\maxl\curbrkt{\va}}}}. \notag 
\end{align}
\end{theorem}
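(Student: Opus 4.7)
The plan is to establish the bound via a random-code-with-expurgation argument applied to the stronger notion of \emph{average-radius} $(p,\ell,L)_q$-list-recoverability. Recall from \Cref{def:avg-rad-list-rec} that $\ol{\rad}_\ell(\vx_1,\dots,\vx_L) = \sum_{j=1}^n(1 - \tfrac{1}{L}\plur_\ell(x_{1,j},\dots,x_{L,j}))$ lower-bounds $\rad_\ell(\vx_1,\dots,\vx_L)$, so any code $\cC$ in which every $L$-subset satisfies $\ol{\rad}_\ell > pn$ is automatically $(p,\ell,L)_q$-list-recoverable. I will construct such a code at any rate strictly below the claimed bound. Draw $M = q^{nR}$ codewords independently and uniformly from $[q]^n$; for $L$ such codewords, the per-coordinate variables $Z_j \coloneqq 1 - \tfrac{1}{L}\plur_\ell(x_{1,j},\dots,x_{L,j})$ are i.i.d.\ across $j \in [n]$, with the column type $\va \in \cA_{q,L}$ arising with probability $q^{-L}\binom{L}{\va}$ and contributing $Z_j = 1 - \tfrac{1}{L}\maxl\{\va\}$.

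A standard Cramér--Chernoff bound (Markov's inequality applied to $\exp_q(-\lambda\sum_j Z_j)$ for $\lambda \ge 0$) then yields
\begin{align*}
\Pr\sqrbrkt{\ol{\rad}_\ell(\vx_1,\dots,\vx_L) \le pn}
\le \exp_q\paren{n\sqrbrkt{\lambda p - L + \log_q \sum_{\va\in\cA_{q,L}}\binom{L}{\va}\exp_q\paren{-\lambda\paren{1 - \tfrac{1}{L}\maxl\{\va\}}}}}.
\end{align*}
Let $B$ denote the number of $L$-subsets of the random code violating $\ol{\rad}_\ell > pn$. Linearity of expectation gives $\mathbb{E}[B] \le \binom{M}{L}\Pr[\ol{\rad}_\ell \le pn]$, which is at most $M/4$ whenever
\begin{align*}
(L-1)R \;\le\; L - \lambda p - \log_q\sum_{\va\in\cA_{q,L}}\binom{L}{\va}\exp_q\paren{-\lambda\paren{1 - \tfrac{1}{L}\maxl\{\va\}}} - o(1).
\end{align*}

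By Markov's inequality, with positive probability a random code has $B \le M/2$; deleting one codeword per bad $L$-subset produces a subcode of size $M/2$ that satisfies the average-radius condition, hence is $(p,\ell,L)_q$-list-recoverable. Maximizing the right-hand side over $\lambda \ge 0$ is a standard convex optimization: differentiating and setting the derivative to zero reproduces exactly the first-order condition defining $\lambda_*$ in the theorem statement, and substituting back recovers the claimed lower bound on $C_{(p,\ell,L)_q}$. The only non-routine point is to confirm that the optimum is attained at some $\lambda_* \ge 0$; this holds precisely when $p \le \mathbb{E}[Z_1]\bigr|_{\lambda = 0} = q^{-L}\sum_\va \binom{L}{\va}(1 - \tfrac{1}{L}\maxl\{\va\}) = p_*(q,\ell,L)$, i.e., below the zero-rate threshold pinned down by \Cref{thm:plotkin-list-rec-main}. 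The negligible contribution from $L$-tuples of coinciding codeword indices (at most $O(M^{L-1})$ of them) is absorbed in the polynomial slack and does not affect the asymptotic rate.
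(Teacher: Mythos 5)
Your proof is correct and follows essentially the same approach as the paper: a random code with expurgation applied to the stronger average-radius criterion, with the exponent computed via the moment generating function of the per-coordinate radius increment. The only cosmetic differences are that you derive the large-deviation exponent from an explicit Chernoff/Markov bound rather than invoking Cramér's theorem as a black box, and you spell out the expurgation step (Markov to get $B\le M/2$ with positive probability, then delete one codeword per bad $L$-subset), whereas the paper states it more tersely; neither difference changes the substance of the argument.
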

Similar to the list-decoding case (\Cref{thm:bli-lb-listdec}), the above lower bound vanishes at 
\begin{align}
    p % = 1 - \frac{1}{L} g_{q,L,\ell}\paren{\frac{q-\ell}{q}} 
    = q^{-L} \sum_{\va\in\cA_{q,L}} \binom{L}{\va} \paren{1 - \frac{1}{L}\maxl\{\va\}} , \notag
\end{align}
and the corresponding $\lambda_*$ equals $0$. 

\Cref{thm:plotkin-list-rec-main,thm:lb-listrec} jointly determine the value of $ p_*(q,\ell,L) $ shown in the corollary below. 
\begin{corollary}
\label{cor:plotkin-pt-listrec}
The zero-rate threshold $p_*(q,\ell,L)$ for $ (p,\ell,L)_q $-list-recovery is given by 
\begin{align}
    p_*(q,\ell,L) &= p_*\paren{q,\ell,L,\frac{q-\ell}{q}} = 1 - \frac{1}{L} g_{q,L,\ell}\paren{\frac{q-\ell}{q}}
    = q^{-L} \sum_{\va\in\cA_{q,L}} \binom{L}{\va} \paren{1 - \frac{1}{L}\maxl\{\va\}} . \label{eqn:plotkin-pt-listrec}
\end{align}
\end{corollary}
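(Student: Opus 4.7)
The corollary is the payoff of combining the two main achievements of the paper, and the plan is to derive it by a direct comparison of the expressions appearing in Theorems \ref{thm:plotkin-list-rec-main} and \ref{thm:lb-listrec}. The key observation driving everything is that the distribution $P_{q,\ell,(q-\ell)/q}\in\Delta([q])$ defined in \Cref{def:p-q-ell-rho} is exactly the uniform distribution on $[q]$: for $1\le i\le q-\ell$ one has $P_{q,\ell,(q-\ell)/q}(i) = \frac{(q-\ell)/q}{q-\ell} = 1/q$, and for $q-\ell+1\le i\le q$ one has $\frac{1 - (q-\ell)/q}{\ell} = \frac{\ell/q}{\ell} = 1/q$. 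Hence $g_{q,L,\ell}\!\left(\frac{q-\ell}{q}\right) = f_{q,L,\ell}(\unif([q]))$.

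The first step in my plan is therefore to expand the expectation $f_{q,L,\ell}(\unif([q]))$ as a sum over types. Writing $\va\in\cA_{q,L}$ for the possible empirical-type vectors of $L$-tuples in $[q]^L$ and $\binom{L}{\va}$ for the corresponding multinomial coefficient, every $(x_1,\dots,x_L)\in[q]^L$ of type $\va$ contributes $\maxl\{\va\} = \plur_\ell(x_1,\dots,x_L)$ to the sum. Since $\sum_{\va\in\cA_{q,L}}\binom{L}{\va} = q^L$, this gives
\begin{align*}
    1 - \frac{1}{L}\,g_{q,L,\ell}\!\left(\tfrac{q-\ell}{q}\right)
    &= 1 - \frac{1}{L}\,q^{-L}\!\!\sum_{\va\in\cA_{q,L}}\binom{L}{\va}\maxl\{\va\}
    = q^{-L}\!\!\sum_{\va\in\cA_{q,L}}\binom{L}{\va}\!\left(1 - \tfrac{1}{L}\maxl\{\va\}\right),
\end{align*}
which establishes the last equality in \Cref{eqn:plotkin-pt-listrec} and identifies the Plotkin point with the closed-form expression appearing in the statement.

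Next, I would combine the two bounds. Theorem \ref{thm:plotkin-list-rec-main} already delivers the upper bound $p_*(q,\ell,L)\le p_*(q,\ell,L,\tfrac{q-\ell}{q}) = 1 - \tfrac{1}{L}g_{q,L,\ell}\!\left(\tfrac{q-\ell}{q}\right)$. For the matching lower bound, I would use Theorem \ref{thm:lb-listrec}: the remark immediately following that theorem identifies the value of $p$ at which the lower bound on $C_{(p,\ell,L)_q}$ vanishes (corresponding to $\lambda_* = 0$) as $q^{-L}\sum_{\va}\binom{L}{\va}(1 - \tfrac{1}{L}\maxl\{\va\})$. Since the lower bound on $C_{(p,\ell,L)_q}$ is continuous and strictly positive for any $p$ strictly below this vanishing point, one has $p_*(q,\ell,L)\ge q^{-L}\sum_{\va}\binom{L}{\va}(1 - \tfrac{1}{L}\maxl\{\va\})$ by definition of the zero-rate threshold.

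There is no genuine obstacle: the entire argument is an unpacking of definitions combined with the two main theorems. The only mildly delicate point is verifying that the vanishing location of the expurgation bound in Theorem \ref{thm:lb-listrec} coincides symbol-for-symbol with the expectation $\E_{\unif}[\plur_\ell]$ arising from $g_{q,L,\ell}(\tfrac{q-\ell}{q})$, but this is immediate once one notes that setting $\lambda_* = 0$ collapses the numerator in the theorem's equation for $p$ to a uniform average over $L$-tuples, weighted by the multinomial $\binom{L}{\va}$ for each type $\va$. Chaining the upper and lower bounds then sandwiches $p_*(q,\ell,L)$ at the claimed value, completing the proof of the corollary.
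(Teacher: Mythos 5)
Your proof is correct and matches the paper's (implicit) approach: the paper states that \Cref{thm:plotkin-list-rec-main,thm:lb-listrec} jointly determine $p_*(q,\ell,L)$, and your argument makes that derivation explicit by noting that $P_{q,\ell,(q-\ell)/q} = \unif([q])$, expanding $f_{q,L,\ell}(\unif([q]))$ via the multinomial theorem, and sandwiching the threshold using the upper bound from the Plotkin theorem and the vanishing point ($\lambda_* = 0$) of the expurgation lower bound.
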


From now on, we use $p_*(q,\ell,L)$ to refer to the same quantity as the RHS of \Cref{eqn:plotkin-pt-listrec}. 

\begin{theorem}[Elias--Bassalygo bound for list-recovery]
\label{thm:eb-list-rec-main}
Suppose $ p<p_*(q,\ell,L) $. 
Then the $ (p,\ell,L)_q $-list-recovery capacity can be upper bounded as $ C_{(p,\ell,L)_q}\le1 - H_{q,\ell}(w_{q,\ell,L}) $ where $ w_{q,\ell,L} $ is the solution to the equation $ p_*(q,\ell,L,w) = p $ in $ w\in[0,(q-\ell)/q] $. 
\end{theorem}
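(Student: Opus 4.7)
The plan is to adapt the classical Elias--Bassalygo template to the list-recovery setting, combining a covering of $[q]^n$ by list-recovery balls with the almost-constant-weight Plotkin bound (\Cref{thm:plotkin-list-rec}) already invoked in the proof of \Cref{thm:plotkin-list-rec-main}. Given $p<p_*(q,\ell,L)$ and a small $\tau>0$, I set $r \coloneqq w_{q,\ell,L} - \tau$, which lies in $[0,(q-\ell)/q)$. The Schur convexity of $f_{q,L,\ell}$ (\Cref{thm:schur-convex-fqll}), combined with the observation that $P_{q,\ell,w_1}$ majorizes $P_{q,\ell,w_2}$ whenever $w_1\le w_2\le(q-\ell)/q$ (which can be checked by comparing prefix sums of the two sorted vectors), implies that $g_{q,L,\ell}$ is non-increasing on $[0,(q-\ell)/q]$; consequently $w\mapsto p_*(q,\ell,L,w)$ is non-decreasing there, so $p_*(q,\ell,L,w')\le p-\tau'$ for every $w'\in[0,r]$ and some $\tau'=\tau'(\tau)>0$.

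First I will cover $[q]^n$ by a collection $\{\blr(\vcY^{(j)},rn)\}_{j=1}^N$ of list-recovery balls, with $N\le\mathrm{poly}(n)\cdot q^{n(1-H_{q,\ell}(r))}$, via a standard greedy (or random) covering argument using the volume estimate $|\blr(\vcY,rn)|\ge q^{n H_{q,\ell}(r)-o(n)}$ (which in turn is obtained from $|\blr(\vcY,rn)|=\sum_{k\le rn}\binom{n}{k}(q-\ell)^k\ell^{n-k}$ by isolating the $k=rn$ term and using the binomial-entropy approximation, valid for $r<(q-\ell)/q$). Next, for each $j$ I will consider the subcode $\cC_j\coloneqq\cC\cap\blr(\vcY^{(j)},rn)$. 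Applying a coordinate-wise permutation of $[q]$ that maps $\cY^{(j)}_i$ to $\{q-\ell+1,\ldots,q\}$ for each $i\in[n]$ both preserves the $(p,\ell,L)_q$-list-recoverability of $\cC_j$ and ensures that all of its codewords have list-recovery weight at most $rn$ with respect to the distinguished tuple $\vcY_\ell$. Partitioning $\cC_j$ into the at most $rn+1$ exact-weight classes and applying the almost-constant-weight Plotkin bound \Cref{thm:plotkin-list-rec} (with gap $\tau'$) to each class yields $|\cC_j|\le(rn+1)M_*$, where $M_*=M_*(q,\ell,L,\tau')$ is a constant independent of $n$.

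Combining the two estimates gives $|\cC|\le\sum_{j=1}^N|\cC_j|\le\mathrm{poly}(n)\cdot q^{n(1-H_{q,\ell}(r))}$, so $R(\cC)\le 1-H_{q,\ell}(r)+o(1)$. Letting $n\to\infty$ first and then $\tau\to 0$, the continuity of $H_{q,\ell}$ and of $w\mapsto p_*(q,\ell,L,w)$ yields the claimed bound $C_{(p,\ell,L)_q}\le 1-H_{q,\ell}(w_{q,\ell,L})$. The main subtlety I expect is the reduction step from an arbitrary ball-center $\vcY^{(j)}$ to the distinguished tuple $\vcY_\ell$: unlike in the Hamming-metric case there is no group translation-invariance to exploit, but the coordinate-wise action of the symmetric group $S_q$ on the alphabet puts list-recovery balls in bijection while preserving all relevant parameters (distances, weights, and list-recoverability), which is just enough to make the reduction go through. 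A secondary worry is keeping the accounting tidy so that all the overhead factors ($\mathrm{poly}(n)$ from the covering, $rn+1$ from the weight partition, and the Plotkin constant $M_*$) are absorbed into the $o(1)$ correction to the rate; this is immediate since each is at most polynomial in $n$.
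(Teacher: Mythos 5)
Your proposal is correct and follows essentially the same route as the paper's proof of \Cref{thm:eb-list-rec} (which \Cref{thm:eb-list-rec-main} reduces to): cover $[q]^n$ by list-recovery balls via the covering lemma, re-center each subcode by a coordinate-wise alphabet permutation to turn list-recovery-distance-to-the-center into list-recovery weight, invoke the almost-constant-weight Plotkin bound \Cref{thm:plotkin-list-rec} on each weight class, and take $\tau\to 0$. The only cosmetic deviations are that you parametrize the slack by shrinking the covering radius ($r=w_{q,\ell,L}-\tau$) whereas the paper shrinks $p$ (solving $p_*(q,\ell,L,w)=p-\tau$), and you partition into exact-weight classes instead of $\eps_1$-wide annuli; both choices are equivalent up to polynomial-in-$n$ overhead absorbed in the $o(1)$.
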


\begin{proof}
Parallel to \Cref{thm:eb-qary-list-dec-main}, the above theorem is immediately implied by a finite-blocklength version \Cref{thm:eb-list-rec} (analogous to \Cref{thm:eb-qary-list-dec}) whose full proof is presented in \Cref{sec:eb-list-rec-pf}. 
\end{proof}

\subsection{Expressions for $f_{q,L,\ell}$, $g_{q,L,\ell}$ and their derivatives}

Before concluding this section, we collect representations for $f_{q,L,\ell}$, $g_{q,L,\ell}$ and their derivatives which will be useful later on. We require the following notations. 

% \begin{notation}
    For integers $q\ge1,m\ge0$,
    \begin{align}
        \cA_{q,m} &\coloneqq \curbrkt{(a_1,\dots,a_q)\in\bbZ_{\ge0}^q:\sum_{i = 1}^q a_i = m} . \notag 
    \end{align}
    For $\va = (a_1,\dots,a_q) \in \cA_{q,m}$, we often simplify notation for the multinomial coefficient and write 
    \begin{align}
        \binom{m}{\va} &\coloneqq \binom{m}{a_1,\dots,a_m} . \notag
    \end{align}
    Next, for any $\va \in \mathbb R^q$, let 
    \begin{align}
        \maxl\curbrkt{\va} &\coloneqq \max_{\cQ\in\binom{[q]}{\ell}} \sum_{i\in \cQ} a_i . \notag 
    \end{align}
    denote the largest partial sum of $\ell$ coordinates from $ \va $.
    
    Lastly, for $i \in [q]$, $\ve_i$ denotes the length-$q$ vector with a $1$ in the $i$-th coordinate and $0$'s elsewhere. 
% \end{notation}

\begin{lemma} \label{lemma:derivs}
    We have, for all $1\le\ell\le q-1$, $P = (p_1,\dots,p_q) \in \Delta([q])$ and $j,k \in [q]$: 
    \begin{align}
        &f_{q,L,\ell}(P) = \sum_{\va \in \cA_{q,L}} \binom{L}{\va} \maxl\{\va\} \paren{\prod_{i=1}^{q} p_i^{a_i}} \ ; \label{eq:fqL-exp} \\
        &\frac{\partial}{\partial p_j}f_{q,L,\ell}(P) = L \sum_{\va \in \cA_{q,L}} \binom{L}{\va} \maxl\{\va + \ve_j\} \paren{\prod_{i=1}^{q} p_i^{a_i}} \ ; \label{eq:fqL'-exp}\\
        &\frac{\partial^2}{\partial p_k \partial p_j} f_{q,L,\ell}(p_1,\dots,p_q) = L(L-1) \sum_{\va \in \cA_{q,L-2}} \binom{L-2}{\va} \maxl\{\va + \ve_j + \ve_k\} \paren{\prod_{i=1}^{q} p_i^{a_i}} \ . \label{eq:fqL''-exp} 
    \end{align}
    Furthermore, defining 
    \begin{align*}
        G_\ell(\va) &\coloneqq \frac{1}{(q-\ell)^2}\sum_{1\leq i,j \leq q-\ell}\maxl\{\va+\ve_i+\ve_j\} - \frac{2}{(q-\ell)\ell}\sum_{i=1}^{q-\ell}\sum_{j=q-\ell+1}^q\maxl\{\va+\ve_i+\ve_j\} \\
     &\quad +\frac{1}{\ell^2}\sum_{q-\ell+1\leq i,j \leq q}\maxl\{\va+\ve_i+\ve_j\} \ ,
    \end{align*}
    we have 
    \begin{align}
        g_{q,L,\ell}''(w) = L(L-1)\paren{\frac{w}{q-\ell}}^{L-2}\sum_{\va \in \cA_{q,L-2}} \binom{L-2}{\va}\paren{\frac{(q-\ell)(1-w)}{\ell w}}^{a_{q-\ell+1}+\cdots+a_q}G_\ell(\va) \ .\label{eq:exp-gqL''}
    \end{align}
\end{lemma}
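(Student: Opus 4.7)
The plan is to establish each identity in order, reducing the derivative formulas to \Cref{eq:fqL-exp} via termwise differentiation and a standard reindexing of multinomial coefficients, and then obtaining the formula for $g_{q,L,\ell}''$ by the chain rule.

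First, I would prove \Cref{eq:fqL-exp} by expanding the expectation defining $f_{q,L,\ell}(P)$ as a $q^L$-term sum over tuples in $[q]^L$, and grouping according to the type $\va \in \cA_{q,L}$ of $(X_1,\dots,X_L)$. Since $\plur_\ell$ depends only on the type and equals $\maxl\{\va\}$, the number of tuples of type $\va$ is $\binom{L}{\va}$, and each such tuple occurs with probability $\prod_i p_i^{a_i}$, the stated expression is immediate.

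Second, for \Cref{eq:fqL'-exp} and \Cref{eq:fqL''-exp}, I would differentiate \Cref{eq:fqL-exp} termwise. Using $\frac{\partial}{\partial p_j}\prod_i p_i^{a_i} = a_j p_j^{a_j-1}\prod_{i\neq j} p_i^{a_i}$ together with the identity $a_j\binom{L}{\va} = L\binom{L-1}{\va - \ve_j}$ (valid whenever $a_j \geq 1$, with the other terms vanishing), the substitution $\va \leftrightarrow \va + \ve_j$ reindexes the sum to produce a multinomial expression in the desired form, with the factor of $L$ pulled out. Repeating the same step in $p_k$ and the analogous reindexation $\va \leftrightarrow \va + \ve_k$ yields \Cref{eq:fqL''-exp}, with the prefactor $L(L-1)$ arising from the two reindexing steps.

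Third, for \Cref{eq:exp-gqL''}, I would apply the chain rule to $g_{q,L,\ell}(w) = f_{q,L,\ell}(P_{q,\ell,w})$. Since $P_{q,\ell,w}$ is affine in $w$ with $\frac{dp_i}{dw} = \frac{1}{q-\ell}$ for $i \leq q-\ell$ and $\frac{dp_i}{dw} = -\frac{1}{\ell}$ for $i \geq q-\ell+1$, and all second derivatives $\frac{d^2 p_i}{dw^2}$ vanish, the chain rule gives
\begin{align*}
g_{q,L,\ell}''(w) = \sum_{j,k \in [q]} \frac{dp_j}{dw}\frac{dp_k}{dw}\,\frac{\partial^2 f_{q,L,\ell}}{\partial p_j \partial p_k}(P_{q,\ell,w}).
\end{align*}
Splitting the $(j,k)$-sum according to which of the two blocks $\{1,\dots,q-\ell\}$ and $\{q-\ell+1,\dots,q\}$ each index lies in produces exactly the three sums with coefficients $\frac{1}{(q-\ell)^2}$, $-\frac{2}{(q-\ell)\ell}$, and $\frac{1}{\ell^2}$ that define $G_\ell(\va)$. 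Substituting \Cref{eq:fqL''-exp} and evaluating the monomial at $P_{q,\ell,w}$ gives $\prod_i p_i^{a_i} = \bigl(\tfrac{w}{q-\ell}\bigr)^{L-2-s}\bigl(\tfrac{1-w}{\ell}\bigr)^{s}$ with $s = a_{q-\ell+1}+\cdots+a_q$; factoring out $\bigl(\tfrac{w}{q-\ell}\bigr)^{L-2}$ produces the remaining factor $\bigl(\tfrac{(q-\ell)(1-w)}{\ell w}\bigr)^{s}$, yielding the claimed formula. No real conceptual obstacle appears; the only care required is the bookkeeping of index shifts during differentiation and the separation of the chain-rule contributions into the three index-block cases so that they collate into $G_\ell(\va)$.
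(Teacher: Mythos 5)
Your proposal is correct and follows essentially the same route as the paper: expand the expectation by grouping over types to get \Cref{eq:fqL-exp}, differentiate termwise and reindex via $a_j\binom{L}{\va}=L\binom{L-1}{\va-\ve_j}$ (equivalently, the change of variable $a_j-1\to a_j$) to obtain \Cref{eq:fqL'-exp} and \Cref{eq:fqL''-exp}, and then apply the chain rule $g''(w)=\sum_{j,k}\frac{dp_j}{dw}\frac{dp_k}{dw}\,\partial_{p_j}\partial_{p_k}f$ for the affine parametrization $w\mapsto P_{q,\ell,w}$, splitting the $(j,k)$-sum into the three index blocks and factoring the monomial as $\bigl(\tfrac{w}{q-\ell}\bigr)^{L-2}\bigl(\tfrac{(q-\ell)(1-w)}{\ell w}\bigr)^{a_{q-\ell+1}+\cdots+a_q}$. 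The only discrepancy is cosmetic: after reindexing, the sum in \Cref{eq:fqL'-exp} should run over $\cA_{q,L-1}$ with coefficient $\binom{L-1}{\va}$ (and over $\cA_{q,L-2}$ with $\binom{L-2}{\va}$ for the second derivative, as in \Cref{eq:fqL''-exp}); the statement of \Cref{eq:fqL'-exp} as printed has a typo, which your reindexing derivation silently corrects.
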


\begin{proof}
    See \Cref{subsec:deriv-expressions}. 
\end{proof}

\section{Schur convexity of $ f_{q,L} $: proof of \Cref{thm:schur-convex-fql}}
\label{sec:schu-convex-fql}

First, we provide the criterion for Schur-convexity that we use when proving $f_{q,L}$ (and later, $f_{q,L,\ell}$) is Schur-convex. 

\begin{theorem}[Schur--Ostrowski criterion, \cite{roberts1993convex}]
\label{thm:schur-criterion}
Let $ f\colon\bbR^d\to\bbR $ be a symmetric function, i.e., 
\begin{align}
f(x_1,\dots,x_d) &= f(x_{\pi(1)},\dots,x_{\pi(d)}) \notag 
\end{align}
for every $ \pi\in S_d $. 
Suppose all first partial derivatives of $ f $ exist. 
Then $ f $ is Schur-convex if and only if 
\begin{align}
(x_i - x_j)\cdot\paren{\frac{\partial}{\partial x_i} - \frac{\partial}{\partial x_j}} f(x_1,\dots,x_d) &\ge 0 \notag 
\end{align}
for all $ \vx\in\bbR^d $ and $ 1\le i\ne j\le d $. 
\end{theorem}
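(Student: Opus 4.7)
The plan is to reduce both directions to an infinitesimal analysis of $f$ along a one-parameter family of \emph{$T$-transforms} (Robin Hood transfers) that move mass from a larger coordinate to a smaller one while preserving the coordinate sum. Fix $\vx\in\bbR^d$ and indices $i\ne j$ with $x_i\ge x_j$, and consider the path $\vx(t) = \vx - t(\ve_i - \ve_j)$ for $t \in [0, x_i - x_j]$. For each such $t$, the map $\vx\mapsto\vx(t)$ is realized by the doubly stochastic matrix interpolating between the identity and the $(i,j)$-transposition, so $\vx \succeq \vx(t)$ in the majorization order of \Cref{def:majorization}; meanwhile the chain rule gives
\begin{align}
\frac{d}{dt}f(\vx(t)) &= \partial_j f(\vx(t)) - \partial_i f(\vx(t)). \notag
\end{align}

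For the necessity (\emph{only if}) direction, Schur convexity of $f$ forces $f(\vx(t)) \le f(\vx)$ on this interval, so the right derivative at $t=0$ is non-positive: $\partial_j f(\vx) \le \partial_i f(\vx)$. Multiplying by $x_i - x_j \ge 0$ yields $(x_i - x_j)(\partial_i - \partial_j) f(\vx) \ge 0$. The case $x_i \le x_j$ is handled by swapping $i$ and $j$, and $x_i = x_j$ is trivial.

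For the sufficiency (\emph{if}) direction, I would invoke the Hardy--Littlewood--P\'olya theorem, which states that $\vx \succeq \vy$ if and only if $\vy$ can be obtained from $\vx$ by a finite sequence of $T$-transforms (followed by a permutation, which leaves $f$ invariant by symmetry). It therefore suffices to show that a single $T$-transform does not increase $f$. Parametrize such a transform as $\vx(t) = \vx - t(x_i-x_j)(\ve_i - \ve_j)$ for $t \in [0, s]$ with $s \in [0, 1/2]$ and $x_i \ge x_j$. Then the gap $x_i(t) - x_j(t) = (1 - 2t)(x_i - x_j)$ stays non-negative, so the assumed Schur--Ostrowski inequality gives $\partial_i f(\vx(t)) \ge \partial_j f(\vx(t))$, whence
\begin{align}
\frac{d}{dt} f(\vx(t)) &= -(x_i - x_j)(\partial_i f - \partial_j f)(\vx(t)) \le 0. \notag
\end{align}
Integrating from $0$ to $s$ yields $f(\vx(s)) \le f(\vx)$, as required.

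The main subtlety is the reduction to single $T$-transforms via Hardy--Littlewood--P\'olya; a fully self-contained alternative would be an induction on the number of indices at which the descending sorts of $\vx$ and $\vy$ differ, peeling off one transform at a time. The symmetry assumption on $f$ enters twice---once to discard the final permutation in the decomposition, and more fundamentally because a $T$-transform only couples two coordinates, so without symmetry the pointwise inequality on pairs would not suffice to control $f$ globally.
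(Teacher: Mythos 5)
The paper does not prove this criterion; it is invoked as a black box and cited from \cite{roberts1993convex}, so there is no internal proof to compare against. Your argument is correct and is essentially the standard textbook proof. For necessity you differentiate along a Robin Hood path, using that $\vx$ majorizes $\vx(t)$ because $\vx(t)$ is a convex combination of $\vx$ and its $(i,j)$-transposition, and read off the sign of the one-sided derivative at $t=0$; for sufficiency you invoke Hardy--Littlewood--P\'olya to reduce to a single $T$-transform, integrate the sign of $\tfrac{d}{dt}f(\vx(t))$, and use symmetry of $f$ to absorb the residual permutation. One minor caveat, present in the quoted statement (and in many sources) rather than in your proof specifically: to pass from the pointwise sign of $\tfrac{d}{dt}f(\vx(t))$ to $f(\vx(s))\le f(\vx)$ via the fundamental theorem of calculus one needs $t\mapsto f(\vx(t))$ to be regular enough (e.g.\ $f\in C^1$), not merely that the first partials exist. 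This is immaterial for the paper's applications, since $f_{q,L}$ and $f_{q,L,\ell}$ are polynomials in the coordinates of $P$.
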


\begin{proof}[Proof of~\Cref{thm:schur-convex-fql}]

%We now give a less compact but more explicit expression for $ f_{q,L} $. 
%Let $ P = (p_1,\dots,p_q)\in\Delta([q]) $. 
%Assume $ p_i\ne0 $ for every $ i\in[q] $. 
%Otherwise we decrease the value of $q$. 
%Define for a non-negative integer $m$
%\begin{align}
%\cA_{q,m} &\coloneqq \curbrkt{(a_1,\dots,a_q)\in\bbZ_{\ge0}^q:\sum_{i = 1}^q a_i = m} \notag 
%\end{align}
%to be the set of \emph{ordered} $q$-tuples with non-negative entries and sum $m$. 
%Then, by applying the multinomial theorem, we derive that
%\begin{align}
%f_{q,L}(P) &= \exptover{(X_1,\dots,X_L)\sim P^{\ot L}}{\plur(X_1,\dots,X_L)} \notag \\
%&= \sum_{(x_1,\dots,x_L) \in [q]^L}\pl(x_1,\dots,x_L)\probover{(X_1,\dots,X_L)\sim P^{\ot L}}{X_1=x_1,\dots,X_L=x_L} \notag \\
%&= \sum_{(x_1,\dots,x_L) \in [q]^L}\\max\{|\{i:x_i=j\}|:j \in [q]\} \prod_{i=1}^n p_{x_i} \notag \\
%&= \sum_{(a_1,\dots,a_q)\in\cA_{q,L}} \binom{L}{a_1,\dots,a_q} \paren{\prod_{i = 1}^q p_i^{a_i}} \max\curbrkt{a_1,\dots,a_q} . \notag 
%\end{align}

By \Cref{eq:fqL'-exp} of \Cref{lemma:derivs}, we have 
\begin{align}
    \frac{\partial}{\partial p_j}f_{q,L}(P) = L \sum_{\va \in \cA_{q,L}} \binom{L-1}{\va} \max\{\va + \ve_j\} \paren{\prod_{i=1}^{q} p_i^{a_i}} \ . \label{eq:fqL'-exp-proof}
\end{align}

The proof then proceeds by verifying the Schur--Ostrowski criterion (\Cref{thm:schur-criterion}). 
Fix an arbitrary pair $ k\ne j\in[q] $ and a distribution $ (p_1,\cdots,p_q)\in\Delta([q]) $. 
Without loss of generality, assume $ p_k>p_j $. 
The conclusion follows if one can show $ \paren{\frac{\partial}{\partial p_k} - \frac{\partial}{\partial p_j}}f_{q,L}(p_1,\cdots,p_q)\ge0 $. 
%In what follows, we prove this inequality.
%By \Cref{eqn:vec-notation}, 
By \Cref{eq:fqL'-exp-proof}, we have 
\begin{align}
\paren{\frac{\partial}{\partial p_k} - \frac{\partial}{\partial p_j}}f_{q,L}(p_1,\cdots,p_q) 
&= L \sum_{\va\in\cA_{q,L}} \binom{L-1}{\va} \paren{\prod_{i = 1}^q p_i^{a_i}} \paren{\max\curbrkt{\va+\ve_k} - \max\curbrkt{\va+\ve_j}}
\end{align}
It is easy to check that 
\begin{align}
\max\curbrkt{\va+\ve_k} - \max\curbrkt{\va+\ve_j} 
&= \begin{cases}
1-1=0, & a_k = \max\curbrkt{\va}, a_j=\max\curbrkt{\va} \\
1-0=1, & a_k = \max\curbrkt{\va}, a_j<\max\curbrkt{\va} \\
0-1=-1, & a_k < \max\curbrkt{\va}, a_j=\max\curbrkt{\va}  \\
0-0=0, & a_k < \max\curbrkt{\va}, a_j<\max\curbrkt{\va} 
\end{cases}
= \begin{cases}
1 , & a_j<a_k = \max\curbrkt{\va} \\
-1 , & a_k<a_j = \max\curbrkt{\va} \\
0 , & \ow 
\end{cases} . \notag 
\end{align}
Therefore 
\begin{align}
& \frac{1}{L}\paren{\frac{\partial}{\partial p_k} - \frac{\partial}{\partial p_j}}f_{q,L}(p_1,\cdots,p_q) \notag \\
&= \sum_{\substack{\va\in\cA_{q,L} \\ \max\curbrkt{\va} = a_k > a_j}} \binom{L-1}{\va} \paren{\prod_{i = 1}^q p_i^{a_i}} - \sum_{\substack{\va\in\cA_{q,L} \\ a_k < a_j = \max\curbrkt{\va}}} \binom{L-1}{\va} \paren{\prod_{i = 1}^q p_i^{a_i}} \notag \\
&= \sum_{\substack{\va\in\cA_{q,L} \\ \max\curbrkt{\va} = a_k > a_j}} \binom{L-1}{\va} \paren{\prod_{i = 1}^q p_i^{a_i}} - \sum_{\substack{\va\in\cA_{q,L} \\ \max\curbrkt{\va} = a_k > a_j}} \binom{L-1}{\pi_{k,j}(\va)} \paren{\prod_{i = 1}^q p_i^{a_{\pi_{k,j}(i)}}} \label{eqn:def-swap} \\
&= \sum_{\substack{\va\in\cA_{q,L} \\ \max\curbrkt{\va} = a_k > a_j}} \binom{L-1}{\va} \paren{\prod_{i\in[q]\setminus\{k,j\}} p_i^{a_i}} \paren{p_k^{a_k}p_j^{a_j} - p_k^{a_j}p_j^{a_k}} . \notag 
\end{align}
In \Cref{eqn:def-swap} we denote by $ \pi_{k,j}\colon[q]\to[q] $ the transposition that swaps $ k $ and $ j $, and we let it act on vectors $\va$ by coordinate permutation. 
Now observe that $ p_k^{a_k}p_j^{a_j} - p_k^{a_j}p_j^{a_k}>0 $. 
Indeed, this follows since $ p_k> p_j $ and $ a_k>a_j $, hence
\begin{align}
\paren{\frac{p_k}{p_j}}^{a_k} &> \paren{\frac{p_k}{p_j}}^{a_j} . \notag
\end{align}
This shows that $ \paren{\frac{\partial}{\partial p_k} - \frac{\partial}{\partial p_j}}f_{q,L}(p_1,\cdots,p_q)>0 $ and the proof of \Cref{thm:schur-convex-fql} is finished. 
\end{proof}

\section{Convexity of $ g_{q,L} $: proof of \Cref{thm:convex-gql}}
\label{sec:convex-gql}

\begin{proof}[Proof of \Cref{thm:convex-gql}] 
We will show that $ g''_{q,L}\ge0 $ on the interval $ [0,(q-1)/q] $. By \Cref{eq:exp-gqL''} of \Cref{lemma:derivs}, we have

\begin{align}
    g_{q,L}''(w) &= L(L-1) \paren{\frac{w}{q-1}}^{L-2}\sum_{\va\in\cA_{q,L-2}}\binom{L-2}{\va}\paren{(q-1)\frac{1-w}{w}}^{a_q} G(\va) \label{eqn:exp-for-g''}
\end{align}
where 
\begin{align}
    G(\va) = \frac{1}{(q-1)^2}\sum_{i,j \in [q-1]}\max\{\va+\ve_i+\ve_j\} - \frac{2}{q-1}\sum_{i \in [q-1]}\max\{\va+\ve_i+\ve_q\} + \max\{\va+2\ve_q\} .  \label{eqn:G-defn}
\end{align}
We now seek to lower bound $G(\va)$ for a given $\va \in \cA_{q,L-2}$. Let $r$ denote the number of distinct $i \in [q]$ for which $a_i = \max\{\va\}$. By symmetry, it is without loss of generality to assume $a_1 \geq a_2 \geq \dots \geq a_{q-1}$. Further, observe that 
\begin{align*}
     \max\{\va\} \paren{ \sum_{i,j \in [q-1]}\frac{1}{(q-1)^2} - \sum_{i \in [q-1]}\frac{2}{q-1} +1 }  = 0 .
\end{align*}
Thus, we have
\begin{align}
    G(\va) = \frac{1}{(q-1)^2}\sum_{i,j \in [q-1]}\paren{\max\{\va+\ve_i+\ve_j\}-\max\{\va\}} &- \frac{2}{q-1}\sum_{i \in [q-1]}\paren{\max\{\va+\ve_i+\ve_q\}-\max\{\va\}} \notag \\
    &+ \paren{\max\{\va+2\ve_q\}-\max\{\va\}} . \label{eqn:G-zero}
\end{align}
We will lower bound each of the three terms separately. We consider two cases.

\paragraph{Case $a_q < \max\{\vec a\}$.} In this case, we claim $G(\va) \geq -\frac{r(r-1)}{(q-1)^2}$. Observe our assumption implies $a_1=\dots=a_r=\max\{\va\}$. Firstly,
\begin{align*}
    \sum_{i,j \in [q-1]}\paren{\max\{\va+\ve_i+\ve_j\}-\max\{\va\}} \geq 2r + r(q-2) + r(q-1-r) = 2r(q-1)-r(r-1) .
\end{align*}
Indeed, for $i,j \in [q-1]$ we note that if $i=j \leq r$ then $\max\{\va+\ve_i+\ve_j\}-\max\{\va\} = 2$; $i \leq r$ and $i \neq j$ then $\max\{\va+\ve_i+\ve_j\}-\max\{\va\} = 1$; and if $i>r$ and $j \leq r$ then again $\max\{\va+\ve_i+\ve_j\}-\max\{\va\} = 1$. Note further that these conditions on $(i,j) \in [q-1]$ define disjoint sets, and that for all other choices of $i,j \in [q-1]$ we have  $\max\{\va+\ve_i+\ve_j\}-\max\{\va\} \geq 0$. 

Next, 
\begin{align*}
    \sum_{i \in [q-1]}\paren{\max\{\va+\ve_i+\ve_q\}-\max\{\va\}} = r
\end{align*}
as if $i \leq r$ then $\max\{\va+\ve_i+\ve_q\}-\max\{\va\} = 1$ and otherwise $\max\{\va+\ve_i+\ve_q\}-\max\{\va\} = 0$.

Finally, it's clear $\max\{\va+2\ve_q\}-\max\{\va\} \geq 0$. Combining these inequalities and plugging them into \eqref{eqn:G-zero} we conclude $G(\va) \geq -\frac{r(r-1)}{(q-1)^2}$, as claimed. 

\paragraph{Case $a_q=\max\{\va\}$.} In this case, we claim $G(\va) \geq \frac{2(r-1)(q-1)-(r-1)(r-2)}{(q-1)^2}$. Observe our assumptions yield $a_i=\max\{\va\}$ if and only if $i\leq r-1$ or $i=q$. Firstly, we have 
\begin{align*}
    \sum_{i,j \in [q-1]}\paren{\max\{\va+\ve_i+\ve_j\}-\max\{\va\}} &\geq 2(r-1) + (r-1)(q-2) + (r-1)(q-1-(r-1)) \\
    &= 2(r-1)(q-1)-(r-1)(r-2) .
\end{align*}
The argument is completely analogous to the previous case, upon replacing $r$ by $r-1$. 

Next, note that for all $i \leq q-1$ we have $\max\{\va+\ve_i+\ve_q\}-\max\{\va\}=1$ so 
\begin{align*}
    \sum_{i \in [q-1]}\paren{\max\{\va+\ve_i+\ve_q\}-\max\{\va\}} = q-1,
\end{align*}
and furthermore $\max\{\va+2\ve_q\}-\max\{\va\}=2$. Plugging these into \Cref{eqn:G-zero} we find $G(\va) \geq \frac{2(r-1)(q-1)-(r-1)(r-2)}{(q-1)^2}$.

We now combine these bounds as follows to lower bound $g''_{q,L}(w)$. Consider the equivalence relation defined on $\cA_{L-2,q}$ obtained by identifying tuples that can be obtained from one another by coordinate permutation. Stated differently, let the symmetric group $S_q$ act on $\cA_{L-2,q}$ by coordinate permutation, and identify tuples lying in the same orbit. Let $\cR$ be a collection of representatives for this equivalence relation, and given $\va^* \in \cR$ let $\cO(\va^*)$ denote the equivalence class to which $\va^*$ belongs. As equivalence classes form a partition, we have 
\begin{align}
     g_{q,L}''(w) &= L(L-1) \paren{\frac{w}{q-1}}^{L-2}\sum_{\va\in\cA_{q,L-2}}\binom{L-2}{\va}\paren{(q-1)\frac{1-w}{w}}^{a_q} G(\va) \notag \\
    &= L(L-1)\paren{\frac{w}{q-1}}^{L-2}\sum_{\va^*\in\cR}\binom{L-2}{\va^*}\sum_{\va\in\cO(\va^*)}\binom{L-2}{\va}\paren{(q-1)\frac{1-w}{w}}^{a_q} G(\va) . \notag
\end{align}
It therefore suffices to fix $\va^* \in \cR$ and show 
\begin{align}
    \sum_{\va\in\cO(\va^*)}\paren{(q-1)\frac{1-w}{w}}^{a_q} G(\va) \ge 0 . \label{eqn:orbit-sum}
\end{align}
Towards this end, let $s \coloneqq \max\{\va^*\}$ and let $r$ denote the number of $i \in [q]$ for which $a_i^*=s$. Define further $\cO_1(\va^*) \coloneqq \{\va \in \cO(\va^*):a_q<s\}$ and $\cO_2(\va^*)\coloneqq \cO(\va^*)\setminus \cO_1(\va^*) = \{\va \in \cO(\va^*):a_q=s\}$. Thus, for $\va \in \cO_1(\va^*)$ we have $G(\va) \geq -\frac{r(r-1)}{(q-1)^2}$ and for $\va \in \cO_2(\va^*)$, $G(\va) \geq \frac{2(r-1)(q-1)-(r-1)(r-2)}{(q-1)^2}$. Further, note that for any $\va\in\cO(\va^*)$ we have that if we sample a random permutation $\Pi \in S_q$ then the probability that the $\va^{\Pi} \in \cO_2(\va^*)$ is $\frac{r}{q}$, where $\va^{\Pi}$ is vector obtained after permuting $\va$'s coordinates according to $\Pi$. Therefore $\frac{|\cO_2(\va^*)|}{|\cO(\va^*)|} = \frac{r}{q}$ and so also $\frac{|\cO_1(\va^*)|}{|\cO(\va^*)|} = \frac{q-r}{q}$. 

Finally, the assumption $w \in \left[0,\frac{q-1}{q}\right]$ is equivalent to the bound $0 \leq \paren{(q-1)\frac{1-w}{w}}\leq 1$, which guarantees that for $t<u$, we have $\paren{(q-1)\frac{1-w}{w}}^t \geq \paren{(q-1)\frac{1-w}{w}}^u$. Thus:
\begin{align*}
    & \sum_{\va\in\cO(\va^*)}\paren{(q-1)\frac{1-w}{w}}^{a_q} G(\va) \\
    &= \sum_{\va \in \cO_1(\va^*)}\paren{(q-1)\frac{1-w}{w}}^{a_q}G(\va) + \sum_{\va \in \cO_2(\va^*)}\paren{(q-1)\frac{1-w}{w}}^{a_q}G(\va) \\
    &\geq \sum_{\va \in \cO_1(\va^*)}\paren{(q-1)\frac{1-w}{w}}^{a_q}\cdot \frac{-r(r-1)}{(q-1)^2} + \sum_{\va \in \cO_2(\va^*)}\paren{(q-1)\frac{1-w}{w}}^{a_q}\cdot \frac{2(r-1)(q-1)-(r-1)(r-2)}{(q-1)^2} \\
    &\geq \sum_{\va \in \cO_1(\va^*)}\paren{(q-1)\frac{1-w}{w}}^{s}\cdot \frac{-r(r-1)}{(q-1)^2} + \sum_{\va \in \cO_2(\va^*)}\paren{(q-1)\frac{1-w}{w}}^{s}\cdot \frac{2(r-1)(q-1)-(r-1)(r-2)}{(q-1)^2} \\
    &= \paren{(q-1)\frac{1-w}{w}}^{s}\cdot|\cO(\va^*)|\cdot \paren{\frac{q-r}{q}\cdot \frac{-r(r-1)}{(q-1)^2} + \frac{r}{q}\cdot\frac{2(r-1)(q-1)-(r-1)(r-2)}{(q-1)^2}} \\
    &= \paren{(q-1)\frac{1-w}{w}}^{s}\cdot|\cO(\va^*)|\cdot \frac{r-1}{q(q-1)^2}\paren{2r(q-1)-r(q-r)-(r-2)}
\end{align*}
As $1 \leq r \leq q$ and $q \geq 2$, it is immediate that the above expression is non-negative. This establishes \Cref{eqn:orbit-sum} and therefore completes the proof of \Cref{thm:convex-gql}. 
\end{proof}

\section{Plotkin bound for $q$-ary list-decoding: proof of \Cref{thm:plotkin-qary-list-dec-main}}
\label{sec:plotkin-list-dec}
% !TEX root = ./list-dec-qary-and-list-rec.tex

Before diving into the proof, let us introduce the concept of \emph{type} which will be used in the current section and \Cref{sec:plotkin-list-rec}. 

\begin{definition}[Type]
\label{def:type}
Let $ \cX_1,\cdots,\cX_k $ be $k$ arbitrary finite sets. 
Then the \emph{type} $ \type_{\vx_1,\cdots,\vx_k}\in\Delta\paren{\prod_{i=1}^k\cX_i} $ of a $k$-tuple of vectors $ (\vx_1,\cdots,\vx_k)\in\cX_1^n\times\cdots\times\cX_k^n $ is defined as its empirical distribution, i.e., 
\begin{align}
    \type_{\vx_1,\cdots,\vx_k} (z_1,\cdots,z_k) &\coloneqq \frac{1}{n} \sum_{j=1}^n \prod_{i=1}^k \indicator{x_{i,j} = z_i} \notag 
\end{align}
for any $ (z_1,\cdots,z_k)\in\cX_1\times\cdots\times\cX_k $. 
\end{definition}

Suppose that we are given a code $ \cC\subset[q]^n $ of size $M$ and therefore rate $ R = \frac{1}{n}\log_qM $. 
Suppose that $ \cC $ is guaranteed to be $ (p,L)_q $-list-decodable for a given $ L\in\bbZ_{\ge2} $. 
The goal is to upper bound $ M $ and therefore $R$. 

\begin{theorem}[Plotkin bound, $ q $-ary list-decoding, approximately constant weight]
\label{thm:plotkin-list-dec-qary}
Let $ \cC\subset[q]^n $ be a $ (p,L)_q $-list-decodable code with $ p = p_*(q,L,w) + \tau $ for $ w\in[0,(q-1)/q] $ and some small constant $ \tau>0 $. 
Let 
\begin{align}
    0<\eps_1\le\frac{L\tau}{8\lip(g_{q,L})} , \notag 
\end{align}
where $ \lip(g_{q,L}) $ depends only on $q,L$ and will be defined in \Cref{eqn:def-lip-gql}. 
Assume that every codeword in $ \cC $ has Hamming weight between $ \max\{0, n(w-\eps_1)\} $ and $ \min\{n(w+\eps_1), n(q-1)/q\} $. 
Then 
\begin{align}
|\cC| &< 
% \paren{\frac{4\lip(g_{q,L})}{L\tau} + 1} 
N(c,L,M_0) , \label{eqn:plotkin-ub-quant} 
\end{align}
where
\begin{align}
\begin{split}
c &\coloneqq \paren{\frac{80^2L^6q^{4L-2}}{2\tau^2} + 1}^{q^L} , \\
M_0 &\coloneqq \max\curbrkt{ \frac{2^{11}L^7q^{2L}}{\tau^2} + L-2 , (L-1)L\paren{\frac{p_*(q,L,w)}{\frac{1}{L}\cdot\lip(g_{q,L})\cdot\eps_1} + 2} + 1} , 
\end{split}
\label{eqn:def-c-m0}
\end{align}
and the function $ N(\cdot,\cdot,\cdot) $ denotes the Ramsey number given by \Cref{thm:hypergraph-ramsey}. 
\end{theorem}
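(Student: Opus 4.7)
The plan is to prove this via a two-stage argument: first extract a highly structured ``equi-coupled'' subcode using a hypergraph Ramsey theorem, and then apply a double-counting argument to the average radius, combining the Schur and univariate convexity of \Cref{thm:schur-convex-fql,thm:convex-gql,prop:mono-gqL} (for the upper bound) with list-decodability (for the lower bound).

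Suppose toward a contradiction that $|\cC|\ge N(c,L,M_0)$. I would partition $\Delta([q]^L)$ into $c$ cells of total-variation diameter $O(\tau/q^L)$, and color each $L$-subset $\{\vx_1,\dots,\vx_L\}\in\binom{\cC}{L}$ by the cell containing its type $\type_{\vx_1,\dots,\vx_L}$. The hypergraph Ramsey theorem then yields a subcode $\cC'\subseteq\cC$ with $|\cC'|\ge M_0$ on which every $L$-subset has (approximately) a common type $T$. Since we colored \emph{unordered} subsets, $T$ is automatically symmetric under permutations of $[q]^L$; hence for each position $j\in[n]$ the column marginal of $T$ is approximately $P_j^{\otimes L}$, where $P_j\in\Delta([q])$ denotes the empirical column-$j$ distribution of the codewords in $\cC'$.

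Next, double-count. Expanding the column decomposition \Cref{eqn:def-avg-rad} and summing over $\cL\in\binom{\cC'}{L}$,
\begin{align*}
\sum_{\cL\in\binom{\cC'}{L}} \ol{\rad}(\cL) \;=\; n\binom{|\cC'|}{L} \;-\; \frac{1}{L}\sum_{j=1}^n \sum_{\cL\in\binom{\cC'}{L}} \plur_j(\cL),
\end{align*}
where $\plur_j(\cL)$ denotes the plurality at position $j$. Equi-coupling makes the inner sum approximately $\binom{|\cC'|}{L}\, f_{q,L}(P_j)$, with the gap between sampling $L$ codewords without and with replacement absorbed by the lower bound on $M_0$. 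Schur convexity of $f_{q,L}$ then gives $f_{q,L}(P_j)\ge g_{q,L}(1-P_j(q))$. The constant-weight hypothesis forces the average of $w_j\coloneqq 1-P_j(q)$ to lie in $[w-\eps_1,\min(w+\eps_1,(q-1)/q)]$, and invoking the convexity of $g_{q,L}$ on $[0,(q-1)/q]$ together with the monotonicity in \Cref{prop:mono-gqL} (to deal with columns where $w_j>(q-1)/q$) and Jensen's inequality gives $\sum_j g_{q,L}(w_j)\ge n\, g_{q,L}(w) - O(\eps_1\lip(g_{q,L}))$. Rearranging, the average of $\ol{\rad}(\cL)$ over $\cL\in\binom{\cC'}{L}$ is at most $n\cdot p_*(q,L,w) + O(\eps_1 \lip(g_{q,L}))$.

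The matching lower bound comes from list-decodability: $\rad(\cL) > np = n(p_*(q,L,w)+\tau)$ for every $\cL\in\binom{\cC'}{L}$. Because $\cC'$ is equi-coupled with a symmetric type, the coordinatewise plurality vector $\vy^*$ nearly equalizes the distances from $\vy^*$ to the codewords of $\cL$, so $\ol{\rad}(\cL)$ and $\rad(\cL)$ agree up to the Ramsey discretization error. Averaging over $\cL$ gives that the mean of $\ol{\rad}(\cL)$ is at least $n(p_*(q,L,w)+\tau) - O(\text{discretization error})$. Combining with the upper bound, the hypothesis $\eps_1\le L\tau/(8\lip(g_{q,L}))$ and the precise choice of $c$ and $M_0$ in \Cref{eqn:def-c-m0} overwhelm the strict gap $n\tau$, yielding $|\cC|<N(c,L,M_0)$. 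The main obstacle will be quantitatively controlling the gap between $\rad$ and $\ol{\rad}$ on a nearly equi-coupled symmetric subcode together with the without-replacement-vs.-i.i.d.\ sampling error: this is exactly what dictates the sizes of $c$ (fineness of Ramsey discretization) and $M_0$ (to absorb the sampling discrepancy).
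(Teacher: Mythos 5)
The broad architecture you describe matches the paper's: extract an equi-coupled subcode via hypergraph Ramsey, double-count the average radius, and combine Schur convexity of $f_{q,L}$, univariate convexity/monotonicity of $g_{q,L}$, and list-decodability. The without-versus-with-replacement correction you mention is also handled in the paper, albeit applied to the lower bound on the double-count sum rather than the upper bound.

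However, there is a genuine gap in your treatment of symmetry. You assert that ``since we colored unordered subsets, $T$ is automatically symmetric under permutations of $[q]^L$.'' This is not true. To compute a type $\type_{\vx_1,\dots,\vx_L}\in\Delta([q]^L)$ you must impose \emph{some} ordering on the $L$ codewords (the paper fixes a global enumeration of $\cC$), and the resulting distribution on $[q]^L$ is in general asymmetric even though the coloring is indexed by unordered subsets. Ramsey monochromaticity therefore only guarantees that all ordered $L$-tuples (under the fixed global order) share approximately one common type $Q$, with no symmetry of $Q$ for free. Approximate symmetry of $Q$ is precisely what is needed to show that the average radius $\ol{\rad}(\cL)$ and the (Chebyshev) radius $\rad(\cL)$ agree up to controllable error, which is essential for converting the list-decodability hypothesis into a lower bound on $\ol{\rad}(\cL)$. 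In the paper this is a separate, non-trivial step invoking the order-$L$ generalization of K\'omlos's argument (stated as \Cref{eqn:komlos}, due to Blinovsky and Bondaschi--Dalai), and the quantitative form of the approximate-symmetry bound is what determines the constant $\tfrac{2^{11}L^7q^{2L}}{\tau^2}+L-2$ appearing in $M_0$. Without this lemma the argument that $\ol{\rad}(\cL)\approx\rad(\cL)$ has no support, and the ``discretization error'' you invoke cannot be bounded. A secondary, smaller issue: the phrase ``column marginal of $T$ is approximately $P_j^{\otimes L}$'' conflates the type of an $L$-tuple (a single distribution on $[q]^L$ obtained by averaging over columns) with the per-column empirical distributions $P_j$; the correct algebraic identity used in the paper is that the average of $\ol{\rad}$ over ordered $L$-tuples equals $\sum_{k}\bigl(1-\tfrac{1}{L}f_{q,L}(P_k)\bigr)$ exactly, with no reference to $T$ at all.
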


\begin{proof}
Suppose $ p = p_*(q,L,w) + \tau $ and we are given a $ (p,L)_q $-list-decodable code $ \cC_1\subseteq[q]^n $ all of whose codewords have Hamming weight between $ n(w-\eps_1) $ and $ n(w+\eps_1) $. 
The goal is to show that $ |\cC_1|\le M_* $ for some $ M_* $ that depends on $ q,L,w,\tau $, but not on $n$. 
Suppose towards a contradiction that \Cref{eqn:plotkin-ub-quant} holds in the reverse direction:
\begin{align}
|\cC_1| &\ge % \paren{\frac{4\lip(g_{q,L})}{L\tau} + 1} 
N(c,L,M_0) . 
\label{eqn:hyp-contradiction}
\end{align} 

\textbf{The first step} is to pass to an almost ``equi-coupled'' subcode $ \cC_2\subset\cC_1 $. 
This follows from a standard Ramsey argument. 
Let $ \eps_2 $ be a small positive constant defined as
\begin{align}
\eps_2 &\coloneqq \frac{\tau^2}{80^2L^6q^{3L-2}} . \label{eqn:def-eps2} 
\end{align}
We build an $L$-uniform complete hypergraph $ \cH_{L,|\cC_1|} $ in which each codeword in $ \cC_1 $ is a vertex and every $L$-list of codewords is connected by a hyperedge. 
We then assign colours to the hyperedges of $ \cH_{L,|\cC_1|} $. 
To this end, we quantize the probability simplex $ \Delta([q]^L) $. 
By a standard covering argument, there exists an $ \eps_2 $-net $ \cN \subset \Delta([q]^L) $ w.r.t.\ $ \norminf{\cdot} $ of size 
\begin{align}
|\cN| \le \ceil{\frac{q^L}{2\eps_2}}^{q^L} \le \paren{\frac{q^L}{2\eps_2} + 1}^{q^L} \eqqcolon c . \notag 
\end{align}
The covering property of $ \cN $ guarantees that for any $ P_{X_1,\cdots,X_L}\in\Delta([q]^L) $, there exists $ Q_{X_1,\cdots,X_L}\in\cN $ such that $ \norminf{P_{X_1,\cdots,X_L} - Q_{X_1,\cdots,X_L}} \le \eps_2 $. 
Now, viewing distributions in $\cN$ as colours, we assign a colour $ Q_{X_1,\cdots,X_L} $ to a hyperedge\footnotemark $ \{\vx_1,\cdots,\vx_L\}\in\binom{\cC_1}{L} $ iff $ \norminf{\type_{\vx_1,\cdots,\vx_L} - Q_{X_1,\cdots,X_L}}\le\eps_2 $. 
\footnotetext{We enumerate the codewords in $ \cC_1 $ according to an arbitrary fixed order. Any subset of codewords of $\cC_1$ is by default listed according to this order. }
Note that such $ Q_{X_1,\cdots,X_L} $ must exist by the construction of $ \cN $. 
In case it is not unique, pick an arbitrary one. 
In this way, we get a $ c $-colouring of all hyperedges of $ \cH_{L,|\cC_1|} $. 
By \Cref{thm:hypergraph-ramsey}, as long as $ |\cC_1|\ge N(c,L,M) $, $ \cH_{L,|\cC_1|} $ must contain as subgraph a complete $ L $-uniform hypergraph of size at least\footnotemark $ M $. 
\footnotetext{For convenience we assume that the size of this subgraph (later defined to be $\cC_2$) is \emph{exactly} $ M $. Otherwise, we throw away the last few vertices (i.e., codewords). }
The hypothesis \Cref{eqn:hyp-contradiction} guarantees that $M$ is sufficiently large:
\begin{align}
M &\ge M_0 = \max\curbrkt{ \frac{2^{11}L^7q^{2L}}{\tau^2} + L-2 , (L-1)L\paren{\frac{p_*(q,L,w)}{\frac{1}{L}\cdot\lip(g_{q,L})\cdot\eps_1} + 2} + 1} . \label{eqn:m-large}
\end{align}
Furthermore, all hyperedges in the subgraph get the same colour. 
Let us collect vertices (i.e., a subset of codewords in $ \cC_1 $) in this subgraph into a set $ \cC_2 $. 
The monochromatic property of $ \cC_2 $ implies that there exists $ Q_{X_1,\cdots,X_L}\in\cN $,
\begin{align}
\norminf{\type_{\vx_1,\cdots,\vx_L} - Q_{X_1,\cdots,X_L}}\le\eps_2
\label{eqn:equicoupled}
\end{align}
for every $ \{\vx_1,\cdots,\vx_L\}\in\binom{\cC_2}{L} $. 
This $ \cC_2 $ is our desired subcode in which all lists have roughly the same type. 

\textbf{The second step} is to argue that $ Q_{X_1,\cdots,X_L} $ in fact must be approximately \emph{symmetric}, where \emph{symmetric} means $ Q_{X_1,\cdots,X_L} = Q_{X_{\pi(1)},\cdots X_{\pi(L)}} $ for every $ \pi\in S_L $, otherwise $ |\cC_2| $ must be small. 
Let $ \type_{\vx_1,\cdots,\vx_M}\in\Delta[q]^M $ denote the joint type of all codewords in $ \cC_2 $. 
According to the first step of the proof, $\type_{\vx_1,\cdots,\vx_M}$ clearly meets the assumptions of \Cref{eqn:komlos} with distribution $ Q_{X_1,\cdots,X_L} $ and constant $ \eps_2 $. 
Therefore \Cref{eqn:komlos} guarantees that
\begin{align}
\max_{\pi\in S_L} \norminf{Q_{X_1,\cdots,X_L} - Q_{X_{\pi(1)},\cdots,X_{\pi(L)}}} 
&\le 2L^3\sqrt{\frac{2L}{M-(L-2)}} + 4L^3\sqrt{q^{L-2}\eps_2} + L^2\eps_2 \notag \\
&\le 2L^3\sqrt{\frac{2L}{M-(L-2)}} + 5L^3\sqrt{q^{L-2}\eps_2} . \label{eqn:apx-symm} 
\end{align}
In words, as long as $M$ is sufficiently large, $ Q_{X_1,\cdots,X_L} $ must be approximately symmetric. 
More specifically, 
since \Cref{eqn:m-large} guarantees 
\begin{align}
    M\ge \frac{2^{11}L^7q^{2L}}{\tau^2} + L-2 , \notag
\end{align}
by the choice of $ \eps_2 $ (cf.\ \Cref{eqn:def-eps2}), \Cref{eqn:apx-symm} implies that 
\begin{align}
\max_{\pi\in S_L} \norminf{Q_{X_1,\cdots,X_L} - Q_{X_{\pi(1)},\cdots,X_{\pi(L)}}} &\le 2L^3\sqrt{\frac{2L}{2^{11}L^7q^{2L}/\tau^2}} + 5L^3\cdot\frac{\tau}{80L^3q^{L}} 
= \frac{\tau}{8q^L} . \label{eqn:bound-asymm-by-eps3} 
\end{align}
We define the RHS to be $ \eps_3 $, 
\begin{align}
\eps_3 &\coloneqq \frac{\tau}{8q^{L}} . \label{eqn:def-eps3}
\end{align}

At this point, we have got a subcode $ \cC_2\subset\cC_1 $ with the following properties. 
\begin{enumerate}
    \item \label{itm:property1} $ |\cC_2| = M $; 

    \item \label{itm:property2} every codeword in $ \cC_2 $ has Hamming weight between $ n(w-\eps_1) $ and $ n(w+\eps_1) $; 

    \item \label{itm:property3} there exists a distribution $ Q_{X_1,\cdots,X_L}\in\Delta([q]^L) $ satisfying \Cref{eqn:apx-symm} such that \Cref{eqn:equicoupled} holds for every $L$-list in $ \cC_2 $; 

    \item \label{itm:property4} $ \cC_2 $ is $ (p,L)_q $-list-decodable. 
\end{enumerate}

\textbf{The third step} is to apply a well-known double counting trick to $ \cC_2 $. 
Let us bound from both sides the following quantity
\begin{align}
\frac{1}{M^L} \sum_{(i_1,\cdots,i_L)\in[M]^L} \ol{\rad}(\vx_{i_1}, \cdots, \vx_{i_L}) , \label{eqn:double-count-object} 
\end{align}
that is, the average radius averaged over (potentially repeated) lists in $ \cC_2 $. 

A lower bound on \Cref{eqn:double-count-object} follows from list-decodability of $ \cC_2 $. 
% To see this, note that the equi-coupledness property (cf.\ Property \ref{itm:property3}) implies that the Chebyshev radius of every $L$-list in $ \cC_2 $ is approximately equal to its average radius. 
% Indeed, take any $ \{\vx_1,\cdots,\vx_L\}\in\binom{\cC_2}{L} $. 
% \begin{align}
% \rad(\vx_1,\cdots,\vx_L) &= \min_{\vy\in[q]^n} \max_{i\in[L]} \disth(\vx_i,\vy) . \notag 
% \end{align}
Indeed, suppose towards a contradiction that $ \ol{\rad}(\vx_1,\cdots,\vx_L)\le n(p_*(q,L,w) + \eps_4) $ for some list $ \{\vx_1,\cdots,\vx_L\}\in\binom{\cC_2}{L} $. 
Here we choose 
\begin{align}
\eps_4 &\coloneqq \frac{2}{L}\cdot\lip(g_{q,L})\cdot\eps_1 ,
% = \frac{\tau}{4} . 
\label{eqn:def-eps4} 
\end{align}
and $ \lip(g_{q,L}) $ will be defined in \Cref{eqn:def-lip-gql}. 
Then by the definition of average radius, 
\begin{align}
\frac{1}{L} \sum_{i = 1}^L \disth(\vx_i,\vy) &\le n(p_*(q,L,w) + \eps_4) , \label{eqn:bound-on-avgrad} 
\end{align}
where $ \vy $ is the centroid of the list whose coordinates attain the plurality of the corresponding coordinates of the list: $ y_i = \argplur(x_{1,i},\cdots,x_{L,i}) $ for every $ i\in[n] $. 
If $ \argplur(x_{1,i},\cdots,x_{L,i}) $ is not unique, take an arbitrary one. 
Now observe that for any $ 1\le i\ne j\le L $, $ \disth(\vx_i,\vy) $ and $ \disth(\vx_j,\vy) $ have approximately the same value which does not depend on the choice of $i$ and $j$. 
\begin{align}
& \frac{1}{n}\abs{\disth(\vx_i, \vy) - \disth(\vx_j, \vy)} \notag \\
&= \abs{\sum_{(x_1,\cdots,x_L,y)\in[q]^{L+1}} \type_{\vx_1,\cdots,\vx_L,\vy}(x_1,\cdots,x_L,y) \paren{\indicator{x_i \ne y} - \indicator{x_j \ne y}}} \notag \\
&= \abs{\sum_{(x_1,\cdots,x_L,y)\in[q]^{L+1}} \paren{\type_{\vx_1,\cdots,\vx_L,\vy}(x_1,\cdots,x_L,y) - \type_{\sigma_{i,j}(\vx_1,\cdots,\vx_L),\vy}(x_1,\cdots,x_L,y)} \indicator{x_i \ne y}} \label{eqn:dc-apply-swap} \\
&\le \sum_{(x_1,\cdots,x_L)\in[q]^{L}} \abs{\sum_{y\in[q]} \paren{\type_{\vx_1,\cdots,\vx_L,\vy}(x_1,\cdots,x_L,y) - \type_{\sigma_{i,j}(\vx_1,\cdots,\vx_L),\vy}(x_1,\cdots,x_L,y)}} \notag \\
&\le \normone{\type_{\vx_1,\cdots,\vx_L} - Q_{X_1,\cdots,X_L}}
    + \normone{Q_{X_1,\cdots,X_L} - Q_{\sigma_{i,j}(X_1,\cdots,X_L)}}
    + \normone{\type_{\sigma_{i,j}(\vx_1,\cdots,\vx_L)} - Q_{\sigma_{i,j}(X_1,\cdots,X_L)}} \notag \\
&\le q^{L} (2\eps_2 + \eps_3) . \label{eqn:apply-asymm} 
\end{align}
In \Cref{eqn:dc-apply-swap}, we use $ \sigma_{i,j}(\vx_1,\cdots,\vx_L) $ to denote the sequence obtained by swapping $ \vx_i $ and $ \vx_j $ in $ \vx_1,\cdots,\vx_L $. 
\Cref{eqn:apply-asymm} above follows from Property \ref{itm:property3} of $ \cC_2 $. 
This bound together with \Cref{eqn:bound-on-avgrad} implies an upper bound on $ \rad(\vx_1,\cdots,\vx_L) $. 
For $ j\in[L] $, 
\begin{align}
\frac{1}{L} \sum_{i = 1}^L \disth(\vx_i,\vy) &\ge \disth(\vx_j, \vy) - nq^{L} (2\eps_2 + \eps_3) . \notag 
\end{align}
Therefore, 
\begin{align}
\rad(\vx_1,\cdots,\vx_L) &\le \max_{i\in[L]} \disth(\vx_i,\vy) \notag \\
&\le \frac{1}{L} \sum_{i = 1}^L \disth(\vx_i,\vy) + nq^{L} (2\eps_2 + \eps_3) \notag \\
&\le n(p_*(q,L,w) + \eps_4 + q^{L} (2\eps_2 + \eps_3)) \label{eqn:use-bound-on-avgrad} \\
&= n(p_*(q,L,w) + \tau/4 + \tau/4) \label{use-eps234} \\
&< n(p_*(q,L,w) + \tau) . \label{eqn:dc-contradict} 
\end{align}
\Cref{eqn:use-bound-on-avgrad} is due to \Cref{eqn:bound-on-avgrad}.
\Cref{use-eps234} follows from the choices of $\eps_2$ (cf.\ \Cref{eqn:def-eps2}), $\eps_3$ (cf.\ \Cref{eqn:def-eps3}), $\eps_4$ (cf.\ \Cref{eqn:def-eps4}) and that 
\begin{align}
    \eps_2 &= \paren{\frac{\tau}{80L^3q^{L}}}^2 \cdot \frac{1}{q^{L-2}}
    \le \paren{\frac{\tau}{16q^{L}}}^2
    \le \frac{\tau}{16q^{L}} . \notag
\end{align}
That is, we have found a list $ \{\vx_1,\cdots,\vx_L\}\in\binom{\cC_2}{L} $ whose Chebyshev radius is at most the RHS of \Cref{eqn:dc-contradict} which is \emph{less than} $ np $. 
This contradicts $ (p,L)_q $-list-decodability of $ \cC_2 $ (cf.\ Property \ref{itm:property4}). 
Hence we can conclude that 
\begin{align}
\ol{\rad}(\vx_{i_1}, \cdots, \vx_{i_L}) &\ge n(p_*(q,L,w) + \eps_4) \label{eqn:dc-lb-term}
\end{align}
for every $ \{i_1,\cdots,i_L\}\in\binom{[M]}{L} $. 
For those lists $ (i_1,\cdots,i_L)\in[M]^L $ of indices which are not all distinct, we simply lower bound $ \ol{\rad}(\vx_{i_1},\cdots,\vx_{i_L})\ge0 $. 
Therefore, we have the following lower bound\footnotemark on \Cref{eqn:double-count-object}:
\footnotetext{A list of distinct indices $ (i_1,\cdots,i_L)\in[M]^L $ may not obey the order of codewords in $\cC_2$ (inherited from $\cC_1$). However, note that the summand $ \ol{\rad}(\vx_{i_1},\cdots,\vx_{i_L}) $ of \Cref{eqn:double-count-object} is a symmetric function and therefore does not depend on the order of $ (i_1,\cdots,i_L) $. So the bound in \Cref{eqn:dc-lb-term} also applies to distinct indices $(i_1,\cdots,i_L)$ that do not obey the order of $\cC_2$. }
\begin{align}
\frac{1}{M^L} \sum_{(i_1,\cdots,i_L)\in[M]^L} \ol{\rad}(\vx_{i_1}, \cdots, \vx_{i_L}) &\ge \frac{M(M-1)\cdots(M-L+1)}{M^L} n(p_*(q,L,w) + \eps_4) \notag \\
&\ge \frac{(M-L+1)^L}{M^L}n(p_*(q,L,w) + \eps_4) \notag \\
&= \paren{1 - \frac{L-1}{M}}^L n(p_*(q,L,w) + \eps_4) \notag \\
&\ge \paren{1 - \frac{(L-1)L}{M}} n(p_*(q,L,w) + \eps_4) . \label{eqn:dc-lb} 
\end{align}

Next, we upper bound \Cref{eqn:double-count-object}. 
\begin{align}
& \frac{1}{M^L} \sum_{(i_1,\cdots,i_L)\in[M]^L} \ol{\rad}(\vx_{i_1}, \cdots, \vx_{i_L}) \notag \\
&= \frac{1}{M^L} \sum_{(i_1,\cdots,i_L)\in[M]^L} \sum_{k\in[n]} \paren{1 - \frac{1}{L}\plur(x_{1,k},\cdots,x_{L,k})} \notag \\
&= \frac{1}{M^L} \sum_{(i_1,\cdots,i_L)\in[M]^L} \sum_{(x_1,\cdots,x_L)\in[q]^L} \indicator{x_{1,k} = x_1,\cdots,x_{L,k} = x_L} \sum_{k\in[n]} \paren{1 - \frac{1}{L}\plur(x_{1},\cdots,x_{L})} \notag \\
&= \sum_{k\in[n]} \sum_{(x_1,\cdots,x_L)\in[q]^L} \sqrbrkt{\prod_{j=1}^L \paren{\frac{1}{M}\sum_{i_j\in[M]}\indicator{x_{i_j,k} = x_j}}} \paren{1 - \frac{1}{L}\plur(x_{1},\cdots,x_{L})} \notag \\
&= \sum_{k\in[n]} \sum_{(x_1,\cdots,x_L)\in[q]^L} \paren{\prod_{j=1}^L P_k(x_j)} \paren{1 - \frac{1}{L}\plur(x_{1},\cdots,x_{L})} \label{eqn:emp-col-distr} \\
&= \sum_{k\in[n]} \sum_{\va\in\cA_{q,L}} \binom{L}{\va} \paren{\prod_{i=1}^q P_k(i)^{a_i}} \paren{1 - \frac{1}{L}\max\curbrkt{\va}} \notag \\
&= \sum_{k\in[n]} \paren{1 - \frac{1}{L}f_{q,L}(P_k)} . \label{eqn:interchange-sum-list-dec}
\end{align}
In \Cref{eqn:emp-col-distr}, we define the empirical distribution $ P_k \in\Delta([q]) $ of the $k$-th ($ k\in[n] $) column of $ \cC_2 $ as
\begin{align}
P_k(x) &= \frac{1}{M} \sum_{\vx\in\cC_2} \indicator{x_k = x} \notag 
\end{align}
for any $ x\in[q] $. 
In \Cref{eqn:interchange-sum-list-dec}, we use the multinomial theorem
\begin{align}
    \paren{\sum_{i = 1}^q p_i}^L &= \sum_{\va\in\cA_{q,L}} \binom{L}{\va} \paren{\prod_{i = 1}^q p_i^{a_i}} \notag 
\end{align}
and the fact that $ P_k $ is a probability vector.

Let $ w_k $ ($ 1\le k\le n $) denote the (fractional) weight of the $k$-th column of $ \cC_2 $, i.e., 
$w_k \coloneqq 1 - P_k(q)$. 
By Schur convexity of $ f_{q,L} $ (cf.\ \Cref{thm:schur-convex-fql}), $ f_{q,L}(P_k)\ge f_{q,L}(P_{q,w_k}) $ for every $ 1\le k\le n $. 
Therefore, 
\begin{align}
\frac{1}{n} \sum_{k\in[n]} \paren{1 - \frac{1}{L}f_{q,L}(P_k)}
\le \frac{1}{n} \sum_{k\in[n]} \paren{1 - \frac{1}{L}f_{q,L}(P_{q,w_k})} 
= \frac{1}{n} \sum_{k\in[n]} \paren{1 - \frac{1}{L}g_{q,L}(w_k)} . \label{eqn:db-ub-schur} 
\end{align}
Denote by 
\begin{align}
\ol{w} \coloneqq \frac{1}{n} \sum_{k\in[n]} w_k \label{eqn:def-w-bar} 
\end{align}
the average (fractional) weight of $ \cC_2 $. 
At this point, we would like to use convexity of $ g_{q,L} $ (cf.\ \Cref{thm:convex-gql}) to deduce 
\begin{align}
\frac{1}{n} \sum_{k\in[n]} g_{q,L}(w_k) 
\ge g_{q,L}(\ol{w}) . \label{eqn:use-gql-convex} 
\end{align}
However, convexity of $ g_{q,L} $ only holds on the interval $ [0,(q-1)/q] $ whereas it is \emph{not} guaranteed that $ w_k\in[0,(q-1)/q] $ \emph{for every} $k\in[n]$. 
Fortunately, this is not going to cause an issue due to the monotonicity property of $ g_{q,L} $ (cf. \Cref{prop:mono-gqL}). 
We can perform the following surgery on $ (w_k)_{k\in[n]} $ after which the value of $\sum_{k\in[n]} g_{q,L}(w_k)$ will increase and the value of $ \sum_{k\in[n]} w_k $ will not change. 
Suppose that not all $w_k$ lie in the interval $ [0,(q-1/q)] $. 
Then we must be able to find a pair of weights $ w_i,w_j $ ($1\le i\ne j\le n$) such that $ w_i<(q-1)/q,w_j>(q-1)/q $. 
We then replace $w_i,w_j$ with $ w_i + \eps, w_j-\eps $, respectively, where $ \eps \coloneqq \min\{(q-1)/q - w_i, w_j - (q-1)/q\} $. 
One of the new $ w_i, w_j$ becomes equal to $ (q-1)/q $ and the other one may not be equal. 
We then repeat such operations until no $ w_k>(q-1)/q $ can be found. 
This process will terminate since each codeword has Hamming weight at most $ n(q-1)/q $, so $ \frac{1}{n} \sum_{k\in[n]} w_k \le (q-1)/q $. 
With slight abuse of notation, we denote the weights after the surgery still by $ (w_k)_{k\in[n]} $. 
With the new weights, convexity of $ g_{q,L} $ can be applied and \Cref{eqn:use-gql-convex} is valid. 

Note that by Property \ref{itm:property2} of $ \cC_2 $, 
\begin{align}
\ol{w} = \frac{1}{M} \sum_{\vx\in\cC_2} \frac{1}{n}\wth(\vx) &\in [\max\{0,w-\eps_1\},\max\{w+\eps_1,(q-1)/q\}] . \notag 
\end{align}
Let
\begin{align}
\lip(g_{q,L}) &\coloneqq \max_{w\in[0,(q-1)/q]} g_{q,L}'(w) \label{eqn:def-lip-gql}
\end{align}
be the Lipschitz constant of $ g_{q,L} $. 
Then 
\begin{align}
g_{q,L}(\ol{w}) &\ge g_{q,L}(w) - \lip(g_{q,L})\cdot\eps_1 . \notag 
\end{align}
Therefore, continuing with \Cref{eqn:db-ub-schur}, we have shown that \Cref{eqn:double-count-object} is upper bounded by
\begin{align}
\frac{1}{n} \sum_{k\in[n]} \paren{1 - \frac{1}{L}g_{q,L}(w_k)}
&\le 1 - \frac{1}{L} g_{q,L}(\ol{w}) \notag \\
&\le 1 - \frac{1}{L} g_{q,L}(w) + \frac{1}{L}\cdot\lip(g_{q,L})\cdot\eps_1 \notag \\
&= p_*(q,L,w) + \frac{1}{L}\cdot\lip(g_{q,L})\cdot\eps_1 . \label{eqn:dc-ub} 
\end{align}

Finally, combining the upper (\Cref{eqn:dc-ub}) and lower (\Cref{eqn:dc-lb}) bounds on \Cref{eqn:double-count-object}, we get 
\begin{align}
\paren{1 - \frac{(L-1)L}{M}} (p_*(q,L,w) + \eps_4) 
\le p_*(q,L,w) + \frac{1}{L}\cdot\lip(g_{q,L})\cdot\eps_1 . \notag 
\end{align}
However, it is straightforward to check that the above inequality holds in the \emph{reverse} direction for any $M$ satisfying
\begin{align}
M &> \frac{(L-1)L(p_*(q,L,w) + \eps_4)}{\eps_4 - \frac{1}{L}\cdot\lip(g_{q,L})\cdot\eps_1}
= (L-1)L\paren{\frac{p_*(q,L,w)}{\frac{1}{L}\cdot\lip(g_{q,L})\cdot\eps_1} + 2}  \label{eqn:plotkin-contrad-concl}
% = (L-1)L\paren{\frac{8p_*(q,L,w)}{\tau} + 2} . \notag 
\end{align}
which holds by the hypothesis \Cref{eqn:m-large}. 
We have reached a contradiction which implies that \Cref{eqn:hyp-contradiction} must \emph{not} hold and therefore the conclusion of \Cref{thm:plotkin-list-dec-qary} is true. 
\end{proof}

By tuning the slack factors in \Cref{thm:plotkin-list-dec-qary}, one can easily obtain an analogous bound for \emph{exactly} constant-weight list-decodable codes. 

\begin{corollary}[Plotkin bound, $ q $-ary list-decoding, constant weight]
\label{cor:plotkin-list-dec-qary-const-wt}
Let $ \cC\subset[q]^n $ be a $ (p,L)_q $-list-decodable code with $ p = p_*(q,L,w) + \tau $ for $ w\in[0,(q-1)/q] $ and some small constant $ \tau>0 $. 
Assume that every codeword in $ \cC $ has Hamming weight $ nw $. 
Then 
\begin{align}
|\cC| &< 
% \paren{\frac{4\lip(g_{q,L})}{L\tau} + 1} 
N(c,L,M_0) , \notag 
% \label{eqn:plotkin-ub-quant} 
\end{align}
where
\begin{align}
c &\coloneqq \paren{\frac{80^2L^6q^{4L-2}}{2\tau^2} + 1}^{q^L} , \quad 
M_0 \coloneqq \max\curbrkt{ \frac{2^{11}L^7q^{2L}}{\tau^2} + L-2 , (L-1)L\paren{\frac{4p_*(q,L,w)}{\tau} + 1} + 1} , \notag 
\end{align}
and the function $ N(\cdot,\cdot,\cdot) $ denotes the Ramsey number given by \Cref{thm:hypergraph-ramsey}. 
\end{corollary}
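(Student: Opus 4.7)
The plan is to retrace the proof of \Cref{thm:plotkin-list-dec-qary} with the slack parameters retuned to exploit that all codewords now have Hamming weight exactly $nw$ rather than lying in a window of width $2\eps_1$ around $nw$. With exact constant weight, the average column weight $\ol w$ (cf.\ \Cref{eqn:def-w-bar}) coincides with $w$, so the Lipschitz slack $\frac{1}{L}\lip(g_{q,L})\eps_1$ that appears in the upper bound \Cref{eqn:dc-ub} on the double-counted average radius vanishes. This frees up $\eps_4$ (which controlled the lower bound on the average radius) to be chosen directly as a function of $\tau$ instead of through $\eps_1$.

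Concretely, I would run the first two steps of the proof of \Cref{thm:plotkin-list-dec-qary} verbatim: the Ramsey argument extracting an equicoupled subcode $\cC_2 \subseteq \cC$ of size $M$ with the same quantization parameter $\eps_2$ and colour count $c$, followed by the symmetrization step yielding \Cref{eqn:bound-asymm-by-eps3} with the same $\eps_3 = \tau/(8q^L)$. Both of these depend only on $q,L,\tau$, so the hypothesis $M \ge 2^{11}L^7q^{2L}/\tau^2 + L - 2$ suffices to produce the monochromatic approximately-symmetric $\cC_2$ as before. Then, in the lower bound argument for the double-counted quantity \Cref{eqn:double-count-object}, I would set $\eps_4 \coloneqq \tau/4$ \emph{directly}, independently of any $\eps_1$. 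The chain \Cref{eqn:use-bound-on-avgrad,use-eps234,eqn:dc-contradict} still gives $\rad(\vx_1,\dots,\vx_L) \le n(p_*(q,L,w) + \tau/4 + q^L(2\eps_2 + \eps_3)) < np$, contradicting list-decodability; the check $q^L(2\eps_2 + \eps_3) \le \tau/2$ is the same one already performed in \Cref{use-eps234}.

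For the upper bound on \Cref{eqn:double-count-object}, since every codeword of $\cC_2$ has Hamming weight exactly $nw$, the averages $\ol w = w$ exactly, and all column weights $w_k$ already lie in $[0,(q-1)/q]$ (since $w$ is assumed to so lie), so no weight-surgery is needed and convexity of $g_{q,L}$ (\Cref{thm:convex-gql}) combined with Schur convexity of $f_{q,L}$ (\Cref{thm:schur-convex-fql}) gives the clean bound $\frac{1}{n}\sum_{k\in[n]}(1 - \frac{1}{L}g_{q,L}(w_k)) \le p_*(q,L,w)$. Combining the two bounds then forces
\[
    \paren{1 - \frac{(L-1)L}{M}}(p_*(q,L,w) + \tau/4) \le p_*(q,L,w),
\]
which rearranges to $M \le (L-1)L(4p_*(q,L,w)/\tau + 1)$. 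Thus taking $M_0$ strictly larger than this value (as stated in the corollary) yields a contradiction, completing the proof.

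The work is essentially bookkeeping; the only point requiring care is verifying that the value $\eps_4 = \tau/4$ is simultaneously (i) small enough that $\eps_4 + q^L(2\eps_2 + \eps_3) < \tau$ so the list-decodability contradiction still closes, and (ii) large enough that the resulting bound on $M$ matches the factor $4p_*/\tau$ in the statement. Both are immediate from the choices $\eps_2 = \tau^2/(80^2 L^6 q^{3L-2})$ and $\eps_3 = \tau/(8q^L)$ inherited from \Cref{thm:plotkin-list-dec-qary}; no genuinely new estimate is needed.
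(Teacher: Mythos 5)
Your proposal is correct and follows essentially the same route as the paper: the paper's proof simply says to set $\eps_1 = 0$ and $\eps_4 = \tau/4$ in \Cref{eqn:plotkin-contrad-concl} of \Cref{thm:plotkin-list-dec-qary}, which is precisely the parameter retuning you carry out (and your check that $\eps_4 + q^L(2\eps_2 + \eps_3) < \tau$, and the algebraic rearrangement to $M \le (L-1)L(4p_*(q,L,w)/\tau + 1)$, both match the paper's bookkeeping). No discrepancy.
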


\begin{proof}
The corollary follows by setting $ \eps_1 = 0 $ and $ \eps_4 = \tau/4 $ in \Cref{eqn:plotkin-contrad-concl} in the proof of \Cref{thm:plotkin-list-dec-qary}. 
\end{proof}

By partitioning the code into approximately constant-weight subcodes, one can obtain a bound analogous to that in \Cref{thm:plotkin-list-dec-qary} for codes \emph{without} weight constraints. 

\begin{corollary}[Plotkin bound, $ q $-ary list-decoding, weight unconstrained]
\label{cor:plotkin-list-dec-qary-unconstr}
Let $ \cC\subset[q]^n $ be a $ (p,L)_q $-list-decodable code with $ p = p_*(q,L) + \tau $ for some small constant $ \tau>0 $. 
Then 
\begin{align}
|\cC| &< 
q\paren{\frac{4\lip(g_{q,L})}{L\tau} + 1} 
N(c,L,M_0) , \notag 
% \label{eqn:plotkin-ub-quant} 
\end{align}
where $ \lip(g_{q,L}) $ depends only on $q,L$ and has been defined in \Cref{eqn:def-lip-gql}, 
\begin{align}
    \begin{split}
        c &\coloneqq \paren{\frac{80^2L^6q^{4L-2}}{2\tau^2} + 1}^{q^L} , \\
        M_0 &\coloneqq \max\curbrkt{ \frac{2^{11}L^7q^{2L}}{\tau^2} + L-2 , (L-1)L\paren{\frac{8p_*(q,L)}{\tau} + 2} + 1} , 
    \end{split}
    \label{eqn:def-m0-unconstr} 
\end{align}
and the function $ N(\cdot,\cdot,\cdot) $ denotes the Ramsey number given by \Cref{thm:hypergraph-ramsey}. 
\end{corollary}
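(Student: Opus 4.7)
The plan is to reduce to the approximately constant-weight setting already handled in \Cref{thm:plotkin-list-dec-qary} via a two-level partitioning of $\cC$. The subtle point I need to address is that the theorem applies only when every codeword has weight at most $n(q-1)/q$, reflecting the fact that the convexity of $g_{q,L}$ (\Cref{thm:convex-gql}) is proved only on $[0,(q-1)/q]$. To get around this I would first use a simple alphabet-relabeling trick: by pigeonhole, every $\vc \in \cC$ has some symbol $a(\vc) \in [q]$ appearing at least $n/q$ times. Partition $\cC$ into $q$ classes $\cC^{(1)},\dots,\cC^{(q)}$ according to this most frequent symbol (breaking ties arbitrarily), and inside each $\cC^{(a)}$ permute the alphabet so that $a$ and $q$ are swapped. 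Since alphabet permutations are Hamming isometries, the relabeled $\cC^{(a)}$ remains $(p,L)_q$-list-decodable, and after relabeling every codeword in it has weight at most $n(q-1)/q$.

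Next I would partition each relabeled $\cC^{(a)}$ by weight. Choosing $\eps_1 \coloneqq \frac{L\tau}{8\lip(g_{q,L})}$ to match the hypothesis of \Cref{thm:plotkin-list-dec-qary}, I split the interval $[0,(q-1)/q]$ into buckets of width $2\eps_1$ centered at the values $w_k \coloneqq (2k+1)\eps_1$ and group the codewords of $\cC^{(a)}$ accordingly into sub-subcodes $\cC^{(a,k)}$. The number of nonempty buckets is at most $\frac{1}{2\eps_1} + 1 = \frac{4\lip(g_{q,L})}{L\tau} + 1$, so the total number of sub-subcodes across both levels is at most $q\paren{\frac{4\lip(g_{q,L})}{L\tau} + 1}$, matching the combinatorial factor in the corollary.

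Finally I would apply \Cref{thm:plotkin-list-dec-qary} to each $\cC^{(a,k)}$, which is $(p,L)_q$-list-decodable with $p = p_*(q,L,w_k) + \tau'$ for $\tau' \coloneqq p - p_*(q,L,w_k)$. By the monotonicity of $g_{q,L}$ (\Cref{prop:mono-gqL}), $p_*(q,L,w_k) \le p_*(q,L,(q-1)/q) = p_*(q,L)$, so $\tau' \ge \tau$; in particular the chosen $\eps_1$ satisfies $\eps_1 \le \frac{L\tau'}{8\lip(g_{q,L})}$, and the theorem yields $|\cC^{(a,k)}| < N(c',L,M_0')$ for constants $c', M_0'$ that are decreasing in $\tau'$ and, in the case of $M_0'$, nondecreasing in $p_*(q,L,w_k)$. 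Substituting $\tau$ for $\tau'$ and $p_*(q,L)$ for $p_*(q,L,w_k)$ only enlarges the bound, giving $|\cC^{(a,k)}| < N(c,L,M_0)$ with $c,M_0$ as in \Cref{eqn:def-m0-unconstr}; summing over all sub-subcodes delivers the desired bound. The only nontrivial idea is the relabeling reduction that brings weights into $[0,n(q-1)/q]$; once that step is in place the remainder is routine bucketing plus a direct invocation of the constant-weight theorem.
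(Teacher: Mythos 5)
Your proposal is correct and follows essentially the same route as the paper: both arguments use the observation that every word in $[q]^n$ has a majority symbol and hence lies within Hamming distance $n(q-1)/q$ of some constant vector $\vu$, partition (or cover) $\cC$ accordingly by symbol and weight bucket, relabel the alphabet to push weights into $[0,(q-1)/q]$, and then invoke \Cref{thm:plotkin-list-dec-qary} on each piece, using \Cref{prop:mono-gqL} to bound the constants $c$ and $M_0$ uniformly. The only cosmetic difference is that you build a genuine partition (most-frequent symbol first, then weight buckets) whereas the paper describes a covering by the sets $\cX_{w,u}$; the resulting count of $q(\frac{4\lip(g_{q,L})}{L\tau}+1)$ pieces and the final bound are the same.
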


\begin{proof}
% Suppose $ p = p_*(q,L,w) + \tau $ and we are given a $ (p,L)_q $-list-decodable code $ \cC\in[q]^n $. 
% The goal is to show that $ |\cC|\le M_* $ for some $ M_* $ that depends on $ q,L,w,\tau $, but not on $n$. 
% Suppose towards a contradiction that \Cref{eqn:plotkin-ub-quant} holds in the reverse direction:
% \begin{align}
% |\cC_1| &\ge % \paren{\frac{4\lip(g_{q,L})}{L\tau} + 1} 
% N(c,L,M_0) . 
% \label{eqn:hyp-contradiction}
% \end{align} 
We slice $ \cC $ into a sequence subcodes each of which is almost constant-weight for some weight at most $n(q-1)/q$, so that \Cref{thm:plotkin-list-dec-qary} can be applied. 
Let $ \eps_1 $ be a small positive constant defined as
\begin{align}
\eps_1 &\coloneqq \frac{L\tau}{8\lip(g_{q,L})} , \label{eqn:def-eps1}
\end{align}
where $ \lip(g_{q,L}) $ only depends on $ q,L $ and has been defined in \Cref{eqn:def-lip-gql}. 
Let $ \cW \coloneqq 2\eps_1\bbZ\cap[0,(q-1)/q] $ be an $\eps_1$-net of the interval $ [0,(q-1)/q] $. 
For $w \in \cW $ and $u \in [q]$ define 
\begin{align*}
    \cX_{w,u} \coloneqq \curbrkt{\vx \in [q]^n: \max\{0,(w-\eps_1)n\} \leq \disth(\vx, \vu) \leq \min\{(w+\eps_1)n, n(q-1)/q\}} \ ,
\end{align*}
where $\vu = (u,u,\dots,u)$ denotes the all-$u$ vector. First, we note that  
\begin{align}
    [q]^n \subseteq \bigcup_{w\in\cW} \bigcup_{u\in[q]} \cX_{w,u} . \notag
\end{align}
Indeed, given any $\vx \in [q]^n$ we have that for some $u \in [q]$, $\disth(\vx, \vu) \leq n\frac{q-1}{q}$, and thus for some $w \in \cW$ we have 
\begin{align}
    \max\{0,(w-\eps_1)n\} \leq \disth(\vx, \vu) \leq \min\{(w+\eps_1)n, n(q-1)/q\} . 
    \label{eqn:dist-to-u}
\end{align}
Define $\cC_{w,u} \coloneqq \cC \cap \cX_{w,u}$.
% and subsequently refine the collection $\{\cC_{w,u}\}_{w\in\cW,u\in[q]}$ to a partition $\{\cC_i\}_{i=1}^K$ of $\cC$ (for some $ K $ to be bounded momentarily). 
For each subcode $ \cC_{w,u} $ with $ u\ne q $, we interchange $u$ and $q$ in each codeword and denote, with slight abuse of notation, the resulting subcode again by $ \cC_{w,u} $. 
This does not affect the $(p,L)_q$-list-decodability of $ \cC_{w,u} $ since list-decodability is preserved under permutation on the alphabet. 
However, \Cref{eqn:dist-to-u} then becomes a guarantee on the \emph{Hamming weight} of each codeword in the subcode:
\begin{align}
    \max\{0,(w-\eps_1)n\} \leq \wth(\vx) \leq \min\{(w+\eps_1)n, n(q-1)/q\} . 
    \label{eqn:wt-guarantee-subcode}
\end{align}

To summarize, at this point, we have obtained $K$ subcodes $ \cC_1,\cC_2,\cdots,\cC_K\subset\cC $ such that 
\begin{enumerate}
    \item $ \{\cC_i\}_{i=1}^K $ forms a covering of $\cC$, i.e., $ \bigcup_{i=1}^K \cC_i \supset \cC $;
    \item for each $ i\in[K] $, the codewords in $ \cC_i $ all have Hamming weight between $ \max\{0,n(w_i-\eps_1)\} $ and $ \min\{n(w_i+\eps_1), n(q-1)/q\} $ for some $ w_i\in[0,(q-1)/q] $; 
    \item for each $ i\in[K] $, $ \cC_i $ is $ (p,L)_q $-list-decodable; 
    \item $ K=|\cW|\cdot q\le q\ceil{\frac{(q-1)/q}{2\eps_1}}\le q\paren{\frac{1}{2\eps_1} + 1} = q\paren{\frac{4\lip(g_{q,L})}{L\tau} + 1} $. 
\end{enumerate}
Therefore, all subcodes satisfy the assumptions of \Cref{thm:plotkin-list-dec-qary} and we can apply the theorem to each of them and get
\begin{align}
    |\cC_i| &\le N(c,L,M_i) , \notag 
\end{align}
for each $ i\in[K] $, where $ c $ is defined in \Cref{eqn:def-c-m0} and 
\begin{align}
    M_i &\coloneqq \max\curbrkt{ \frac{2^{11}L^7q^{2L}}{\tau^2} + L-2 , (L-1)L\paren{\frac{p_*(q,L,w_i)}{\frac{1}{L}\cdot\lip(g_{q,L})\cdot\eps_1} + 2} + 1} . \notag
\end{align}
We conclude that 
\begin{align}
    |\cC| &\le \sum_{i=1}^K |\cC_i| 
    \le \sum_{i=1}^K N(c,L,M_i)
    \le K\cdot N(c,L,M_0) , \notag 
\end{align}
where $ M_0 $ is defined in \Cref{eqn:def-m0-unconstr}. 
In the last inequality, we use the fact that $ g_{q,L}(w) $ is minimized at $ w = (q-1)/q $ (cf.\ \Cref{prop:mono-gqL}), so $ p_*(q,L,w) = 1 - \frac{1}{L} g_{q,L}(w) $ is maximized at $ w = (q-1)/q $ which gives $ 1 - \frac{1}{L} g_{q,L}((q-1)/q) = p_*(q,L) $ (cf.\ \Cref{cor:plotkin-pt-listdec}). 
This finishes the proof. 
\end{proof}

\section{Elias--Bassalygo bound for $q$-ary list-decoding: proof of \Cref{thm:eb-qary-list-dec-main}}
\label{sec:eb-list-dec-qary-pf}

\begin{theorem}[Elias--Bassalygo bound, $q$-ary list-decoding]
\label{thm:eb-qary-list-dec}
Let $ \cC\subset[q]^n $ be a $ (p,L)_q $-list-decodable code with $ p < p_*(q,L) $.
For any sufficiently small constant $ \tau>0 $, let $ w_{q,L,\tau} $ be the unique solution to the following equation in $ w\in[0, (q-1)/q] $, 
\begin{align}
    p_*(q,L,w) &= p - \tau . \label{eqn:subcode-above-plotkin} 
\end{align}
Then 
\begin{align}
    |\cC| &\le q\paren{\frac{4\lip(g_{q,L})}{L\tau} + 1}^2 N(c,L,M_0) \cdot \paren{n\ln(q)\cdot\sqrt{8nw_{q,L,\tau}(1-w_{q,L,\tau})}\cdot q^{n(1-H_q(w_{q,L,\tau}))} + 1} , \notag 
\end{align}
where $ c,M_0 $ are defined in \Cref{cor:plotkin-list-dec-qary-unconstr}. 
\end{theorem}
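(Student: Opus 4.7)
The plan is to follow the classical Elias--Bassalygo covering template and reduce to the approximately constant-weight Plotkin bound established in \Cref{thm:plotkin-list-dec-qary}. Set $w := w_{q,L,\tau}$, so that $p_*(q,L,w) = p - \tau$ and $w \in [0,(q-1)/q]$.

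First, a standard greedy set-cover argument, combined with the volume estimate $\card{\bham(\vec 0,nw)} \ge q^{nH_q(w)}/\sqrt{8nw(1-w)}$, shows that $[q]^n$ admits a covering by at most
\[
M_{\mathrm{cov}} \le n\ln(q)\sqrt{8nw(1-w)}\,q^{n(1-H_q(w))} + 1
\]
Hamming balls of radius $nw$. By pigeonhole, some ball $\bham(\vy^*, nw)$ contains at least $\card{\cC}/M_{\mathrm{cov}}$ codewords; call the intersection $\cC'$. Applying the coordinatewise symbol permutation that swaps $y_i^*$ with $q$ in each coordinate $i \in [n]$ preserves $(p,L)_q$-list-decodability and produces a code $\cC''$ with the same cardinality as $\cC'$ but with all Hamming weights lying in $[0,nw]$.

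Next, invoke the weight-slicing from the proof of \Cref{cor:plotkin-list-dec-qary-unconstr} on $\cC''$: with $\eps_1 := L\tau/(8\lip(g_{q,L}))$, partition $\cC''$ into at most $\frac{4\lip(g_{q,L})}{L\tau}+1$ approximately constant-weight subcodes indexed by values $w' \in [0,w]$. Since $w \le (q-1)/q$ and $g_{q,L}$ is non-increasing on $[0,(q-1)/q]$ by \Cref{prop:mono-gqL}, for every such $w'$ we have $p_*(q,L,w') \le p_*(q,L,w) = p - \tau$, so $p \ge p_*(q,L,w') + \tau$. This uniform slack allows \Cref{thm:plotkin-list-dec-qary} to apply to each slice, bounding each slice size by $N(c,L,M_0)$ (using the universal upper bound $p_*(q,L,w') \le p_*(q,L)$ to justify the slice-independent definition of $M_0$ inherited from \Cref{cor:plotkin-list-dec-qary-unconstr}).

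Multiplying $M_{\mathrm{cov}}$, the number of weight slices, and the per-slice Plotkin bound yields the claimed estimate. The main non-routine step is invoking the monotonicity of $g_{q,L}$ to secure a slack of at least $\tau$ uniformly across all slice weights $w' \le w$: this is precisely where the regime $p < p_*(q,L)$ and the choice $w = w_{q,L,\tau}$ are essential, and it is responsible for the exponent $q^{n(1 - H_q(w_{q,L,\tau}))}$ in the final bound. The additional factor $q\paren{\frac{4\lip(g_{q,L})}{L\tau}+1}$ in the stated inequality can be obtained by performing the full $(w',u)$-slicing inside each ball (as in \Cref{cor:plotkin-list-dec-qary-unconstr}) rather than the streamlined weight-only slicing sketched above; either variant yields the same asymptotic rate bound required for \Cref{thm:eb-qary-list-dec-main}.
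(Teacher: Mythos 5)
Your proposal follows the same Elias--Bassalygo template as the paper: cover $[q]^n$ by $K \lesssim q^{n(1-H_q(w_{q,L,\tau}))}$ Hamming balls of radius $nw_{q,L,\tau}$, restrict to the intersection with one ball, re-center by a coordinate-wise symbol permutation, partition by weight into $O(\lip(g_{q,L})/(L\tau))$ slices, use the monotonicity of $g_{q,L}$ on $[0,(q-1)/q]$ (from \Cref{prop:mono-gqL}) to check that every slice weight $w'\le w_{q,L,\tau}$ satisfies $p\ge p_*(q,L,w')+\tau$, and apply \Cref{thm:plotkin-list-dec-qary} to each slice. The pigeonhole step is cosmetically different from the paper's direct summation over all balls and slices, but the two are equivalent.

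One small imprecision worth flagging: your closing remark attributes the extra factor $q\paren{\frac{4\lip(g_{q,L})}{L\tau}+1}$ in the stated bound to a $(w',u)$-slicing inside each ball. That is not where it comes from. After re-centering so that the ball center is the all-$q$ vector, the center is known, so no slicing over $u\in[q]$ is needed -- only the weight slicing. In the paper's proof this extra factor appears simply because the per-slice size $|\cC_{i,w}|$ is bounded by the (weaker) unconstrained constant $q\paren{\frac{4\lip(g_{q,L})}{L\tau}+1}N(c,L,M_0)$ from \Cref{cor:plotkin-list-dec-qary-unconstr} rather than by $N(c,L,M_0)$ directly from \Cref{thm:plotkin-list-dec-qary}; this is pure slack in the constant. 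Your streamlined argument in fact proves the stronger bound $|\cC|\le\paren{\frac{4\lip(g_{q,L})}{L\tau}+1}N(c,L,M_0)\cdot M_{\mathrm{cov}}$, which implies the theorem statement since $q\ge 2$, so the proof is valid; only the explanation of the extra constant is off.
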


\begin{proof}
Let $ \cC\subset[q]^n $ be a $ (p,L)_q $-list-decodable code with $ p < p_*(q,L) $. 
Our goal is to upper bound the size of $ |\cC| $ and therefore upper bound the list-decoding capacity. 

Let $ \tau>0 $ be a sufficiently small constant. 
Let $ w_{q,L,\tau} $ be the solution to \Cref{eqn:subcode-above-plotkin} in $ w\in[0, (q-1)/q] $. 
Now we cover the Hamming space $ [q]^n $ using Hamming balls of equal radius $ nw_{q,L,\tau} $. 
Specifically, choose $ \vc_1,\cdots,\vc_K $ for some $K$ such that 
\begin{align}
    \bigcup_{i=1}^K \bham(\vc_i,nw_{q,L,\tau}) &= [q]^n . \label{eqn:cov-prop} 
\end{align}
By a simple random construction (cf.\ \Cref{lem:cov} in \Cref{app:aux-lem}), there exists such a set of covering centers of size at most
\begin{align}
    K &\le n\ln(q)\cdot\sqrt{8nw_{q,L,\tau}(1-w_{q,L,\tau})}\cdot q^{n(1-H_q(w_{q,L,\tau}))} + 1 . \notag 
\end{align}
For each $ i\in[K] $, define a subcode
\begin{align}
    \cC_i &\coloneqq \cC\cap\bham(\vc_i,nw_{q,L,\tau}) , \notag 
\end{align}
that is, the subcode obtained by restricting $\cC$ to the Hamming ball centered around $\vc_i$ of radius $ nw_{q,L,\tau} $. 
Note that different subcodes may not be disjoint. 
Let $ \eps_1\coloneqq\frac{L\tau}{8\lip(g_{q,L})} $. 
Take an $\eps_1$-net $\cW \coloneqq 2\eps_1\bbZ\cap[0,w_{q,L,\tau}]$ of the interval $ [0,w_{q,L,\tau}] $. 
Note that $ |\cW|\le\frac{1}{2\eps_1} + 1 = \frac{4\lip(g_{q,L})}{L\tau} + 1 $. 
For each $i\in[K]$, further define
\begin{align}
    \cC_{i,w} &\coloneqq \cC_i \cap (\bham(\vc_i,\min\{n(w+\eps_1),nw_{q,L,\tau}\}) \setminus \bham(\vc_i,\max\{0,n(w-\eps_1)\})) , \notag 
\end{align}
for each $ w\in\cW $, 
that is, the subcode obtained by collecting codewords in $\cC$ whose Hamming distance to $\vc_i$ is within $n(w\pm\eps_1)$. 
We then re-center each $ \cC_{i,w} $ by permuting the symbols in $\cC_{i,w}$ so that $\cC_{i,w}\subset\bham(\vq,\min\{n(w+\eps_1),nw_{q,L,\tau}\}) \setminus \bham(\vq,\max\{0,n(w-\eps_1)\})$.
Denote, with slight abuse of notation, the resulting subcode again by $\cC_{i,w}$.
Now each codeword in $\cC_{i,w}$ has Hamming weight between $ \max\{0,n(w-\eps_1)\} $ and $ \min\{n(w+\eps_1),nw_{q,L,\tau}\} $. 
This operation does not affect list-decodability of $\cC_{i,w}$. 
For any $w\in\cW$, 
\begin{align}
    p_*(q,L,w) + \tau &\le p_*(q,L,w_{q,L,\tau}) + \tau = p . \notag 
\end{align}
The inequality follows since $(i)$ by \Cref{prop:mono-gqL}, $ p_*(q,L,w) $ is non-decreasing on $ [0,(q-1)/q] $ and non-increasing on $ [(q-1)/q,1] $, and $(ii)$ $ w\le w_{q,L,\tau}\le(q-1)/q $. 
The equality is by the choice of $ w_{q,L,\tau} $ (cf.\ \Cref{eqn:subcode-above-plotkin}). 
Therefore, each $ \cC_{i,w} $ satisfies the assumptions of \Cref{thm:plotkin-list-dec-qary} and has size at most
\begin{align}
    |\cC_{i,w}| &\le q\paren{\frac{4\lip(g_{q,L})}{L\tau} + 1} N(c,L,M_0) \eqqcolon M_* , \notag 
\end{align}
where $ c,M_0 $ are defined in \Cref{eqn:def-m0-unconstr} and we define $ M_* $ as the constant on the RHS. 
The covering property \Cref{eqn:cov-prop} implies
\begin{align}
    \bigcup_{i=1}^K \bigcup_{w\in\cW} \cC_{i,w} &= \cC . \notag 
\end{align}
Consequently, 
\begin{align}
    |\cC| &\le \sum_{i=1}^K \sum_{w\in\cW} |\cC_{i,w}| \le K\cdot|\cW|\cdot M_* \notag 
\end{align}
which finishes the proof. 
\end{proof}

\section{Schur convexity of $ f_{q,L,\ell} $: proof of \Cref{thm:schur-convex-fqll}}
\label{sec:proof-schur-convex-fqll}
The proof of \Cref{thm:schur-convex-fqll} follows similar ideas to those used in \Cref{thm:schur-convex-fql} with suitable adjustments.
\begin{proof}[Proof of \Cref{thm:schur-convex-fqll}]
 
%Let us start by presenting an explicit expression for $ f_{q,L,\ell} $. 
%Take an arbitrary distribution $ P = (p_1,\cdots,p_q) \in \Delta([q]) $ with no zero atoms (otherwise reduce the alphabet size). 
%For $ \va\in\bbR^q $, let
%\begin{align}
%\maxl\curbrkt{\va} &\coloneqq \max_{\cQ\in\binom{[q]}{\ell}} \sum_{i\in \cQ} a_i . \notag 
%\end{align}
%denote the largest partial sum of $\ell$ terms in $ \va $, and let
The proof of \Cref{thm:schur-convex-fqll} again follows from verifying the Schur--Ostrowski criterion (\Cref{thm:schur-criterion}). 

First, by \Cref{eq:fqL'-exp} of \Cref{lemma:derivs}, we have 
\begin{align}
\frac{\partial}{\partial p_j} f_{q,L,\ell}(p_1,\cdots,p_q)
&= L \sum_{\va\in\cA_{q,L}} \binom{L-1}{\va} \maxl\curbrkt{\va + \ve_j} \paren{\prod_{i = 1}^qp_i^{a_i}} . \notag 
\end{align}
for all $j \in [q]$.

Let now $ 1\le k\ne j\le q $ and assume $ p_k > p_j $. 
The goal is to show $ \paren{\frac{\partial}{\partial p_k} - \frac{\partial}{\partial p_j}}f_{q,L,\ell}\ge0 $. 
%Calculations analogous to \Cref{eqn:first-deriv} yield the following expression for the first partial derivative:
We have 
\begin{align}
\paren{\frac{\partial}{\partial p_k} - \frac{\partial}{\partial p_j}}f_{q,L,\ell}(p_1,\cdots,p_q) 
&= L \sum_{\va\in\cA_{q,L}} \binom{L-1}{\va} \paren{\prod_{i = 1}^qp_i^{a_i}} \paren{\maxl\curbrkt{\va + \ve_k} - \maxl\curbrkt{\va + \ve_j}} . \notag 
\end{align}
Now observe that
\begin{align}
\maxl\curbrkt{\va + \ve_k} - \maxl\curbrkt{\va + \ve_j} &= \begin{cases}
1, & k\in\argmaxl\curbrkt{\va}, j\notin\argmaxl\curbrkt{\va} \\
-1, & k\notin\argmaxl\curbrkt{\va}, j\in\argmaxl\curbrkt{\va} \\
0, & \ow
\end{cases}
\end{align}
where we have defined 
\begin{align}
\argmaxl\curbrkt{\va} &\coloneqq \argmax_{\cQ\in\binom{[q]}{\ell}} \sum_{i\in\cQ} a_i \notag
\end{align}
to be the $\ell$-subset that achieves $\maxl\{\va\}$. 

It follows, via similar manipulations to \Cref{eqn:def-swap}, that 
\begin{align}
\frac{1}{L}\paren{\frac{\partial}{\partial p_k} - \frac{\partial}{\partial p_j}}f_{q,L,\ell}(p_1,\cdots,p_q) 
&= \sum_{\substack{\va\in\cA_{q,L} \\ k\in\argmaxl\curbrkt{\va} \\ j\notin\argmaxl\curbrkt{\va}}} \binom{L-1}{\va} \paren{\prod_{i\in[q]\setminus\{k,j\}}p_i^{a_i}} \paren{p_k^{a_k} p_j^{a_j} - p_k^{a_j} p_j^{a_k}} . \notag 
\end{align}
Finally, the above expression can be seen positive due to the assumption $ p_k>p_j $ and the condition $ a_k\ge a_j $. 
\end{proof}

\section{Convexity of $ g_{q,L,\ell} $: proof of \Cref{thm:convex-gqll}}
\label{sec:proof-convex-gqll}

\Cref{thm:convex-gqll} can be proved in an analogous way to \Cref{thm:convex-gql} by showing $ g_{q,L,\ell}''\ge0 $ with the adjustment of replacing $ \max $ with $ \maxl $. And indeed, quite pleasingly, the argument actually becomes \emph{simpler} when $\ell \geq 2$.
%In the list recovery regime, we have $\sum_{i=1}^{q-\ell} p_i=w$ and $\sum_{i=q-\ell+1}^q p_i=1-w$.
\begin{proof}[Proof of \Cref{thm:convex-gqll}]
%Following the reasoning from \Cref{sec:convex-gql}, one can show that
%\begin{align*}
%g_{q,L,\ell}''(w) &= \sum_{1\leq i,j\leq q-\ell} \paren{\left.\frac{\partial^2}{\partial p_i\partial p_j}\right|_{(p_1,\cdots,p_q) = P_{q,\ell,w}} f_{q,L,\ell}(p_1,\cdots,p_q)} \frac{1}{(q-\ell)^2} \notag \\
%& \quad - 2\sum_{ i=1}^{q-\ell}\sum_{j=q-\ell+1}^q \paren{\left.\frac{\partial^2}{\partial p_i\partial p_j}\right|_{(p_1,\cdots,p_q) = P_{q,\ell,w}} f_{q,L,\ell}(p_1,\cdots,p_q)} \frac{1}{(q-\ell)\ell} \notag \\
%&\quad +\sum_{q-\ell+1 \leq i,j\leq q} \paren{\left.\frac{\partial^2}{\partial p_i\partial p_j}\right|_{(p_1,\cdots,p_q) = P_{q,\ell,w}} f_{q,L,\ell}(p_1,\cdots,p_q)} \frac{1}{\ell^2} \notag \\
%&= \frac{1}{(q-1)^2} \sum_{1\leq i,j\leq q-\ell} \frac{\partial^2}{\partial p_i\partial p_j} - \frac{2}{(q-\ell)\ell} \sum_{i=1}^{q-\ell}\sum_{j=q-\ell+1}^{q}\frac{\partial^2}{\partial p_i\partial p_j}
%+ \frac{1}{\ell^2} \sum_{q-\ell+1\leq i,j\leq q} \frac{\partial^2}{\partial p_i\partial p_j}\right|_{(p_1,\cdots,p_q)= P_{q,\ell,w}} f_{q,L,\ell}(p_1,\cdots,p_q)} .
%\end{align*}
By \Cref{eq:exp-gqL''} from \Cref{lemma:derivs}, we have
\begin{align*}
    g_{q,L,\ell}''(w) = L(L-1)\paren{\frac{w}{q-\ell}}^{L-2}\sum_{\va \in \cA_{q,L-2}} \binom{L-2}{\va}\paren{\frac{(q-\ell)(1-w)}{\ell w}}^{a_{q-\ell+1}+\cdots+a_q}G_\ell(\va) 
\end{align*}
where
\begin{align*}
    G_\ell(\va) &= \frac{1}{(q-\ell)^2}\sum_{1\leq i,j \leq q-\ell}\maxl\{\va+\ve_i+\ve_j\} - \frac{2}{(q-\ell)\ell}\sum_{i=1}^{q-\ell}\sum_{j=q-\ell+1}^q\maxl\{\va+\ve_i+\ve_j\} \\
    &\quad +\frac{1}{\ell^2}\sum_{q-\ell+1\leq i,j \leq q}\maxl\{\va+\ve_i+\ve_j\}.
\end{align*}
In this case, it turns out that $G_\ell(\va) \geq 0$ for all $\va \in \cA_{q,L-2}$; this clearly suffices to derive the non-negativity of $g_{q,L,\ell}''(w)$ for all $w \in \left[0,1\right]$.

We first observe that
$$
\sum_{1\leq i,j\leq q-
\ell}\frac{1}{(q-\ell)^2}-\sum_{i=1}^{q-\ell}\sum_{j=q-\ell+1}^q \frac{2}{(q-\ell)\ell}+\sum_{q-\ell+1\leq i,j\leq q}\frac{1}{\ell^2}=0 \ .
$$
Therefore
\begin{align}
    G_\ell(\va)&=\frac{1}{(q-\ell)^2}\sum_{i,j \in [q-\ell]}\paren{\maxl\{\va+\ve_i+\ve_j\}-\maxl\{\va\}} \notag \\
    &- \frac{2}{(q-\ell)(\ell)}\sum_{i=1 }^{q-\ell}\sum_{j=q-\ell+1}^q\paren{\maxl\{\va+\ve_i+\ve_j\}-\maxl\{\va\}} \notag \\
    &+ \frac{1}{\ell^2}\sum_{q-\ell+1\leq i,j \leq q}\paren{\maxl\{\va+\ve_i+\ve_j\}-\maxl\{\va\}}; \label{eqn:Gell-zero}
\end{align}
as before, we lower bound each of the three terms in the above expression. 

Fix $j \in [q]$ such that $a_j$ is an $\ell$-th largest element in $\va$. Let $S=\{i\in [q]: a_i\geq a_j, a_i\in \va\}$. By the definition of $a_j$, we have $|S|\geq \ell$. Let $S_1=S\cap [q-\ell]$, $S_2=S-S_1$. Assume that the size of $S_1,S_2$ are $r_1,r_2$ respectively.

First, we claim
\begin{align}
    \sum_{1\leq i,j \leq q-\ell}\paren{\maxl\{\va+\ve_i+\ve_j\} - \maxl\{\va\}} \geq 2r_1^2 + 2r_1(q-\ell-r_1) = 2r_1(q-\ell) .
\end{align}
Indeed, if $i,j\in S_1$ then $\maxl\{\va+\ve_i+\ve_j\} - \maxl\{\va\}=2$; if $i\in S_1$ and $j\notin S_1$ then $\maxl\{\va+\ve_i+\ve_j\} - \maxl\{\va\}=1$; and if $i\notin S_1$ and $j\in S_1$ then $\maxl\{\va+\ve_i+\ve_j\} - \maxl\{\va\}=1$. As we have $\maxl\{\va+\ve_i+\ve_j\} - \maxl\{\va\}\geq 0$ for all $i,j$, the claimed lower bound follows.

Next,
\begin{align*}
    \sum_{i=1 }^{q-\ell}\sum_{j=q-\ell+1}^q\paren{\maxl\{\va+\ve_i+\ve_j\}-\maxl\{\va\}} = 2r_1r_2+r_1(\ell-r_2)+r_2(q-\ell-r_1)=r_1\ell+r_2(q-\ell),
\end{align*}
as $\maxl\{\va+\ve_i+\ve_j\}-\maxl\{\va\}=2$ if $i\in S_1,j\in S_2$; $\paren{\maxl\{\va+\ve_i+\ve_j\}-\maxl\{\va\}}=1$ if $i\in S_1,j\notin S_2$ or $i\notin S_1,j\in S_2$; and otherwise $\paren{\maxl\{\va+\ve_i+\ve_j\}-\maxl\{\va\}}=0$.
Finally,
\begin{align}
    \sum_{q-\ell+1\leq i,j \leq q}\paren{\maxl\{\va+\ve_i+\ve_j\} - \maxl\{\va\}} \geq 2r_2^2 + 2r_2(\ell-r_2) = 2r_2\ell.
\end{align}
This is because if $i,j\in S_2$ then $\maxl\{\va+\ve_i+\ve_j\} - \maxl\{\va\}=2$; if $i\in S_2$ and $j\notin S_2$ then $\maxl\{\va+\ve_i+\ve_j\} - \maxl\{\va\}=1$; and if $i\notin S_2$ and $j\in S_2$ then $\maxl\{\va+\ve_i+\ve_j\} - \maxl\{\va\}=1$.
Thus, $G_\ell(\va) \geq \frac{1}{(q-\ell)^2}2r_1(q-\ell) - \frac{2r_1\ell+2r_2(q-\ell)}{(q-\ell)\ell}+\frac{1}{\ell^2}2r_2\ell =0$, as claimed.
\end{proof}

\begin{remark} \label{rem:ell-2-convexity}
    For $\ell \geq 2$, note that we had an easier time understanding how $\max_\ell\{\va + \ve_i + \ve_j\}$ can differ from $\max_\ell\{\va + \ve_i + \ve_j\}$ than we did for the $\ell=1$ case. This makes sense as $\va+\ve_i+\ve_j$ increments the count of up to 2 alphabet symbols, so (if necessary) both can be included in a set achieving  $\argmaxl\{\va+\ve_i+\ve_j\}$. This is responsible for the simplification in the argument, and more importantly yields convexity on the entire interval $[0,1]$ (which we recall is false in general for the $\ell=1$ case). 
\end{remark}

\section{Plotkin bound for list-recovery: proof of \Cref{thm:plotkin-list-rec-main}}
\label{sec:plotkin-list-rec}
% !TEX root = ./list-dec-qary-and-list-rec.tex

\begin{theorem}[Plotkin bound, list-recovery]
\label{thm:plotkin-list-rec}
Assume $ 2\le\ell\le q-1 $ is an integer. 
Let $ \cC\subset[q]^n $ be a $ (p,\ell,L)_q $-list-recoverable code with $ p = p_*(q,\ell,L,w) + \tau $ for $ w\in[0,1] $ and some small constant $ \tau\in(0,1) $. Let 
\begin{align}
    0<\eps_1<\frac{L\tau}{8\lip(g_{q,L,\ell})} , \notag 
\end{align}
where $ \lip(g_{q,L,\ell}) $ denotes the Lipschitz constant of $ g_{q,L,\ell} $:
\begin{align}
    \lip(g_{q,L,\ell}) &\coloneqq \max_{w\in[0,1]} g'_{q,\ell,L}(w) . \label{eqn:lip-gqll} 
\end{align}
Assume that every codeword in $\cC$ has list-recovery weight between $ n(w - \eps_1) $ and $ n(w + \eps_1) $. 
Then 
\begin{align}
    |\cC| &< N(c,L,M_0) 
    \label{eqn:plotkin-bound-list-rec}
\end{align}
where 
\begin{align}
\begin{split}
    c &\coloneqq \paren{\frac{80^2L^6q^{4L-2}}{2\tau^2} + 1}^{q^L} , \\ 
    M_0 &\coloneqq \max\curbrkt{ \frac{2^{11}L^7q^{2L}}{\tau^2} + L-2 , (L-1)L\paren{\frac{p_*(q,\ell,L,w)}{\frac{1}{L}\cdot\lip(g_{q,L,\ell})\cdot\eps_1} + 2} + 1} ,
\end{split}
\label{eqn:def-c-m0-list-rec}
\end{align}
and the function $ N(\cdot,\cdot,\cdot) $ denotes the Ramsey number given by \Cref{thm:hypergraph-ramsey}. 
\end{theorem}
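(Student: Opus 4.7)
The plan is to closely follow the template of the proof of \Cref{thm:plotkin-list-dec-qary}, substituting Hamming distance with list-recovery distance, plurality with $\ell$-plurality, and invoking \Cref{thm:schur-convex-fqll,thm:convex-gqll} (the Schur convexity of $f_{q,L,\ell}$ and convexity of $g_{q,L,\ell}$) in place of their $\ell=1$ counterparts. Suppose for contradiction that $|\cC|\ge N(c,L,M_0)$ with $c,M_0$ as in \Cref{eqn:def-c-m0-list-rec}. Set $\eps_2\coloneqq\tau^2/(80^2L^6q^{3L-2})$ and $\eps_3\coloneqq\tau/(8q^L)$ as in the list-decoding proof.

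First I would apply the Ramsey step verbatim: build the complete $L$-uniform hypergraph on $\cC$, colour each hyperedge $\{\vx_1,\dots,\vx_L\}$ by the closest distribution in an $\eps_2$-net $\cN\subset\Delta([q]^L)$ to $\type_{\vx_1,\dots,\vx_L}$, and apply \Cref{thm:hypergraph-ramsey} to extract a monochromatic subcode $\cC_2$ of size $M\ge M_0$. This yields a common distribution $Q_{X_1,\dots,X_L}\in\cN$ with $\norminf{\type_{\vx_1,\dots,\vx_L}-Q_{X_1,\dots,X_L}}\le\eps_2$ for every $L$-list in $\cC_2$. The Kómlos-style symmetrization lemma (\Cref{eqn:komlos} in the excerpt) then forces $Q_{X_1,\dots,X_L}$ to be $\eps_3$-approximately symmetric, as in \Cref{eqn:bound-asymm-by-eps3}. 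This part is identical to the list-decoding case since it only concerns the joint type of codewords, not the metric used.

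Next I would run the double-counting argument on the $\ell$-average radius:
\begin{align*}
    T \coloneqq \frac{1}{M^L}\sum_{(i_1,\dots,i_L)\in[M]^L}\ol{\rad}_\ell(\vx_{i_1},\dots,\vx_{i_L}) .
\end{align*}
For the lower bound, $(p,\ell,L)_q$-list-recoverability of $\cC_2$ together with the approximate symmetry of $Q_{X_1,\dots,X_L}$ implies that each distinct $L$-tuple in $\cC_2$ has $\ell$-average radius at least $n(p_*(q,\ell,L,w)+\eps_4)$, where $\eps_4\coloneqq\tfrac{2}{L}\lip(g_{q,L,\ell})\eps_1$. The argument is the same as in \Crefrange{eqn:bound-on-avgrad}{eqn:dc-contradict}: if the $\ell$-average radius were smaller, then using the optimal centroid $\vcY^*$ defined via $\argplur_\ell$ at each coordinate and the approximate symmetry, one finds that $\max_{i\in[L]}\distlr(\vx_i,\vcY^*)$ is strictly less than $np$, contradicting list-recoverability. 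Accounting for degenerate tuples yields $T\ge(1-(L-1)L/M)\,n(p_*(q,\ell,L,w)+\eps_4)$. For the upper bound, expanding the sum column-by-column in terms of the empirical column distribution $P_k(x)=\tfrac{1}{M}\sum_{\vx\in\cC_2}\indicator{x_k=x}$ gives
\begin{align*}
    T = \sum_{k\in[n]}\paren{1-\tfrac{1}{L}f_{q,L,\ell}(P_k)} .
\end{align*}

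Finally, Schur convexity of $f_{q,L,\ell}$ (\Cref{thm:schur-convex-fqll}) applied to each $P_k$ with its total mass on the top $\ell$ coordinates equal to $1-w_k\coloneqq\sum_{i=q-\ell+1}^q P_k(i)$ bounds $f_{q,L,\ell}(P_k)\ge g_{q,L,\ell}(w_k)$, and then convexity of $g_{q,L,\ell}$ on the entire interval $[0,1]$ (\Cref{thm:convex-gqll}; this is where the argument is \emph{cleaner} than in \Cref{thm:plotkin-list-dec-qary}, as no surgery on the $w_k$ is needed) together with Jensen's inequality gives $T\le n(1-\tfrac{1}{L}g_{q,L,\ell}(\ol w))$ where $\ol w=\tfrac{1}{n}\sum_k w_k$. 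The weight hypothesis on $\cC$ ensures $|\ol w - w|\le\eps_1$, and the Lipschitz bound \Cref{eqn:lip-gqll} then yields $T\le n(p_*(q,\ell,L,w)+\tfrac{1}{L}\lip(g_{q,L,\ell})\eps_1)$. Comparing this with the lower bound contradicts the choice of $M_0$, completing the proof. The main non-routine step is verifying the equivalent of \Cref{eqn:apply-asymm} for the list-recovery metric; however, since $\distlr(\vx,\vcY)=\sum_i\indicator{x_i\notin\cY_i}$ has exactly the same ``indicator'' structure as Hamming distance (with $\indicator{x_i\ne y_i}$ replaced by $\indicator{x_i\notin\cY_i}$), the swapping argument for bounding $|\distlr(\vx_i,\vcY)-\distlr(\vx_j,\vcY)|$ in terms of $\norminf{Q-Q^{\sigma_{i,j}}}$ carries over directly.
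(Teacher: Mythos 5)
Your proof sketch is correct and tracks the paper's own proof of \Cref{thm:plotkin-list-rec} essentially step for step: the same Ramsey extraction of an equi-coupled subcode, the same Kómlos-style symmetrization with the same $\eps_2,\eps_3,\eps_4$, the same double-counting of the $\ell$-average radius, and the same application of Schur convexity of $f_{q,L,\ell}$ followed by convexity of $g_{q,L,\ell}$ on all of $[0,1]$ (correctly observing that the surgery on the column weights $w_k$ becomes unnecessary for $\ell\ge2$). You also correctly flag that the only genuinely new check is the list-recovery analogue of \Cref{eqn:apply-asymm}, and your observation that it carries over because $\distlr$ has the same indicator structure as $\disth$ matches what the paper does in the derivation leading to \Cref{eqn:lr-dist-close}.
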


\begin{remark}
Unlike the list-decoding version (cf.\ \Cref{thm:plotkin-list-dec-qary}), $w$ here is \emph{not} required to lie in $ [0,(q-\ell)/q] $ which seems to be a natural analogue to the condition in \Cref{thm:plotkin-list-dec-qary}. 
Instead, the proof below works for any $w\in[0,1]$ thanks to \Cref{thm:convex-gqll} which guarantees convexity of $g_{q,L,\ell}$ on the whole $[0,1]$ (see \Cref{eqn:use-conv-lr} in the proof for more details). 
Note that the latter statement is not true for list-decoding (cf.\ \Cref{thm:convex-gql}). 
\end{remark}

\begin{proof}[Proof of \Cref{thm:plotkin-list-rec-main}.]
The proof largely follows the structure of the proof of \Cref{thm:plotkin-list-dec-qary}. 
Suppose $ p = p_*(q,\ell,L,w) + \tau $ and we are given a $ (p,\ell,L)_q $-list-recoverable code $ \cC_1\subset[q]^n $ all of whose codewords have list-recovery weight between $ n(w - \eps_1) $ and $ n(w + \eps_1) $. 
The goal is to show that $ |\cC_1|\le M_* $ for some $ M_* $ depending only on $ q,\ell,L,w,\tau $, but not on $n$. 
Suppose towards a contradiction that \Cref{eqn:plotkin-bound-list-rec} holds in the reverse direction. 
Then according to the \textbf{first} and \textbf{second} steps in the proof of \Cref{thm:plotkin-list-dec-qary}, we can extract a subcode $ \cC_2\subset\cC_1 $ satisfying the following properties
\begin{enumerate}
    \item $ |\cC_2| = M \ge M_0 $ where $M_0$ is defined in \Cref{eqn:def-c-m0-list-rec}. 
    
    \item every codeword in $ \cC_2 $ has list-recovery weight between $ n(w-\eps_1) $ and $ n(w + \eps_1) $; 
    
    \item there exists a distribution $ Q_{X_1,\cdots,X_L}\in\Delta([q]^L) $ satisfying \Cref{eqn:apx-symm,eqn:bound-asymm-by-eps3,eqn:def-eps3} such that \Cref{eqn:equicoupled} holds for every $L$-list in $ \cC_2 $ with $\eps_2,\eps_3$ to be determined in \Cref{eqn:list-rec-determine-eps234}; 
    \item $ \cC_2 $ is $ (p,\ell,L)_q $-list-recoverable. 
\end{enumerate}
The rest of the proof is devoted to a pair of upper and lower bounds on 
\begin{align}
    \frac{1}{M^L} \sum_{(i_1,\cdots,i_L)\in[M]^L} \ol{\rad}_\ell(\vx_{i_1}, \cdots, \vx_{i_L}) , 
    \label{eqn:list-rec-dc}
\end{align}
which in turn implies an upper bound on $ M $. 

A lower bound on \Cref{eqn:list-rec-dc} follows from the list-recoverability guarantee of $ \cC_2 $. 
Indeed, suppose towards a contradiction that $ \ol{\rad}_\ell(\vx_1,\cdots,\vx_L)\le n(p_*(q,\ell,L,w) + \eps_4) $ for some list $ \{\vx_1,\cdots,\vx_L\}\in\binom{\cC_2}{L} $ and $ \eps_4 $ will be determined in \Cref{eqn:list-rec-determine-eps234}. 
Then by the definition of $\ell$-average radius, 
\begin{align}
    \frac{1}{L} \sum_{i = 1}^L \distlr(\vx_i,\vcY) &\le n(p_*(q,\ell,L,w) + \eps_4) \label{eqn:upper-bound-avgrad-lr} 
\end{align}
where $ \vcY\in\binom{[q]}{\ell}^n $ is defined as $ \cY_i \coloneqq \argplur_\ell(x_{1,i},\cdots,x_{L,i}) $ for every $ i\in[n] $. 
Due to the equicoupledness property and the approximate symmetry of $ Q_{X_1,\cdots,X_L} $, the distance from any codeword $ \vx_i $ to $ \vcY $ is approximately the same. 
\begin{align}
    & \frac{1}{n} \abs{ \distlr(\vx_i,\vcY) - \distlr(\vx_j,\vcY) } \notag \\
    &= \abs{\sum_{(x_1,\cdots,x_L)\in[q]^L} \sum_{\cY\in\binom{[q]}{\ell}} \type_{\vx_1,\cdots,\vx_L,\vcY}(x_1,\cdots,x_L,\cY) \paren{\indicator{x_i\notin\cY} - \indicator{x_j\notin\cY}}} \notag \\
    &= \abs{\sum_{(x_1,\cdots,x_L)\in[q]^L} \sum_{\cY\in\binom{[q]}{\ell}} \paren{\type_{\vx_1,\cdots,\vx_L,\vcY}(x_1,\cdots,x_L,\cY) - \type_{\sigma_{i,j}(\vx_1,\cdots,\vx_L),\vcY}(x_1,\cdots,x_L,\cY)} \indicator{x_i\notin\cY}} \notag \\
    &\le \sum_{(x_1,\cdots,x_L)\in[q]^L}  \abs{\sum_{\cY\in\binom{[q]}{\ell}} \paren{\type_{\vx_1,\cdots,\vx_L,\vcY}(x_1,\cdots,x_L,\cY) - \type_{\sigma_{i,j}(\vx_1,\cdots,\vx_L),\vcY}(x_1,\cdots,x_L,\cY)}} \notag \\
    &\le \normone{\type_{\vx_1,\cdots,\vx_L} - Q_{X_1,\cdots,X_L}}
    + \normone{Q_{X_1,\cdots,X_L} - Q_{\sigma_{i,j}(X_1,\cdots,X_L)}}
    + \normone{\type_{\sigma_{i,j}(\vx_1,\cdots,\vx_L)} - Q_{\sigma_{i,j}(X_1,\cdots,X_L)}} \notag \\
    &\le q^L (2\eps_2 + \eps_3) . \label{eqn:lr-dist-close} 
\end{align}
We can then deduce an upper bound on the \emph{maximum} radius of $ \{\vx_1,\cdots,\vx_L\} $. 
\begin{align}
    \rad(\vx_1,\cdots,\vx_L) &\le \max_{i\in[L]} \distlr(\vx_i,\vcY) \notag \\
    &\le \frac{1}{L} \sum_{i=1}^L \distlr(\vx_i,\vcY) + nq^L(2\eps_2 + \eps_3) \label{eqn:use-dist-close} \\
    &\le n(p_*(q,\ell,L,w) + \eps_4 + q^L(2\eps_2 + \eps_3)) \label{eqn:use-upper-bound-avgrad-lr} \\
    &< n(p_*(q,\ell,L,w) + \tau) . \label{eqn:atmost-tau}
\end{align}
\Cref{eqn:use-dist-close,eqn:use-upper-bound-avgrad-lr} are due to \Cref{eqn:lr-dist-close} and \Cref{eqn:upper-bound-avgrad-lr}, respectively. 
\Cref{eqn:atmost-tau} follows from the definitions of $ \eps_2,\eps_3,\eps_4 $ in \Cref{eqn:list-rec-determine-eps234}. 
However, the existence of a list with radius strictly smaller than $np$ violates the list-recoverability guarantee of $ \cC_2 $. 
We therefore have 
\begin{align}
    \ol{\rad}_\ell(\vx_{i_1},\cdots,\vx_{i_L}) &\ge n(p_*(q,\ell,L,w) + \eps_4) \notag 
\end{align}
for every $ \{i_1,\cdots,i_L\}\in\binom{[M]}{L} $. 
The desired lower bound on \Cref{eqn:list-rec-dc} follows from the same calculations leading to \Cref{eqn:dc-lb}:
\begin{align}
    \frac{1}{M^L} \sum_{(i_1,\cdots,i_L)\in[M]^L} \ol{\rad}(\vx_{i_1}, \cdots, \vx_{i_L})
    &\ge \paren{1 - \frac{(L-1)L}{M}} n(p_*(q,\ell,L,w) + \eps_4) . \label{eqn:dc-lr-lb} 
\end{align}

We then turn to deriving an upper bound on \Cref{eqn:list-rec-dc}. 
Repeating the calculations leading to \Cref{eqn:interchange-sum-list-dec} with $ \plur $ and $ \max $ replaced by $ \plur_\ell $ and $ \max_\ell $, respectively, one has
\begin{align}
    & \frac{1}{M^L} \sum_{(i_1,\cdots,i_L)\in[M]^L} \frac{1}{n}\cdot\ol{\rad}_\ell(\vx_{i_1}, \cdots, \vx_{i_L}) \notag \\
    &= \frac{1}{n} \sum_{k\in[n]} \sum_{(x_1,\cdots,x_L)\in[q]^L} \paren{\prod_{j=1}^L P_k(x_j)} \paren{1 - \frac{1}{L}\plur_\ell(x_{1},\cdots,x_{L})} \notag \\
    &= \frac{1}{n} \sum_{k\in[n]} \sum_{\va\in\cA_{q,L}} \binom{L}{\va} \paren{\prod_{i=1}^q P_k(i)^{a_i}} \paren{1 - \frac{1}{L}\maxl\curbrkt{\va}} \notag \\
    &= \frac{1}{n} \sum_{k=1}^n \paren{1 - \frac{1}{L} f_{q,\ell,L}(P_k)} \notag \\
    &\le \frac{1}{n} \sum_{k=1}^n \paren{1 - \frac{1}{L} f_{q,\ell,L}(P_{k,\ell,w_k})} \label{eqn:use-schur-conv-lr} \\
    &= \frac{1}{n} \sum_{k=1}^n \paren{1 - \frac{1}{L} g_{q,L,\ell}(w_k)} \notag \\
    &\le 1 - \frac{1}{L} g_{q,L,\ell}(\ol{w}) \label{eqn:use-conv-lr} \\
    &\le p_*(q,\ell,L,w) + \frac{1}{L}\cdot\lip(g_{q,L,\ell})\cdot\eps_1 . \label{eqn:dc-lr-ub}
\end{align}
where $ P_k\in\Delta([q]) $ denotes the empirical distribution (a.k.a.\ type) of the $k$-th column of $ \cC_2 $, 
\begin{align}
    w_k &\coloneqq \sum_{x = 1}^{q-\ell} P_k(x) \notag 
\end{align}
denotes the (relative) list-recovery weight of the $k$-th column of $ \cC_2 $ and $ \ol{w} $ is defined in \Cref{eqn:def-w-bar}. 
In \Cref{eqn:use-schur-conv-lr,eqn:use-conv-lr} we use the Schur convexity of $ f_{q,\ell,L} $ (guaranteed by \Cref{thm:schur-convex-fqll}) and convexity of $ g_{g,\ell,L} $ (guaranteed by \Cref{thm:convex-gqll}), respectively. 
(Here it is helpful to recall the definitions of $ P_{q,\ell,w} $ and $ g_{q,L,\ell} $, see \Cref{def:p-q-ell-rho,eqn:def-gqll}). 
Note that here we do not need to further massage the weights $ (w_k)_{k\in[n]} $ (see the paragraph following \Cref{eqn:use-gql-convex}) thanks to the fact that convexity of $ g_{q,L,\ell} $ holds on the whole interval $ [0,1] $ for $ \ell\ge2 $ (cf.\ \Cref{thm:convex-gqll}). 

Finally, choosing
\begin{align}
    \eps_2 &= \frac{\tau^2}{80^2L^6q^{3L-2}} , \quad 
    \eps_3 = \frac{\tau}{8q^L} , \quad 
    \eps_4 = \frac{2}{L}\lip(g_{g,\ell,L})\eps_1 \label{eqn:list-rec-determine-eps234} 
\end{align}
allows us to conclude from \Cref{eqn:dc-lr-lb,eqn:dc-lr-ub} that 
\begin{align}
    M &\le (L-1)L\paren{\frac{p_*(q,\ell,L,w)}{\frac{1}{L}\cdot\lip(g_{q,L,\ell})\cdot\eps_1} + 2} \notag
\end{align}
which contradicts our assumption that \Cref{eqn:plotkin-bound-list-rec} holds in the reverse direction. 
This finishes the proof. 
\end{proof}

\begin{corollary}[Plotkin bound, list-recovery, weight unconstrained]
\label{cor:plotkin-list-rec-unconstr}
Assume $ 2\le\ell\le q-1 $ is an integer. 
Let $ \cC\subset[q]^n $ be a $ (p,\ell,L)_q $-list-recoverable code with $ p = p_*(q,\ell,L) + \tau $ for some small constant $ \tau>0 $. 
Then 
\begin{align}
    |\cC| &< \paren{\frac{4\lip(g_{q,L,\ell})}{L\tau} + 1} N(c,L,M_0) \notag 
\end{align}
where $ \lip(g_{q,L,\ell}) $ is defined in \Cref{eqn:lip-gqll}, 
\begin{align}
    c &\coloneqq \paren{\frac{80^2L^6q^{4L-2}}{2\tau^2} + 1}^{q^L} , \notag \\
    M_0 &\coloneqq \max\curbrkt{ \frac{2^{11}L^7q^{2L}}{\tau^2} + L-2 , (L-1)L\paren{\frac{8p_*(q,\ell,L,w)}{\tau} + 2} + 1} , \notag 
\end{align}
and the function $N(\cdot,\cdot,\cdot)$ denotes the Ramsey number given by \Cref{thm:hypergraph-ramsey}. 
\end{corollary}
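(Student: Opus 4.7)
The plan is to reduce to Theorem~\ref{thm:plotkin-list-rec} by slicing $\cC$ into approximately constant-weight subcodes, in direct analogy with how Corollary~\ref{cor:plotkin-list-dec-qary-unconstr} was deduced from Theorem~\ref{thm:plotkin-list-dec-qary}. The reduction is in fact noticeably simpler in the list-recovery setting: because $\wtlr(\vx) = \distlr(\vx,\vcY_\ell)$ is always measured against the \emph{fixed} reference tuple $\vcY_\ell = \{q-\ell+1,\dots,q\}^n$, we do not need to iterate over multiple ``centers'' (which accounts for the factor of $q$ appearing in the list-decoding corollary), and because $g_{q,L,\ell}$ is convex on the \emph{entire} interval $[0,1]$ for $\ell\ge 2$ (Theorem~\ref{thm:convex-gqll}), Theorem~\ref{thm:plotkin-list-rec} applies for \emph{any} target weight $w\in[0,1]$, so a single family of weight slices suffices.

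Concretely, I would set $\eps_1 \coloneqq \frac{L\tau}{8\lip(g_{q,L,\ell})}$ and let $\cW \coloneqq 2\eps_1\bbZ \cap [0,1]$, so that $|\cW| \le \frac{1}{2\eps_1}+1 = \frac{4\lip(g_{q,L,\ell})}{L\tau}+1$. For each $w\in\cW$ define
\[
\cC_w \coloneqq \left\{\vx\in\cC : \max\{0,(w-\eps_1)n\} \le \wtlr(\vx) \le \min\{(w+\eps_1)n,\,n\}\right\}.
\]
Since $\wtlr(\vx)\in[0,n]$ for every $\vx\in[q]^n$ and $\cW$ is an $\eps_1$-net of $[0,1]$, we obtain the covering $\cC = \bigcup_{w\in\cW}\cC_w$. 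Each $\cC_w$ inherits $(p,\ell,L)_q$-list-recoverability from $\cC$, and all of its codewords have list-recovery weight within $[n(w-\eps_1),\,n(w+\eps_1)]$, so Theorem~\ref{thm:plotkin-list-rec} applies and yields $|\cC_w| < N(c,L,M_0^{(w)})$, where $M_0^{(w)}$ is the quantity from that theorem (with $p_*(q,\ell,L,w)$ appearing in the numerator of the relevant term).

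The final step is to pass to a $w$-independent bound on $M_0^{(w)}$. Schur convexity of $f_{q,L,\ell}$ (Theorem~\ref{thm:schur-convex-fqll}) forces the minimizer of $f_{q,L,\ell}$ over $\Delta([q])$ to be the uniform distribution, which inside the family $\{P_{q,\ell,w}\}_{w\in[0,1]}$ is attained precisely at $w=(q-\ell)/q$; hence $g_{q,L,\ell}(w)\ge g_{q,L,\ell}\!\left(\frac{q-\ell}{q}\right)$ and therefore $p_*(q,\ell,L,w) \le p_*(q,\ell,L)$ for every $w\in[0,1]$. Substituting this uniform bound and using $\frac{1}{L}\lip(g_{q,L,\ell})\eps_1 = \tau/8$ recovers the $M_0$ stated in the corollary. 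Summing then gives
\[
|\cC| \le \sum_{w\in\cW}|\cC_w| \le |\cW|\cdot N(c,L,M_0) \le \left(\frac{4\lip(g_{q,L,\ell})}{L\tau}+1\right)N(c,L,M_0),
\]
as required. There is no substantive obstacle: the entire argument is a routine partitioning on top of Theorem~\ref{thm:plotkin-list-rec}, and the one place where the proof diverges from its list-decoding counterpart---the absence of the alphabet-permutation step needed to re-center weights---actually shortens, rather than complicates, the derivation.
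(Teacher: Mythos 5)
Your proof is correct and follows essentially the same route as the paper: partition $\cC$ by list-recovery weight into $\eps_1$-slices, apply Theorem~\ref{thm:plotkin-list-rec} to each slice, and uniformize the per-slice bound using the fact that $p_*(q,\ell,L,w)$ is maximized at $w=(q-\ell)/q$ (a consequence of the Schur convexity of $f_{q,L,\ell}$). Your observation that the factor of $q$ from the list-decoding corollary disappears is accurate, though the decisive reason is that Theorem~\ref{thm:plotkin-list-rec} applies to any $w\in[0,1]$ (since $g_{q,L,\ell}$ is convex on all of $[0,1]$ when $\ell\ge 2$), which makes the alphabet-permutation re-centering step of the list-decoding argument unnecessary.
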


\begin{proof}
The proof follows the same idea used in the proof of \Cref{cor:plotkin-list-dec-qary-unconstr} with minor modifications.
We partition the code according to list-recovery weight and apply \Cref{thm:plotkin-list-rec} to each subcode. 
A proof sketch is presented below. 

Let $\eps_1 = \frac{L\tau}{8\lip(g_{q,L,\ell})}$ and $ \cW \coloneqq 2\eps_1\bbZ\cap[0,1] $.
Define for each $w\in\cW$,
\begin{align}
    \cC_w &\coloneqq \curbrkt{\vx\in\cC : \max\{0,(w-\eps_1)n\} \le \wtlr(\vx) \le \min\{(w+\eps_1)n, n\}} . \notag 
\end{align}
Each $\cC_w$ satisfies the condition in \Cref{thm:plotkin-list-rec} and therefore has size at most $ |\cC_w|\le N(c,L,M_w) $ where $M_w$ equals
\begin{align}
    M_w &\coloneqq \max\curbrkt{ \frac{2^{11}L^7q^{2L}}{\tau^2} + L-2 , (L-1)L\paren{\frac{p_*(q,\ell,L,w)}{\frac{1}{L}\cdot\lip(g_{q,L,\ell})\cdot\eps_1} + 2} + 1} . \notag 
\end{align}
We conclude that 
\begin{align}
    |\cC| &\le \sum_{w\in\cW} |\cC_w| \le |\cW|\cdot N(c,L,M_0)
    \le \paren{\frac{1}{2\eps_1} + 1} N(c,L,M_0) , \notag 
\end{align}
where the middle inequality follows from the monotonicity of $ g_{q,L,\ell}(w) $. 
Specifically, $ g_{q,L,\ell}(w) $ is non-increasing on $ [0,(q-\ell)/q] $ and non-decreasing on $ [(q-\ell)/q,1] $. 
This follows from the Schur convexity of $ f_{q,L,\ell} $. 
The argument is analogous to the proof of \Cref{prop:mono-gqL} and is therefore omitted. 
The proof of \Cref{cor:plotkin-list-rec-unconstr} is then finished. 
\end{proof}

\section{Elias--Bassalygo bound for list-recovery: proof of \Cref{thm:eb-list-rec-main}}
\label{sec:eb-list-rec-pf}

\begin{theorem}[Elias--Bassalygo bound, list-recovery]
\label{thm:eb-list-rec}
Let $ \cC\subset[q]^n $ be a $ (p,\ell,L)_q $-list-recoverable code with $ p<p_*(q,\ell,L) $. 
For any sufficiently small constant $ \tau\in(0,1) $, let $ w_{q,\ell,L,\tau} $ be the unique solution to the following equation in $ w\in[0,(q-\ell),q] $, 
\begin{align}
    p_*(q,\ell,L,w) = p - \tau . 
    \label{eqn:rho-eb-lr}
\end{align}
Then 
\begin{align}
    |\cC| &\le \paren{\frac{4\lip(g_{q,L,\ell})}{L\tau}}^2 N(c,L,M_0) \cdot \paren{n\ln(q) \cdot \sqrt{8nw_{q,\ell,L,\tau}(1-w_{q,\ell,L,\tau})} \cdot q^{n(1-H_{q,\ell}(w_{q,\ell,L,\tau}))} + 1} , \notag 
\end{align}
where $ c,M_0 $ are defined in \Cref{cor:plotkin-list-rec-unconstr}. 
\end{theorem}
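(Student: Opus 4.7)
The plan is to mirror the proof of the Elias--Bassalygo bound for $q$-ary list-decoding (\Cref{thm:eb-qary-list-dec}), replacing Hamming balls throughout by list-recovery balls $\blr(\vcY,r)$ and invoking the list-recovery Plotkin bound (\Cref{cor:plotkin-list-rec-unconstr}) in place of the list-decoding one. The code $\cC$ will be covered and then double-partitioned (first by covering-ball index, then by approximate list-recovery weight), and the Plotkin bound applied piecewise.

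First, I cover $[q]^n$ by list-recovery balls of radius $nw_{q,\ell,L,\tau}$: choose centers $\vcY_1,\dots,\vcY_K \in \binom{[q]}{\ell}^n$ with $\bigcup_{i=1}^K \blr(\vcY_i, nw_{q,\ell,L,\tau}) = [q]^n$. A standard random covering argument, directly analogous to \Cref{lem:cov}, gives such a cover of size
\[
K \le n\ln(q)\cdot\sqrt{8nw_{q,\ell,L,\tau}(1-w_{q,\ell,L,\tau})}\cdot q^{n(1-H_{q,\ell}(w_{q,\ell,L,\tau}))} + 1,
\]
using the volume estimate $|\blr(\vcY,nw)| = \sum_{k=0}^{\lfloor nw\rfloor} \binom{n}{k}\ell^{n-k}(q-\ell)^k$, which (via Stirling) is bounded above by $q^{nH_{q,\ell}(w)}$ up to a $\sqrt{n}$ factor.

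Next, for each $i\in[K]$, set $\cC_i \coloneqq \cC \cap \blr(\vcY_i, nw_{q,\ell,L,\tau})$. By applying, coordinate-by-coordinate, an alphabet permutation that sends $\cY_{i,k}$ to $\{q-\ell+1,\dots,q\}$, I may assume (still calling the result $\cC_i$) that every codeword has list-recovery weight at most $nw_{q,\ell,L,\tau}$; this operation preserves $(p,\ell,L)_q$-list-recoverability. Then partition $\cC_i$ by approximate list-recovery weight along an $\eps_1$-net $\cW \coloneqq 2\eps_1\bbZ \cap [0,w_{q,\ell,L,\tau}]$ with $\eps_1 \coloneqq L\tau/(8\lip(g_{q,L,\ell}))$, so that $|\cW| \le 4\lip(g_{q,L,\ell})/(L\tau) + 1$, and let $\cC_{i,w}$ be the codewords in $\cC_i$ whose list-recovery weight lies in $[\max\{0,(w-\eps_1)n\},\min\{(w+\eps_1)n, nw_{q,\ell,L,\tau}\}]$.

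To apply the Plotkin bound to each $\cC_{i,w}$, I need $p_*(q,\ell,L,w)+\tau \le p$ for every $w \in \cW$. The Schur convexity of $f_{q,L,\ell}$ (\Cref{thm:schur-convex-fqll}) implies, via the same argument as \Cref{prop:mono-gqL}, that $g_{q,L,\ell}$ is non-increasing on $[0,(q-\ell)/q]$; hence $p_*(q,\ell,L,w) = 1 - g_{q,L,\ell}(w)/L$ is non-decreasing on this interval. Since $w \le w_{q,\ell,L,\tau} \le (q-\ell)/q$, $p_*(q,\ell,L,w) + \tau \le p_*(q,\ell,L,w_{q,\ell,L,\tau}) + \tau = p$ by \Cref{eqn:rho-eb-lr}. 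Therefore \Cref{cor:plotkin-list-rec-unconstr} yields $|\cC_{i,w}| \le (4\lip(g_{q,L,\ell})/(L\tau) + 1)\,N(c,L,M_0)$, and summing over $(i,w) \in [K]\times\cW$ gives the stated bound. The only step that is not completely parallel to the list-decoding proof is the covering-number estimate in the first step, since the random-covering lemma for list-recovery balls is not explicitly recorded in the paper; however this is a routine adaptation of \Cref{lem:cov}, with the $H_{q,\ell}$ volume estimate above substituted for the $H_q$ one.
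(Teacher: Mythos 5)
Your proof is correct and follows the same route as the paper: cover $[q]^n$ by list-recovery balls of radius $nw_{q,\ell,L,\tau}$, partition each intersection $\cC\cap\blr(\vcY_i,\cdot)$ into approximately constant-list-recovery-weight shells along an $\eps_1$-net, recenter, verify $p_*(q,\ell,L,w)+\tau\le p$ via monotonicity of $g_{q,L,\ell}$ on $[0,(q-\ell)/q]$, and apply the list-recovery Plotkin bound piecewise. Two small remarks: (i) the random covering lemma for list-recovery balls is in fact recorded in the paper as \Cref{lem:cov-list-rec}, though your alternative derivation (counting points at fixed $\distlr$-distance from a fixed center, rather than the paper's counting of centers around a fixed point) is valid and gives the same exponent; (ii) you cite \Cref{cor:plotkin-list-rec-unconstr} to bound $|\cC_{i,w}|$, but its hypothesis requires $p=p_*(q,\ell,L)+\tau$, which may fail since $p<p_*(q,\ell,L)$ -- the intended reference is the constant-weight \Cref{thm:plotkin-list-rec}, which applies to the weight-restricted $\cC_{i,w}$ and yields the tighter bound $N(c,L,M_0)$ per piece (and then your claimed, looser bound a fortiori).
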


The proof follows verbatim that of \Cref{thm:eb-qary-list-dec} with the application of \Cref{lem:cov} and \Cref{thm:plotkin-list-dec-qary} replaced with \Cref{lem:cov-list-rec} (proved in \Cref{app:aux-lem}) and \Cref{thm:plotkin-list-rec}, respectively. 
The details are omitted. 

\begin{remark}
\label{rk:eb-hashing}
We specialize the bound in \Cref{thm:eb-list-rec-main} to the case $ p = 0, \ell \le q-1, L = \ell+1 $ corresponding to $ (q,\ell+1) $-hashing. 
The fixed point equation $ p_*(q,\ell,L,w) = p $ becomes
\begin{align}
    0 &= 1 - \frac{1}{\ell+1} \exptover{(X_1,\cdots,X_{\ell+1})\sim P_{q,\ell,w}^{\ot (\ell+1)}}{\plur_\ell(X_1,\cdots,X_{\ell+1})} , \notag 
\end{align}
or, 
\begin{align}
    \exptover{(X_1,\cdots,X_{\ell+1})\sim P_{q,\ell,w}^{\ot (\ell+1)}}{\plur_\ell(X_1,\cdots,X_{\ell+1})} &= \ell+1 , \label{eqn:fixed-pt-eqn-hashing} 
\end{align}
where $ P_{q,\ell,w}\in\Delta([q]) $ (recall \Cref{def:p-q-ell-rho}) is a probability vector whose first $q - \ell$ entries are all equal to $ \frac{w}{q - \ell} $ and rest entries are all equal to $ \frac{1-w}{\ell} $. 
We see that the only way to ensure the expected $\ell$-plurality of $\ell+1$ (random) symbols to be $ \ell+1 $ is to set $ w = 0 $ so that each $X_i$ can only take $\ell$ possible values (specifically, $ q-\ell + 1,\cdots,q $) with positive probability. 
Therefore, the unique solution to \Cref{eqn:fixed-pt-eqn-hashing} is $ w_{q,\ell,\ell+1} = 0 $. 
The capacity upper bound in \Cref{thm:eb-list-rec-main} then becomes
\begin{align}
    1 - H_{q,\ell}(0) = 1 - \log_q\ell = \log_q\frac{q}{\ell} . \notag 
\end{align}
Further setting $ \ell = q-1 $ yields $ \log_q\frac{q}{q-1} $ which recovers the upper bound due to \korner and Marton for $q$-hashing \cite{KM} (see also \Cref{eqn:korner-marton-upper}). 
\end{remark}

% \begin{proof}
% The idea is similar to that used in the proof of \Cref{thm:eb-qary-list-dec}. 
% We first cover $ [q]^n $ using \emph{list-recovery} balls. 
% Let $ \tau>0 $ be a sufficiently small constant. 
% Let $ w_{q,\ell,L,\tau} $ be the solution to \Cref{eqn:rho-eb-lr} in $ w\in[0,(q - \ell)/q] $. 
% According to \Cref{lem:cov-list-rec}, there exists a set of centers $ \vcC_1,\cdots,\vcC_K $ in $ \binom{[q]}{\ell}^n $ such that 
% \begin{align}
%     \bigcup_{i = 1}^K \blr(\vcC_i,nw_{q,\ell,L,\tau}) &= [q]^n \label{eqn:use-cov-lr}
% \end{align}
% and 
% \begin{align}
%     K &\le n\ln(q) \cdot \sqrt{8nw_{q,\ell,L,\tau}(1-w_{q,\ell,L,\tau})} \cdot q^{n(1-H_{q,\ell}(w_{q,\ell,L,\tau}))} + 1 . \notag 
% \end{align}
% By \Cref{cor:plotkin-list-rec-unconstr}, the number of codewords in each list-recovery ball of radius $ nw_{q,\ell,L,\tau} $ is at most 
% \begin{align}
%     M_* \coloneqq \paren{\frac{4\lip(g_{q,L,\ell})}{L\tau}} N(c,L,M_0) , \notag 
% \end{align}
% where $ c,M_0 $ are defined in \Cref{cor:plotkin-list-rec-unconstr}. 
% Therefore, by the covering property of $ \{\vcC_1,\cdots,\vcC_K\} $ (cf.\ \Cref{eqn:use-cov-lr}), the total number of codewords is at most $ K\cdot M_* $. 
% \end{proof}

\section{Lower bound on list-recovery capacity: proof of \Cref{thm:lb-listrec}}
\label{sec:list-rec-lb}
% !TEX root = ./list-dec-qary-and-list-rec.tex

We prove a slightly rephrased version of \Cref{thm:lb-listrec} below. 

\begin{theorem}[Random coding with expurgation lower bound]
\label{thm:lb}
Let $ q,\ell,L $ be positive integers such that $ q,L\ge2,\ell\le q $. 
Suppose $ p< p_*(q,\ell,L) $. 
Let $ ( X_1,\cdots, X_L)\sim\unif([q])^{\ot L} $. 
Then
\begin{align}
C_{(p,\ell,L)_q} &\ge \frac{1}{L-1}\paren{-\lambda_*p - \log_q\expt{\exp_q\paren{-\lambda_*\cdot\ol{\rad}_\ell( X_1,\cdots, X_L)}}}, \label{eqn:lb-lr} 
\end{align}
where $ \lambda_* = \lambda_*(q,\ell,L,p) $ is the solution to the following equation
\begin{align}
p &= \frac{\expt{\exp_q\paren{-\lambda_*\cdot\ol{\rad}_\ell( X_1,\cdots, X_L)}\cdot\ol{\rad}_\ell( X_1,\cdots, X_L)}}{\expt{\exp_q\paren{-\lambda_*\cdot\ol{\rad}_\ell( X_1,\cdots, X_L)}}} , \label{eqn:lb-lr-fp-eqn}
\end{align}
and 
\begin{align}
    \ol{\rad}_\ell(x_1,\cdots,x_L) &\coloneqq 1 - \frac{1}{L} \plur_\ell(x_1,\cdots,x_L) \notag 
\end{align}
for any $ (x_1,\cdots,x_L)\in[q]^L $. 
\end{theorem}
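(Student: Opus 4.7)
The plan is a standard random coding argument with expurgation, whose key observation is that when codewords are sampled i.i.d.\ uniformly, the average $\ell$-radius decomposes coordinatewise into a sum of i.i.d.\ random variables. Concretely, I sample $M$ codewords $\vec{X}_1, \ldots, \vec{X}_M$ i.i.d.\ from $\unif([q]^n)$. For any fixed indices $i_1, \ldots, i_L$,
\begin{align*}
    \ol{\rad}_\ell(\vec{X}_{i_1}, \ldots, \vec{X}_{i_L}) &= \sum_{j=1}^n Y_j, \qquad Y_j \coloneqq \ol{\rad}_\ell(X_{i_1,j}, \ldots, X_{i_L,j}),
\end{align*}
where the $Y_j$'s are i.i.d.\ copies of the single-letter random variable $Y \coloneqq \ol{\rad}_\ell(X_1, \ldots, X_L)$ with $(X_1,\ldots,X_L) \sim \unif([q])^{\ot L}$. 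Observe that $\expt{Y} = 1 - \frac{1}{L}\expt{\plur_\ell(X_1,\ldots,X_L)} = p_*(q,\ell,L)$, so the hypothesis $p < p_*(q,\ell,L)$ places us squarely in the lower-tail large-deviation regime.

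The next step applies a Chernoff bound in base $q$: for every $\lambda \geq 0$,
\begin{align*}
    \Pr\left[\sum_{j=1}^n Y_j \leq pn \right] \leq \expt{q^{-\lambda Y}}^n \cdot q^{\lambda pn} = q^{n F(\lambda)}, \qquad F(\lambda) \coloneqq \lambda p + \log_q \expt{q^{-\lambda Y}}.
\end{align*}
The cumulant-generating function $F$ is convex with $F(0) = 0$ and $F'(0) = p - \expt{Y} < 0$, so it attains its minimum at the unique $\lambda_* > 0$ solving $F'(\lambda_*) = 0$, which upon explicit differentiation rearranges directly to the fixed-point equation \eqref{eqn:lb-lr-fp-eqn}. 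Writing $E \coloneqq -F(\lambda_*) > 0$, the probability that a given $L$-tuple has average $\ell$-radius at most $pn$ is at most $q^{-nE}$.

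The final step is expurgation: the expected number of ordered $L$-tuples of distinct codewords with $\ol{\rad}_\ell \leq pn$ is at most $M^L q^{-nE}$, and by Markov's inequality with positive probability this count is at most $2 M^L q^{-nE}$. Choosing $M \coloneqq \lfloor 2^{-1} \cdot 4^{-1/(L-1)} \cdot q^{nE/(L-1)} \rfloor$ makes this count at most $M/2$, so deleting one codeword per bad $L$-tuple yields a subcode $\cC'$ of size at least $M/2$ containing no $L$-tuple of average $\ell$-radius $\leq pn$. Since $\ol{\rad}_\ell \leq \rad_\ell$, such a code is in particular $(p,\ell,L)_q$-list-recoverable (in fact, in the stronger average-radius sense), and its rate approaches
\begin{align*}
    \frac{E}{L-1} = \frac{-\lambda_* p - \log_q \expt{q^{-\lambda_* Y}}}{L-1},
\end{align*}
which is exactly \eqref{eqn:lb-lr}. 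No genuinely hard step arises here; the mild technical points are (i) existence and uniqueness of $\lambda_*$, which follow from convexity of $F$ together with $F'(0) < 0$ under the hypothesis $p < p_*(q,\ell,L)$, and (ii) handling duplicate codewords in the random sample, which contribute only a lower-order $O(M^2 q^{-n})$ correction. The bound automatically vanishes at $p = p_*(q,\ell,L)$ (as $\lambda_* \to 0$), matching \Cref{cor:plotkin-pt-listrec}.
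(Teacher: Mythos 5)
Your proposal is correct and follows essentially the same argument as the paper: sample an i.i.d.\ uniform codebook, bound the probability that an $L$-tuple has small average $\ell$-radius via a single-letter exponential-moment computation, optimize over the Chernoff parameter to obtain the fixed-point equation \Cref{eqn:lb-lr-fp-eqn}, and expurgate one codeword per bad tuple to achieve rate $E/(L-1)$. The only cosmetic difference is that you apply the Chernoff bound directly, whereas the paper packages the same computation as an invocation of Cram\'er's theorem; your route is arguably cleaner since the Chernoff (upper-tail-on-probability) direction is exactly what the expurgation step needs.
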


\begin{remark}
\label{rk:list-rec-lb-recover}
Expanding out the expectation, we can write the above bound in a more explicit (though less informative) way. 
\begin{align}
C_{(p,\ell,L)_q} &\ge \frac{L}{L-1} - \frac{1}{L-1}\curbrkt{\lambda_*p + \log_q\sqrbrkt{\sum_{\va\in\cA_{q,L}} \binom{L}{\va} \exp_q\paren{-\lambda_*\paren{1 - \frac{1}{L}\maxl\curbrkt{\va}}}}}, \notag 
\end{align}
where $ \lambda_*=  \lambda_*(q,\ell,L,p) $ is the solution to the following equation
\begin{align}
p &= \frac{\sum\limits_{\va\in\cA_{q,L}} \binom{L}{\va} \exp_q\paren{-\lambda_*\paren{1 - \frac{1}{L}\maxl\curbrkt{\va}}} \paren{1 - \frac{1}{L}\maxl\curbrkt{\va}}}{\sum\limits_{\va\in\cA_{q,L}} \binom{L}{\va} \exp_q\paren{-\lambda_*\paren{1 - \frac{1}{L}\maxl\curbrkt{\va}}}}. \notag 
\end{align}
\end{remark}

\begin{remark}
\label{rk:lb-lr-hashing}
We specialize the bound \Cref{eqn:lb-lr} to the case $ p = 0, L = \ell + 1,\ell\le q-1 $ corresponding to $(q,\ell + 1)$-hashing. 
\begin{align}
    & - \frac{1}{\ell} \log_q \sqrbrkt{q^{-(\ell + 1)} \sum_{\va\in\cA_{q,\ell + 1}} \binom{\ell + 1}{\va} \exp_q\paren{-\lambda_*\paren{1 - \frac{1}{\ell + 1}\maxl\curbrkt{\va}}}} \label{eqn:expand-expectation-lr-lb} \\
    % &= - \frac{1}{\ell} \log_q \sqrbrkt{q^{-(\ell + 1)} \sum_{\va\in\cA_{q,\ell + 1}} \binom{\ell + 1}{\va} q^{-\lambda_* \min\curbrkt{\va}}} \label{eqn:max-to-min} \\
    & - \frac{1}{\ell} \log_q \sqrbrkt{q^{-(\ell + 1)} \paren{\sum_{\substack{\va\in\cA_{q,\ell + 1} \\ \maxl\curbrkt{\va} = \ell}} \binom{\ell + 1}{\va} q^{-\frac{\lambda_*}{\ell+1}} + \sum_{\substack{\va\in\cA_{q,\ell + 1} \\ \maxl\curbrkt{\va} = \ell+1}} \binom{\ell+1}{\va} } } \label{eqn:either-or} \\
    &= - \frac{1}{\ell} \log_q \sqrbrkt{q^{-(\ell + 1)} \paren{\sum_{\substack{\va\in\cA_{q,\ell + 1} \\ \maxl\curbrkt{\va} = \ell}} \binom{\ell + 1}{\va} q^{-\frac{\lambda_*}{\ell+1}} + q^{\ell+1} - \sum_{\substack{\va\in\cA_{q,\ell + 1} \\ \maxl\curbrkt{\va} = \ell}} \binom{\ell+1}{\va} } } \label{eqn:use-multinom-lb-lr} \\
    &= - \frac{1}{\ell} \log_q \sqrbrkt{q^{-(\ell+1)} \paren{ \binom{q}{\ell+1} \binom{\ell + 1}{\one_{\ell+1}} q^{-\frac{\lambda_*}{\ell+1}} + q^{\ell+1} - \binom{q}{\ell+1} \binom{\ell + 1}{\one_{\ell+1}} } } \label{eqn:intro-all-one} \\
    % &= - \frac{1}{\ell} \log_q \sqrbrkt{q^{-(\ell+1)} \paren{ \binom{q}{\one_{\ell + 1}} q^{-\lambda_*} + q^{\ell + 1} - \binom{\ell + 1}{\one_{\ell + 1}} } } \label{eqn:use-multinom-lb-lr} \\
    &= - \frac{1}{\ell} \log_q \sqrbrkt{q^{-(\ell+1)} \paren{ \binom{q}{\ell+1}(\ell+1)! \paren{q^{-\frac{\lambda_*}{\ell+1}} - 1} + q^{\ell+1} } } \notag \\
    &= - \frac{1}{\ell} \log_q \sqrbrkt{ 1 - \frac{\binom{q}{\ell+1}(\ell+1)!}{q^{\ell+1}} \paren{ 1 - q^{-\frac{\lambda_*}{\ell+1}} } } . \notag 
\end{align}
\Cref{eqn:expand-expectation-lr-lb} is obtained by expanding the expectation in \Cref{eqn:lb-lr}. 
\Cref{eqn:either-or} follows since for $ \va\in\cA_{q,\ell+1} $, $ \maxl\curbrkt{\va} $ can be either $\ell$ or $\ell+1$. 
We then use the multinomial theorem $ \sum_{\va\in\cA_{q,m}}\binom{q}{\va} = q^m $ to get \Cref{eqn:use-multinom-lb-lr}. 
Observe that the former case ($\maxl\curbrkt{\va} = \ell$) occurs if and only if $ \va $ is a length-$q$ vector with $ \ell+1 $ ones and $q-\ell-1$ zeros. 
Otherwise, the latter case ($\maxl\curbrkt{\va} = \ell + 1$) occurs. 
For the former case, there are $\binom{q}{\ell+1}$ ways to choose the locations of ones, which leads to \Cref{eqn:intro-all-one} where we use $ \one_{\ell+1} $ to denote the all-one vector of length $\ell+1$. 

By similar manipulations, the fixed point equation \Cref{eqn:lb-lr-fp-eqn} with $ p = 0,L = \ell + 1 $ becomes: 
\begin{align}
    0 &= \frac{\binom{q}{\ell+1}\binom{\ell+1}{\one_{\ell+1}} q^{-\frac{\lambda_*}{\ell+1}}\cdot\frac{1}{\ell+1}}{\binom{q}{\ell+1}\binom{\ell+1}{\one_{\ell+1}} q^{-\frac{\lambda_*}{\ell+1}} + q^{\ell+1} - \binom{q}{\ell+1}\binom{\ell+1}{\one_{\ell+1}}} 
    = \frac{q^{-\frac{\lambda_*}{\ell+1}}\cdot\frac{1}{\ell+1}}{q^{-\frac{\lambda_*}{\ell+1}} + \frac{q^{\ell+1}}{\binom{q}{\ell+1}(\ell+1)!} - 1} . \notag 
\end{align}
It is obvious that the unique solution to the above equation is $ \lambda_* = \infty $. 
Therefore, \Cref{eqn:lb-lr} evaluates to 
\begin{align}
    \frac{1}{\ell} \log_q\frac{1}{1 - \frac{\binom{q}{\ell+1}(\ell+1)!}{q^{\ell+1}}} , \notag 
\end{align}
which recovers the lower bound due to Fredman and \komlos \cite{FK} (see also \Cref{eqn:fredman-komlos-bk}). 
Further setting $\ell = q-1$ yields 
\begin{align}
    \frac{1}{q-1} \log_q\frac{1}{1 - \frac{q!}{q^q}} , \notag 
\end{align}
which recovers the Fredman--\komlos lower bound for $q$-hashing \cite{KM} (see also \Cref{eqn:fredman-komlos-lower}). 
\end{remark}

\begin{proof}[Proof of \Cref{thm:lb}.]
Let $ \cC\in[q]^{M\times n} $ be a codebook consisting $M$ codewords from $ [q]^n $. 
Each entry of each codeword is sampled independently and uniformly from $ [q] $. 
Denote $ \cC = \{\vX_1,\cdots,\vX_M\} $. 

For an arbitrary $L$-list $ \{i_1,\cdots,i_L\}\in\binom{[M]}{L} $ of indices, we have
\begin{align}
\expt{\frac{1}{n}\ol{\rad}_\ell(\vbfX_{i_1},\cdots,\vbfX_{i_L})} 
&= \frac{1}{n}\sum_{j=1}^n \expt{1-\frac{1}{L}\plur_\ell(X_{i_1,j},\cdots,X_{i_L,j})} \notag \\
&= \exptover{(X_1,\cdots,X_L)\sim\unif([q])^{\ot L}}{1-\frac{1}{L}\plur_\ell(X_1,\cdots,X_L)} \notag \\
% &= \expt{\frac{1}{n}\sum_{j = 1}^n\ol{\rad}_\ell(X_{i_1,j},\cdots,X_{i_L,j})} \notag \\
% &= \exptover{( X_1,\cdots, X_L)\sim\unif([q])^{\ot L}}{\ol{\rad}_\ell( X_{1},\cdots, X_{L})} \notag \\
&= \sum_{(x_1,\cdots,x_L)\in[q]^L} q^{-L} {\paren{1 - \frac{1}{L}\plur_\ell(x_1,\cdots,x_L)}} \notag \\
% &= q^{-L} \sum_{\type\in\cP_L([q])} \sum_{\substack{(x_1,\cdots,x_L)\in[q]^L \\ \type_{(x_1,\cdots,x_L)} = \type}} {\paren{1 - \frac{1}{L}\plur_{\ell}(L\cdot\type(1),\cdots,L\cdot\type(q))}} \notag \\
&= q^{-L} \sum_{\va\in\cA_{q,L}} \binom{L}{\va} \paren{1 - \frac{1}{L}\maxl\curbrkt{\va}} > p. \notag 
\end{align}
Therefore, by \cramer's large deviation theorem (\Cref{thm:cramer}), we have
\begin{align}
E &\coloneqq \lim_{n\to\infty}-\frac{1}{n}\log_q\prob{\ol{\rad}_\ell(\vbfX_{i_1},\cdots,\vbfX_{i_L})\le np} \notag \\
&= \lim_{n\to\infty}-\frac{1}{n}\log_q\prob{\sum_{j = 1}^n\paren{1-\frac{1}{L}\plur_\ell(X_{i_1,j},\cdots,X_{i_L,j})}\le np} \notag \\
&= \sup_{\lambda\ge0}\curbrkt{-\lambda p - \log_q\expt{\exp_q\paren{-\lambda\paren{1-\frac{1}{L}\plur_\ell(X_1,\cdots,X_L)}}}} , \notag 
\end{align}
where the expectation is taken over $ (X_1,\cdots,X_L)\sim\unif([q])^{\ot L} $.
For any $ \lambda\in\bbR $, we then compute the moment generating function $ \expt{\exp_q\paren{-\lambda\paren{1-\frac{1}{L}\plur_\ell(X_1,\cdots,X_L)}}} $ in a similar way. 
\begin{align}
& \exptover{(X_1,\cdots,X_L)\sim\unif([q])^{\ot L}}{\exp_q\paren{-\lambda\paren{1-\frac{1}{L}\plur_\ell(X_1,\cdots,X_L)}}} \notag \\
&= q^{-L} \sum_{\va\in\cA_{q,L}} \binom{L}{\va} \exp_q\paren{\lambda\paren{1 - \frac{1}{L}\maxl\curbrkt{\va}}} \eqqcolon \sfM(\lambda). \notag 
\end{align}
Note that 
\begin{align}
\sfM'(\lambda) &= q^{-L} \sum_{\va\in\cA_{q,L}} \binom{L}{\va} \exp_q\paren{\lambda\paren{1 - \frac{1}{L}\maxl\curbrkt{\va}}} \paren{1 - \frac{1}{L}\maxl\curbrkt{\va}}. \notag 
\end{align}
Then 
\begin{align}
E &%\coloneqq \lim_{n\to\infty}-\frac{1}{n}\log_q\prob{\ol{\rad}_\ell(\vbfX_{i_1},\cdots,\vbfX_{i_L})\le np}
= \sup_{\lambda\ge0}\curbrkt{-\lambda p - \log_q \sfM(-\lambda)}. \notag 
\end{align}
The maximizer $ \lambda_* $ is given by the solution to the following equation
\begin{align}
- p + \frac{\sfM'(-\lambda_*)}{\sfM(-\lambda_*)} = 0, \notag 
\end{align}
i.e., 
\begin{align}
p = \frac{\sfM'(-\lambda_*)}{\sfM(-\lambda_*)}. \notag 
\end{align}
Or more explicitly, 
\begin{align}
p &= \frac{\sum\limits_{\va\in\cA_{q,L}} \binom{L}{\va} \exp_q\paren{-\lambda_*\paren{1 - \frac{1}{L}\maxl\curbrkt{\va}}} \paren{1 - \frac{1}{L}\maxl\curbrkt{\va}}}{\sum\limits_{\va\in\cA_{q,L}} \binom{L}{\va} \exp_q\paren{-\lambda_*\paren{1 - \frac{1}{L}\maxl\curbrkt{\va}}}}. \notag 
\end{align}
Finally, by a standard random coding with expurgation argument, the following rate $R$ can be achieved
\begin{align}
R &= \frac{E}{L-1} \label{eqn:choose-r-lr-lb} \\
&= \frac{1}{L-1}(-\lambda_* p - \log_q \sfM(-\lambda_*)) \notag \\
&= \frac{1}{L-1}\curbrkt{-\lambda_*p - \log_q\sqrbrkt{q^{-L} \sum_{\va\in\cA_{q,L}} \binom{L}{\va} \exp_q\paren{-\lambda_*\paren{1 - \frac{1}{L}\maxl\curbrkt{\va}}}}} \notag \\
&= \frac{L}{L-1} - \frac{1}{L-1}\curbrkt{\lambda_*p + \log_q\sqrbrkt{\sum_{\va\in\cA_{q,L}} \binom{L}{\va} \exp_q\paren{-\lambda_*\paren{1 - \frac{1}{L}\maxl\curbrkt{\va}}}}}. \notag 
\end{align}
Indeed, we discard one codeword per \emph{bad} list, i.e., a list $\cL\in\binom{\cC}{L}$ satisfying $ \ol\rad_\ell(\cL)\le np $. 
Since there are $ \binom{M}{L}\doteq q^{nRL} $ lists, by linearity of expectation, the number of codewords to be discarded is expected to be $ \exp_q(nRL - nE) $. 
As long as $ \exp_q(nRL - nE)\ll q^{nR} $, we are left with a $(p,\ell,L)_q$-list-recoverable code of rate essentially $R$. 
The proof is then finished by noting that the requirement is met by the choice of $R$ in \Cref{eqn:choose-r-lr-lb}. 
\end{proof}

% \section{Construction of zero-rate $q$-ary list-decodable codes}
% \label{sec:list-dec-qary-postplotkin-constr}
% \input{sec-list-dec-qary-postplotkin-constr}

\section{Conclusion}
\label{sec:open-q}
In this work, we addressed the basic question of determining the maximum achievable decoding radius for positive rate list-recoverable codes, i.e., we pinned down the list-recovery zero-rate threshold. We then adapted known techniques to show that codes correcting more errors must in fact have \emph{constant} size. Subsequently, we transferred this bound to give upper bounds on the rate of list-recoverable codes for all values of decoding radius. 

As we apply general Ramsey-theoretic tools in bounding the size of list-recoverable codes in the zero-rate regime, our dependence on the corresponding parameters is quite poor, and indeed, we made no efforts to optimize these constants. However, for list-decodable binary codes in the zero-rate, a recent work of Alon, Bukh and Polyanskiy~\cite{abp-2018} derived new (and, in some cases, tight) upper bounds on their size. Obtaining similarly improved size upper bounds for $q$-ary list-decodable/-recoverable codes in the zero-rate regime therefore appears to be a natural next step.

\section{Acknowledgement}
\label{sec:ack}
Part of this work was done while NR was affiliated with the Centrum Wiskunde \& Informatica and supported in part by ERC H2020 grant No.74079 (\mbox{ALGSTRONGCRYPTO}). CY is supported in part by the National Natural Science Foundation of China under Grant 12101403. YZ is grateful to Shashank Vatedka, Diyuan Wu and Fengxing Zhu for inspiring discussions. 

\bibliographystyle{alpha}
\bibliography{ref} 

\appendix

\section{Discussion of Blinovsky's results \cite{blinovsky-2005-ls-lb-qary,blinovsky-2008-ls-lb-qary-supplementary}}
\label{sec:discussion-blinovsky}

As mentioned in \Cref{sec:intro}, part of the motivation of this work is to fill in the gaps in the proofs in  \cite{blinovsky-2005-ls-lb-qary,blinovsky-2008-ls-lb-qary-supplementary} for $q$-ary list-decoding. 
We discuss in detail below the issues therein. 
The main result in \cite{blinovsky-2005-ls-lb-qary} is a Plotkin bound (as our \Cref{thm:plotkin-qary-list-dec-main}) for an arbitrary $q$-ary list-decodable code $\cC\subset[q]^n$. 
For the sake of brevity, we assume in the proceeding discussion that $\cC$ is $w$-constant weight. 
Additional bookkeeping is needed to handle small deviations in the weight, as we did in the proof of \Cref{thm:plotkin-list-dec-qary}. 
The skeleton of the proof in \cite{blinovsky-2005-ls-lb-qary} follows Blinovsky's proof in the \emph{binary} case \cite{blinovsky-1986-ls-lb-binary} which we adopt here as well: $ (i) $ pass to an (approximately) equi-coupled subcode $\cC' = \{\vx_1,\cdots,\vx_M\}\subset\cC$ using a Ramsey reduction; $(ii)$ handle asymmetric coupling using \komlos's argument (and its order-$L$ generalization \cite{bondaschi-dalai}); $ (iii) $ prove an upper bound on the size $M$ of the subcode $\cC'$ using a double-counting argument. 
In completing the double-counting argument, one is required to upper bound the average radius (averaged over all $L$-lists in the subcode) by the zero-rate threshold $ p_*(q,L,w) = 1 - \frac{1}{L}g_{q,L}(w) $:
\begin{align}
    \frac{1}{M^L} \sum_{(i_1,\cdots,i_L)\in[M]^L} \ol{\rad}(\vx_{i_1}, \cdots, \vx_{i_L})
    &= \sum_{k = 1}^n \paren{1 - \frac{1}{L} f_{q,L}(P_k)}
    \le n\paren{1 - \frac{1}{L} g_{q,L}(w)} , \label{eqn:desiderata-ineq}
\end{align}
where $P_k\in\Delta([q])$ is the empirical distribution of the $k$-th column of $\cC'\in[q]^{M\times n}$. 
The equality in \Cref{eqn:desiderata-ineq} is by elementary algebraic manipulations (see \Cref{eqn:interchange-sum-list-dec} for details). 
To show the inequality in \Cref{eqn:desiderata-ineq}, we need the following properties of the functions $ f_{q,L} $ and $ g_{q,L} $:
\begin{enumerate}
    \item \label{itm:desiderata-1}
    For any $ P = (p_1,\cdots,p_q) \in\Delta([q]) $, we have $ f_{q,L}(P) \ge g_{q,L}(1-p_q) $. 
    In words, uniformizing $P$ except one entry will only make $ f_{g,L} $ no larger. 
    
    \item \label{itm:desiderata-2} $g_{q,L}$ is convex as a univariate real-valued function on $[0,(q-1)/q]$. 
\end{enumerate}
If these properties hold, one can deduce \Cref{eqn:db-ub-schur,eqn:use-gql-convex} from which \Cref{eqn:desiderata-ineq} follows. 
However, we observe that the proofs in \cite{blinovsky-2005-ls-lb-qary,blinovsky-2008-ls-lb-qary-supplementary} for both properties above are problematic. 

To show \Cref{itm:desiderata-1} above, the idea in \cite{blinovsky-2005-ls-lb-qary} is to show instead monotonicity of $f_{q,L}$ under the so-called \emph{Robin Hood operation} which averages two distinct entries of $P$. 
Specifically, \cite{blinovsky-2005-ls-lb-qary} attempts to show 
\begin{align}
    f_{q,L}\paren{p_1,\cdots,p_i,\cdots,p_j,\cdots,p_q} &\ge f_{q,L}\paren{p_1,\cdots,\frac{p_i + p_j}{2},\cdots,\frac{p_i + p_j}{2},\cdots,p_q} , \label{eqn:robin-hood}
\end{align}
for any $ 1\le i<j\le q $. 
This suffices since a sequence of Robin Hood operations can turn $P$ into $ P_{q,1-p_q} $ (defined in \Cref{eqn:def-pqw}). 
\cite{blinovsky-2005-ls-lb-qary} then proceeds to show \Cref{eqn:robin-hood} by checking the derivative of a certain function related to the Robin Hood operation. 
Specifically, fix $ (p_k)_{k\in[q]\setminus\{i,j\}} $ and assume $ p_i+p_j = c $ (or equivalently $ \sum_{k\in[q]\setminus\{i,j\}} p_i = 1-c $) for some constant $0\le c\le1$. 
Consider the function $ F_{q,L}\colon[0,c]\to\bbR $ defined as:
\begin{align}
    F_{q,L}(p) &= f_{q,L}\paren{p_1,\cdots,p,\cdots,c-p,\cdots,p_q} , \notag 
\end{align}
i.e., $ f_{q,L} $ evaluated at $P$ with $p_i = p,p_j = c-p$. 
The proof of \Cref{eqn:robin-hood} is reduced to proving $ F_{q,L}'(p)\le0 $ for $ p\in[0,c/2] $ and $ F_{q,L}'(p)\ge0 $ for $p\in[c/2,c]$. 
If true, it implies that $ f_{q,L}(P) $ is minimized at $ p_i = p_j = c/2 $ with fixed $ (p_k)_{k\in[q]\setminus\{i,j\}} $. 
However, we note that the expression of $ F_{q,L}'(p) $ (see the second displayed equation on page 27 of \cite{blinovsky-2005-ls-lb-qary}) is incorrect. 
Upon correcting it, we do not see an easy way to argue its non-positivity/-negativity. 
In particular, the claim in \cite{blinovsky-2005-ls-lb-qary} that $F_{q,L}'(p)$, as a sum of multiple terms, is \emph{term-wise} non-positive/-negative can be in general falsified by counterexamples. 

The proof (attempt) of \Cref{itm:desiderata-2} is deferred to a subsequent paper \cite{blinovsky-2008-ls-lb-qary-supplementary}. 
The methodology thereof is similar to ours, i.e., verifying $g_{q,L}''\ge0$. 
However, the expression of $ g_{q,L}'' $ in \cite{blinovsky-2008-ls-lb-qary-supplementary} is not exactly correct (see the first displayed equation on page 36 of \cite{blinovsky-2008-ls-lb-qary-supplementary} and compare it with ours in \Cref{eqn:exp-for-g''}\footnotemark{}) and we have trouble verifying the case analysis of the values of $G(\cdot)$ (see \Cref{eqn:G-defn} in our notation, denoted by $\gamma(\cdot)$ in \cite{blinovsky-2008-ls-lb-qary-supplementary}) following that expression. 
\footnotetext{Note that the function considered in \cite{blinovsky-2008-ls-lb-qary-supplementary} is, in our notation, $ 1 - \frac{1}{L}g_{q,L}(w) $ instead of $g_{q,L}(w)$ per se as considered in \Cref{thm:convex-gql}.  }

In contrast to Blinovsky's approach \cite{blinovsky-2005-ls-lb-qary,blinovsky-2008-ls-lb-qary-supplementary}, we deduce the monotonicity property of $f_{q,L}$ (cf.\ \Cref{itm:desiderata-1} above) from a stronger property: Schur convexity (cf.\ \Cref{thm:schur-convex-fql}). 
Also, we believe that our proof of the convexity of $g_{q,L}$ (cf.\ \Cref{itm:desiderata-2} above) is cleaner, more transparent and easier to verify. 
Both results can be extended to list-recovery setting. 
Another advantage is that the monotonicity property of $ g_{q,L} $ (specifically, $ g_{q,L} $ is non-increasing in $ [0,(q-1)/q] $ and non-decreasing in $ [(q-1)/q,1] $) which is needed in the proof of the Plotkin bound appears to be a simple consequence of the Schur convexity of $ f_{q,L} $ (see \Cref{prop:mono-gqL}). 
In \cite{blinovsky-2008-ls-lb-qary-supplementary}, this is proved by checking the first derivative of $g_{q,L}$ which involves somewhat cumbersome calculations and case analysis. 

\section{Monotonicity properties of $g_{q,L}$: proof of \Cref{prop:mono-gqL}} \label{subsec:mono-gqL}

In this section we prove \Cref{prop:mono-gqL}, which we recall here for convenience.

\monotonicitygqL*

\begin{proof}
    As $g_{q,L}(w) = f_{q,L}(P_{q,w})$ and $f_{q,L}$ is Schur convex, the proposition follows from the following pair of facts:
    \begin{itemize}
        \item if $w \leq u \leq \frac{q-1}{q}$ then $P_{q,w} \succeq P_{q,u}$;
        \item if $w \geq u \geq \frac{q-1}{q}$ then $P_{q,w} \succeq P_{q,u}$.
    \end{itemize}
    Looking at the first item, if $w \leq u \leq \frac{q-1}{q}$ then $P_{q,w}^{\downarrow} = (1-w, \frac{w}{q-1},\dots,\frac{w}{q-1})$ and $P_{q,u}^{\downarrow} = (1-u,\frac{u}{q-1},\dots,\frac{u}{q-1})$. It therefore suffices to show that 
    \[
        \forall k \in \{0,1,\dots,q-1\},~~1-w + k\cdot\frac{w}{q-1} \geq 1-u + k\cdot\frac{u}{q-1},
    \]
    which is rearranges to 
    \[
        \forall k \in \{0,\dots,q-1\},~~ u-w \geq (u-w)\frac{k}{q-1}.
    \]
    As $u-w \geq 0$, the above is indeed true for all $0 \leq k \leq q-1$.
    
    Considering now the second item, if $w \geq u \geq \frac{q-1}{q}$ then $P_{q,w}^{\downarrow} = (\frac{w}{q-1},\dots,\frac{w}{q-1},1-w)$ and $P_{q,u}^{\downarrow} = (\frac{u}{q-1},\dots,\frac{u}{q-1},1-u)$. Thus, the fact that $P_{q,w} \succeq P_{q,u}$ follows immediately from the fact that 
    \[
        \forall k \in \{1,\dots,q-1\},~~ k\cdot\frac{w}{q-1} \geq k\cdot\frac{u}{q-1}. \qedhere
    \]
\end{proof}

\section{Expressions for $f_{q,L,\ell},g_{q,L,\ell}$ and their derivatives: proof of \Cref{lemma:derivs}} \label{subsec:deriv-expressions}

First, by applying the multinomial theorem, we derive that
\begin{align}
    f_{q,L}(P) &= \exptover{(X_1,\dots,X_L)\sim P^{\ot L}}{\plur(X_1,\dots,X_L)} \notag \\
    &= \sum_{(x_1,\dots,x_L) \in [q]^L}\pl(x_1,\dots,x_L)\probover{(X_1,\dots,X_L)\sim P^{\ot L}}{X_1=x_1,\dots,X_L=x_L} \notag \\
    &= \sum_{(x_1,\dots,x_L) \in [q]^L}\max\{|\{i:x_i=j\}|:j \in [q]\} \prod_{i=1}^n p_{x_i} \notag \\
    &= \sum_{(a_1,\dots,a_q)\in\cA_{q,L}} \binom{L}{a_1,\dots,a_q} \max\curbrkt{a_1,\dots,a_q} \paren{\prod_{i = 1}^q p_i^{a_i}} \notag \\
    &= \sum_{\va\in\cA_{q,L}} \binom{L}{\va} \max\curbrkt{\va} \paren{\prod_{i = 1}^q p_i^{a_i}} . \notag
\end{align}
Next, for any $ j\in[q] $, one can then compute $ \frac{\partial f_{q,L}}{\partial p_j} $ as follows. 
\begin{align}
    &\frac{\partial}{\partial p_j} f_{q,L}(p_1,\dots,p_q) \notag \\
    &= \sum_{\substack{(a_1,\dots,a_q)\in\cA_{q,L} \\ a_j\ge1}} \binom{L}{a_1,\dots,a_q} \max\curbrkt{a_1,\dots,a_q} \paren{\prod_{i\in[q]\setminus\{j\}}p_i^{a_i}} a_j p_j^{a_j - 1} \notag \\
    &= \sum_{\substack{(a_1,\dots,a_{j}-1,\dots,a_q)\in\bbZ_{\ge0}^q \\ a_1 + \cdots + a_j + \cdots + a_q = L}} \frac{L!}{a_1!\cdots(a_j - 1)!\cdots a_q!} \max\curbrkt{a_1,\dots,a_q} \paren{\prod_{i\in[q]\setminus\{j\}}p_i^{a_i}} p_j^{a_j - 1} \notag \\
    &= L\sum_{\substack{(a_1,\dots,a_{j},\dots,a_q)\in\bbZ_{\ge0}^q \\ a_1 + \dots + a_j + \dots + a_q = L-1}} \frac{(L-1)!}{a_1!\cdots a_j!\cdots a_q!} \max\curbrkt{a_1,\dots,a_j+1,\dots,a_q} \paren{\prod_{i\in[q]}p_i^{a_i}} \label{eqn:change-variable} \\
    &= L\sum_{(a_1,\cdots,a_q)\in\cA_{q,L}} \binom{L-1}{a_1,\dots,a_q} \max\curbrkt{a_1,\dots,a_j + 1,\dots,a_q} \paren{\prod_{i=1}^qp_i^{a_i}} \notag \\
    &= L \sum_{\va\in\cA_{q,L}} \binom{L-1}{\va} \max\curbrkt{\va + \ve_j} \paren{\prod_{i = 1}^q p_i^{a_i}} . \notag 
\end{align}
In \Cref{eqn:change-variable}, we apply the change of variable $ (a_j - 1) \to a_j $. 

Now, for $j,k \in [q]$, similar reasoning leads to 
\begin{align}
    & \frac{\partial}{\partial p_k\partial p_j} f_{q,L}(p_1,\dots,p_q) \notag \\
    &= L\sum_{\substack{(a_1,\dots,a_q)\in\cA_{q,L} \\ a_k\ge1}} \binom{L-1}{\va} \max\curbrkt{\va + \ve_j} \paren{\prod_{i\in[q]\setminus\{k\}}p_i^{a_i}} a_k p_k^{a_k - 1} \notag \\
    &= L\sum_{\substack{(a_1,\dots,a_{k}-1,\dots,a_q)\in\bbZ_{\ge0}^q \\ a_1 + \cdots + a_k + \cdots + a_q = L-1}} \frac{(L-1)!}{a_1!\cdots(a_k - 1)!\cdots a_q!} \max\curbrkt{\va + \ve_j} \paren{\prod_{i\in[q]\setminus\{k\}}p_i^{a_i}} p_k^{a_k - 1} \notag \\
    &= L(L-1)\sum_{\substack{(a_1,\dots,a_{j},\dots,a_q)\in\bbZ_{\ge0}^q \\ a_1 + \dots + a_j + \dots + a_q = L-1}} \frac{(L-2)!}{a_1!\cdots a_k!\cdots a_q!} \max\curbrkt{\va + \ve_j + \ve_k} \paren{\prod_{i\in[q]}p_i^{a_i}} \\
    &= L(L-1)\sum_{(a_1,\cdots,a_q)\in\cA_{q,L-2}} \binom{L-2}{a_1,\dots,a_q} \max\curbrkt{\va + \ve_j + \ve_k} \paren{\prod_{i=1}^qp_i^{a_i}} \notag \\
    &= L(L-1) \sum_{\va\in\cA_{q,L-2}} \binom{L-2}{\va} \max\curbrkt{\va + \ve_j + \ve_k} \paren{\prod_{i = 1}^q p_i^{a_i}} . \notag 
\end{align}

Finally, to compute $g_{q,L,\ell}''(w)$ for $w \in [0,1]$,we first apply the chain rule.
\begin{align}
    g_{q,L,\ell}(P_{q,w,\ell}) &= \frac{d}{dw}f_{q,L,\ell}(P_{q,\ell,w}) \notag \\
    &= \sum_{1\leq i,j\leq q-\ell} \paren{\left.\frac{\partial^2}{\partial p_i\partial p_j}\right|_{(p_1,\cdots,p_q) = P_{q,\ell,w}} f_{q,L,\ell}(p_1,\cdots,p_q)} \frac{1}{(q-\ell)^2} \notag \\
    & \quad - 2\sum_{ i=1}^{q-\ell}\sum_{j=q-\ell+1}^q \paren{\left.\frac{\partial^2}{\partial p_i\partial p_j}\right|_{(p_1,\cdots,p_q) = P_{q,\ell,w}} f_{q,L,\ell}(p_1,\cdots,p_q)} \frac{1}{(q-\ell)\ell} \notag \\
    &\quad +\sum_{q-\ell+1 \leq i,j\leq q} \paren{\left.\frac{\partial^2}{\partial p_i\partial p_j}\right|_{(p_1,\cdots,p_q) = P_{q,\ell,w}} f_{q,L,\ell}(p_1,\cdots,p_q)} \frac{1}{\ell^2} \notag .
\end{align}
Next, note that for any $\va \in \cA_{q,L-2}$, if $P_{q,w,\ell} = (p_1,\dots,p_q)$,
\begin{align*}
    \prod_{i=1}^q p_i^{a_i} &= \paren{\frac{w}{q-\ell}}^{a_1+\cdots+a_{q-\ell}}\paren{\frac{1-w}{\ell}}^{a_{q-\ell+1}+\cdots+a_{q}} \\
    &= \paren{\frac{w}{q-\ell}}^{L-2}\paren{\frac{(q-\ell)(1-w)}{\ell w}}^{a_{q-\ell+1}+\cdots+a_q} . \
\end{align*}
Thus, we conclude
\begin{align}
    g_{q,L,\ell}(P_{q,w,\ell})
    &= \sum_{1\leq i,j\leq q-\ell} L(L-1)\sum_{\va \in \cA_{q,L-2}}\binom{L-2}{\va} \maxl\{\va + \ve_i + \ve_j\} \notag \\
    &\quad\quad \cdot \paren{\frac{w}{q-\ell}}^{L-2}\paren{\frac{(q-\ell)(1-w)}{\ell w}}^{a_{q-\ell+1}+\cdots+a_q} \frac{1}{(q-\ell)^2} \notag \\
    & \quad - 2\sum_{ i=1}^{q-\ell}\sum_{j=q-\ell+1}^q L(L-1)\sum_{\va \in \cA_{q,L-2}}\binom{L-2}{\va} \maxl\{\va + \ve_i + \ve_j\} \notag \\
    &\quad\quad \cdot \paren{\frac{w}{q-\ell}}^{L-2}\paren{\frac{(q-\ell)(1-w)}{\ell w}}^{a_{q-\ell+1}+\cdots+a_q} \frac{1}{(q-\ell)\ell} \notag \\
    &\quad +\sum_{q-\ell+1 \leq i,j\leq q} L(L-1)\sum_{\va \in \cA_{q,L-2}}\binom{L-2}{\va} \maxl\{\va + \ve_i + \ve_j\} \notag \\
    &\quad\quad \cdot \paren{\frac{w}{q-\ell}}^{L-2}\paren{\frac{(q-\ell)(1-w)}{\ell w}}^{a_{q-\ell+1}+\cdots+a_q} \frac{1}{\ell^2} \notag \\
    &= L(L-1)\paren{\frac{w}{q-\ell}}^{L-2}\sum_{\va \in \cA_{q,L-2}} \paren{\frac{(q-\ell)(1-w)}{\ell w}}^{a_{q-\ell+1}+\cdots+a_q} \notag \\
    &\quad \cdot \left(\frac{1}{(q-\ell)^2}\sum_{1 \leq i,j \leq q-\ell} \maxl\{\va + \ve_i + \ve_j\} - \frac{2}{(q-\ell)\ell}\sum_{i=1}^{q-\ell}\sum_{j=q-\ell+1}^{q} \max_{\ell}\{\va + \ve_i + \ve_j\}\right. \notag \\
    &\quad\quad \left.+ \frac{1}{\ell^2}\sum_{q-\ell+1 \leq i,j \leq q} \maxl\{\va + \ve_i + \ve_j\} \right) \ . 
\end{align}
Recalling the definition of $G_\ell(\va)$, the result follows. 

% !TEX root = ./list-dec-qary-and-list-rec.tex

\section{Auxiliary lemmas}
\label{app:aux-lem}

\begin{theorem}[\cramer]
\label{thm:cramer}
Let $ X_1,\cdots,X_n $ be a sequence of $n$ i.i.d.\ real-valued random variables
Define $ S_n \coloneqq \frac{1}{n}\sum_{i=1}^n X_i $. 
Fix any $ q>1 $. 
Suppose $ \log_q\expt{\exp_q\paren{\lambda X_1}}<\infty $ for every $ \lambda\in\bbR $. 
Then for any $ t>\expt{X_1} $, 
\begin{align}
    \limsup_{n\to\infty} -\frac{1}{n} \log_q \prob{S_n \ge t} &\le \sup_{\lambda\in\bbR} \curbrkt{\lambda t - \log_q\expt{\exp_q\paren{\lambda X_1}}} . \notag 
\end{align}
\end{theorem}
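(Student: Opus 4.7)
} The statement is the \emph{lower bound} half of Cram\'er's theorem: since $-\tfrac1n\log_q\prob{S_n\ge t}$ is bounded above by the Legendre transform, this is equivalent to $\prob{S_n\ge t}\ge q^{-n(I(t)+o(1))}$, where $I(t)=\sup_\lambda\{\lambda t-\log_q\expt{\exp_q(\lambda X_1)}\}$. The plan is to use the classical \emph{exponential tilting} (a.k.a.\ change of measure) argument. Write $M(\lambda)\coloneqq\expt{\exp_q(\lambda X_1)}$, which is finite for every $\lambda\in\bbR$ by hypothesis; in particular $\log_q M$ is smooth and strictly convex, so the sup in $I(t)$ is attained at a unique $\lambda_*$ satisfying $M'(\lambda_*)/(M(\lambda_*)\ln q)=t$. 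Since $t>\expt{X_1}=M'(0)/(M(0)\ln q)$, strict convexity also yields $\lambda_*>0$.

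First I would define the tilted law $\tbbP$ on $\bbR^n$ with density $\prod_{i=1}^n \exp_q(\lambda_* X_i)/M(\lambda_*)$ with respect to the original product law. Under $\tbbP$ the variables $X_1,\dots,X_n$ are again i.i.d., and their common mean is exactly $t$ by the choice of $\lambda_*$. The variance is finite (differentiating the moment generating function twice), so by the weak law of large numbers, for any fixed $\delta>0$,
\begin{align}
\tprob{t\le S_n\le t+\delta} \;\xrightarrow[n\to\infty]{}\; \tprob{X_1\ge t\text{ in the limit}}\ge\tfrac13 \notag
\end{align}
(the CLT gives $\tfrac12$ on the left tail of $t$; any constant $c>0$ suffices).

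Next I would perform the change of measure on the event $\{t\le S_n\le t+\delta\}$:
\begin{align}
\prob{S_n\ge t}\;\ge\;\prob{t\le S_n\le t+\delta}\;=\;\texpt{M(\lambda_*)^n\exp_q(-\lambda_* nS_n)\,\indicator{t\le S_n\le t+\delta}}. \notag
\end{align}
On the indicator event, $\exp_q(-\lambda_* nS_n)\ge\exp_q(-\lambda_* n(t+\delta))$ because $\lambda_*>0$. Combining with the tilted lower bound above gives
\begin{align}
\prob{S_n\ge t}\;\ge\;\tfrac13\cdot M(\lambda_*)^n\cdot\exp_q\paren{-\lambda_* n(t+\delta)} \notag
\end{align}
for all $n$ sufficiently large. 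Taking $-\tfrac1n\log_q$, sending $n\to\infty$, and then letting $\delta\downarrow0$ yields
\begin{align}
\limsup_{n\to\infty} -\tfrac1n\log_q\prob{S_n\ge t}\;\le\;\lambda_* t-\log_q M(\lambda_*)\;=\;\sup_{\lambda\in\bbR}\{\lambda t-\log_q M(\lambda)\}, \notag
\end{align}
which is the desired inequality.

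The only mildly delicate point will be justifying that $\lambda_*$ exists and is strictly positive (needed so that the elementary bound $\exp_q(-\lambda_* nS_n)\ge\exp_q(-\lambda_* n(t+\delta))$ goes in the useful direction); if the sup is not attained in $\bbR$, one instead takes $\lambda\to\infty$ along a maximizing sequence and obtains the bound as a limit, or handles the degenerate case $\prob{X_1=\esssup X_1}>0$ separately. Everything else is routine: finite MGF everywhere implies finite variance under every tilt, which is all that is needed for the weak law step above. I do not expect any serious obstacle, as this is textbook material; the main care is simply in keeping the base-$q$ logarithms consistent with the base-$e$ exponentials that appear in the standard write-up.
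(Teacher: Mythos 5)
The paper states this as a classical result (Cram\'er's large deviation theorem) and does not supply a proof, so there is nothing to compare against. Your proposal is the standard exponential-tilting argument for the lower-bound half of Cram\'er's theorem, and it is correct. A few small notes for the record: the precise tilted probability bound works as you sketch---the CLT (under $\tbbP$, where the mean is $t$ and the variance is finite because the MGF is finite everywhere) gives $\tprob{S_n\ge t}\to\tfrac12$, while the WLLN kills the mass above $t+\delta$, so $\tprob{t\le S_n\le t+\delta}\to\tfrac12\ge\tfrac13$ eventually. Your observation that $\lambda_*>0$ follows from strict convexity of $\log_q M$ together with $(\log_q M)'(0)=\expt{X_1}<t$ is exactly what is needed to make the estimate $\exp_q(-\lambda_* nS_n)\ge\exp_q(-\lambda_* n(t+\delta))$ go in the right direction. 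The edge cases you flag (sup not attained at a finite $\lambda$, degenerate $X_1$, and $t\ge\esssup X_1$) are genuine but routine: if $t>\esssup X_1$ both sides equal $+\infty$, and if $t=\esssup X_1$ with an atom the sup is approached as $\lambda\to\infty$ and yields $-\log_q\prob{X_1=t}$, matching the exact value of the left side; an alternative that avoids the CLT entirely is to tilt to mean $t+\delta/2$ (when possible) and use only the WLLN. All of this is textbook and none of it is a gap.
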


\begin{theorem}[Hypergraph Ramsey \cite{ramsey,erdos-rado-hypergraph-ramsey}]
\label{thm:hypergraph-ramsey}
Let $ c, k, M\in\bbZ_{\ge2} $. 
Then there exists a positive integer $ N = N(c,k,M) $ such that if the hyperedges of a complete $k$-uniform hypergraph $ \cH_{k,N} $ (i.e., a hypergraph in which every subset of $k$ vertices is connected by an hyperedge) of size (i.e., number of vertices) $ N $ are coloured with $c$ different colours, then $ \cH_{k,N} $ must contain a subgraph which $(i)$ is a complete $k$-uniform hypergraph of size at least $M$, and $(ii)$ has the same colour on all of its hyperedges. 
\end{theorem}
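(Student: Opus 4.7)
The plan is to prove this by induction on the uniformity $k$, following the classical Erdős--Rado scheme. The base case $k=2$ reduces to the multicolour graph Ramsey theorem, which can be established by a greedy neighbourhood-shrinking argument: in a $c$-edge-coloured $K_N$, at any vertex $v$ some colour class contains at least $\lceil(N-1)/c\rceil$ edges incident to $v$; restricting to that monochromatic neighbourhood and iterating $M$ times produces a monochromatic $M$-clique whenever $N$ is at least (roughly) $c^{cM}$.

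For the inductive step, assume the theorem holds for $(k-1)$-uniform hypergraphs with Ramsey function $N'(c, k-1, \cdot)$. Given a $c$-colouring of the $k$-edges of a complete $k$-uniform hypergraph on a sufficiently large vertex set $V_0$, I will build a nested sequence $V_0 \supseteq V_1 \supseteq \cdots \supseteq V_T$, a vertex sequence $v_1,\ldots,v_T$ with $v_t \in V_{t-1}\setminus V_t$, and a colour sequence $c_1,\ldots,c_T \in [c]$ as follows. At step $t$, pick any $v_t \in V_{t-1}$, and $c$-colour each $(k-1)$-subset $S \subseteq V_{t-1}\setminus\{v_t\}$ by the colour that the original $k$-uniform colouring assigns to $\{v_t\}\cup S$. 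By the inductive hypothesis, as long as $|V_{t-1}| - 1 \ge N'(c, k-1, M_t)$ for a threshold $M_t$ to be determined, one can extract a monochromatic $(k-1)$-uniform clique $V_t \subseteq V_{t-1}\setminus\{v_t\}$ of size $M_t$; let $c_t$ denote its common colour. This construction guarantees that every $k$-edge of the form $\{v_t\}\cup S$ with $S \in \binom{V_t}{k-1}$ receives colour $c_t$ in the original colouring.

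Now set $T \coloneqq c(M-1) + 1$. Pigeonhole applied to $(c_1,\ldots,c_T)$ yields a colour $c^\star$ and indices $t_1 < t_2 < \cdots < t_M$ with $c_{t_i} = c^\star$ for every $i \in [M]$. I claim the vertex set $\{v_{t_1},\ldots,v_{t_M}\}$ is a monochromatic $k$-uniform clique of colour $c^\star$: for any $k$-subset with ordered indices $i_1 < \cdots < i_k$, the nesting $V_{t_{i_1}} \supseteq V_{t_{i_2}} \supseteq \cdots \supseteq V_{t_{i_k - 1}}$ forces $v_{t_{i_2}},\ldots,v_{t_{i_k}} \in V_{t_{i_1}}$, so the $(k-1)$-set $\{v_{t_{i_2}},\ldots,v_{t_{i_k}}\}$ was part of the monochromatic $(k-1)$-clique constructed at step $t_{i_1}$. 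Hence the original $k$-edge $\{v_{t_{i_1}},\ldots,v_{t_{i_k}}\}$ receives colour $c_{t_{i_1}} = c^\star$, proving the claim.

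The main obstacle is quantifying the required sizes so that all $T$ inductive invocations succeed. I would set $M_T \ge k-1$ and work backwards, recursively requiring $M_{t-1} \ge N'(c, k-1, M_t) + 1$ for $t = T, T-1, \ldots, 1$, and finally $N = |V_0| \ge N'(c, k-1, M_1) + 1$. This produces a valid (albeit tower-type) bound on $N(c, k, M)$, which is entirely sufficient for the qualitative existence statement asserted in the theorem; no care is needed to optimize constants since the theorem is applied in this paper only to conclude finiteness of $N(\cdot,\cdot,\cdot)$.
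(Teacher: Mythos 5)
The paper does not supply its own proof of this theorem: it is stated as a known fact with citations to Ramsey and to Erd\H{o}s--Rado, and is used only as a black box in the Ramsey step of the Plotkin bound. Your argument is a correct rendition of the classical Erd\H{o}s--Rado step-down induction, including the backward recursion on the target sizes $M_T, M_{T-1}, \dots, M_1, N$ that makes the inductive invocations go through. Two cosmetic slips, neither affecting validity: in the base case, ``iterating $M$ times'' should read ``iterating $c(M-1)+1$ times'' (the stated bound $N \gtrsim c^{cM}$ already reflects the correct count, and you carry out this pigeonhole explicitly in the inductive step); and in the final containment argument, the way to see $v_{t_{i_2}}, \dots, v_{t_{i_k}} \in V_{t_{i_1}}$ is via $v_{t_{i_j}} \in V_{t_{i_j}-1} \subseteq V_{t_{i_1}}$ for $j \ge 2$, which is the observation your conclusion actually relies on rather than the chain as typeset.
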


\begin{theorem}[\cite{blinovsky-2005-ls-lb-qary,bondaschi-dalai}]
\label{eqn:komlos}
Let $ M\ge2L-2 $. 
Let $ X_1,\cdots,X_M $ be a sequence of $[q]$-valued random variables. 
Let $ P_{X_1,\cdots,X_M} $ denote their joint distribution. 
Suppose there exists a distribution $ Q_{\ol{X}_1,\cdots,\ol{X}_L}\in\Delta([q]^L) $ and a constant $ \eps>0 $ such that\footnote{Here we use $ P_{X_{i_1},\cdots,X_{i_L}} $ to denote the joint distribution of $ X_{i_1},\cdots,X_{i_L} $ which can be obtained by taking the corresponding marginal of the full joint distribution $ P_{X_1,\cdots,X_M} $.} 
\begin{align}
\norminf{P_{X_{i_1},\cdots,X_{i_L}} - Q_{\ol{X}_1,\cdots,\ol{X}_L}} \le \eps \notag 
\end{align}
for every $ 1\le i_1<\cdots<i_L\le M $.  
Then 
\begin{align}
\max_{\pi\in S_L} \norminf{Q_{\ol{X}_1,\cdots,\ol{X}_L} - Q_{\ol{X}_{\pi(1)},\cdots,\ol{X}_{\pi(L)}}} 
&\le 2L^3\sqrt{\frac{2L}{M-(L-2)}} + 4L^3\sqrt{q^{L-2}\eps} + L^2\eps . \notag 
\end{align}
\end{theorem}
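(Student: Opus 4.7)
I would follow the Komlós-style argument (for $L=2$, \cite{blinovsky-1986-ls-lb-binary}) and its generalization to arbitrary $L$ by Bondaschi and Dalai \cite{bondaschi-dalai}. The key principle is that the hypothesis controls only \emph{sorted} $L$-marginals of $X_1,\ldots,X_M$, but aggregate count statistics — namely factorial moments of the symbol counts $N_x := |\{i:X_i=x\}|$ — depend on the $L$-marginals only through the multiset of observed symbols, and hence are manifestly invariant under permutations of the target vector. Equating two expressions for the same factorial moment then forces $Q$ itself to be approximately symmetric, modulo $\epsilon$- and finite-$M$-errors. As a first reduction, every $\pi \in S_L$ is a product of at most $\binom{L}{2}$ transpositions, so by the triangle inequality it suffices to bound $\|Q - Q\circ\tau\|_\infty$ for each transposition $\tau$; the $\binom{L}{2}$ factor contributes to the $L^2\epsilon$ summand of the final bound.

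\textbf{Factorial-moment identity.} Fix $y = (y_1,\ldots,y_L) \in [q]^L$ and sum $P_{X_{i_1},\ldots,X_{i_L}}(y)$ over all ordered tuples $(i_1,\ldots,i_L)$ of \emph{distinct} indices in $[M]$. Computed directly, this sum equals $\mathbb{E}\bigl[\prod_{x\in[q]} (N_x)_{c_x(y)}\bigr]$, where $c_x(y) := |\{k:y_k=x\}|$ and $(n)_c := n(n-1)\cdots(n-c+1)$; this quantity depends on $y$ only through the multiset $(c_x(y))_{x\in[q]}$ and is therefore invariant under any coordinate permutation of $y$. Computed the other way, I group the distinct tuples by their sorting permutation $\sigma\in S_L$: each of the $L!$ groups has $\binom{M}{L}$ elements, and the hypothesis bounds each summand in the $\sigma$-group as $Q(y_{\sigma^{-1}(1)},\ldots,y_{\sigma^{-1}(L)}) \pm \epsilon$. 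Equating the two expressions shows that the symmetrization $\sum_{\sigma\in S_L} Q\circ\sigma$ is $\tau$-invariant (which is a tautology), but more usefully — by applying the identity with only \emph{some} of the $y$-coordinates varying — yields non-trivial pointwise comparisons between $Q$ and $Q\circ\tau$.

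\textbf{From averaged to pointwise symmetry.} To upgrade the averaged invariance into a pointwise estimate on $Q(y) - Q(y\circ\tau)$, I apply the factorial-moment identity at a finer granularity. Without loss of generality let $\tau$ swap coordinates $1$ and $2$. I fix the values $(u,v)$ of the first two coordinates and treat the remaining $L-2$ coordinates $y' \in [q]^{L-2}$ as free; the identity gives, for each $y'$, a controlled comparison of $Q(u,v,y')$ and $Q(v,u,y')$ averaged over $y'$. Applying Cauchy–Schwarz across the $q^{L-2}$ choices of $y'$ then converts this averaged $\ell^2$-type bound into the desired $\ell^\infty$ bound, contributing the $\sqrt{q^{L-2}\epsilon}$ factor. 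The finite-$M$ correction $\sqrt{2L/(M-L+2)}$ arises from replacing the sum over $(M)_L$ distinct tuples by one over all $M^L$ tuples (a relative $O(L^2/M)$ perturbation) and passing through another Cauchy–Schwarz step. Summing over the $\binom{L}{2}$ transpositions needed to realize a general $\pi \in S_L$ produces the $L^3$ factor in front.

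\textbf{Main obstacle.} The delicate step is the last one: converting the \emph{averaged} symmetry provided by the factorial-moment identity into a \emph{pointwise} estimate on individual entries $Q(y) - Q(y\circ\tau)$. The square-root losses $\sqrt{q^{L-2}\epsilon}$ and $\sqrt{2L/(M-L+2)}$ are essentially inherent, arising from Cauchy–Schwarz when trading an $\ell^2$-average over $q^{L-2}$ free coordinates (respectively an $\ell^2$-average over $\binom{M}{L}$ index tuples) for an $\ell^\infty$ pointwise bound. The bookkeeping required to ensure that these bounds telescope cleanly over $\binom{L}{2}$ transpositions, yielding the stated $L^3$-coefficients rather than a worse polynomial dependence on $L$, is the main technical challenge.
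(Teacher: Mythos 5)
The paper itself does not prove this lemma -- it imports it from Blinovsky and Bondaschi--Dalai -- so your attempt has to be measured against the known Koml\'os-type argument, and there is a genuine gap at its core. The factorial-moment identity you build everything on cannot, even in principle, detect asymmetry of $Q$: the counts $N_x$ are invariant under permuting the index positions, so every statistic of the form $\mathbb{E}\bigl[\prod_x (N_x)_{c_x}\bigr]$ is determined (up to the $\eps$- and finite-$M$ errors you describe) by the symmetrization $\sum_{\sigma\in S_L}Q\circ\sigma$ alone. Consequently the whole family of such identities -- including the variant where you ``fix some of the $y$-coordinates and vary the others,'' which after marginalization is again just a factorial moment of the counts and is manifestly symmetric in the two swapped values -- is consistent with every $Q'$ having the same symmetrization as $Q$, in particular with maximally asymmetric ones. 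So the step ``yields non-trivial pointwise comparisons between $Q$ and $Q\circ\tau$'' is precisely the content of the theorem and is not delivered by your identity; no Cauchy--Schwarz over the $q^{L-2}$ free coordinates can create order-sensitivity that the underlying statistic does not possess. (If you instead restrict the sum to index tuples with a prescribed relative order in the two swapped slots, the identity you stated no longer applies, and you are back to needing the real mechanism.)

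That mechanism, in the actual proofs, is an order-sensitive statistic: prefix counts $N^{(k)}_x=|\{i\le k: X_i=x\}|$, equivalently comparisons of increasing index tuples that place the same two symbols into the swapped slots in the two possible relative orders. For $L=2$, $u\neq v$ and $\eps=0$ one has exactly
\begin{align}
k(M-k)\bigl(Q(u,v)-Q(v,u)\bigr) = \mathbb{E}\bigl[N^{(k)}_u N_v - N^{(k)}_v N_u\bigr] = \mathbb{E}\Bigl[N_v\bigl(N^{(k)}_u-\tfrac{k}{M}N_u\bigr) - N_u\bigl(N^{(k)}_v-\tfrac{k}{M}N_v\bigr)\Bigr], \notag
\end{align}
and a second-moment computation -- which uses the hypothesis a second time, and is where $\eps$ enters quadratically -- gives $\mathbb{E}\bigl[(N^{(k)}_u-\tfrac{k}{M}N_u)^2\bigr]=O(M+\eps M^2)$; choosing $k\approx M/2$ yields $|Q(u,v)-Q(v,u)|=O(M^{-1/2}+\sqrt{\eps})$. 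The general-$L$ case fixes $L-2$ context coordinates, runs this on the remaining $M-(L-2)$ indices (hence that quantity in the bound), and pays a $q^{L-2}$ factor when passing from marginalized context back to pointwise statements. This also explains why the $\sqrt{\eps}$ term is present already at $L=2$, where your proposed Cauchy--Schwarz over the $q^{L-2}$ remaining coordinates would produce no such loss -- a sign that your accounting of the error terms does not reflect the true source of the square roots. Your preliminary reduction to transpositions and the bookkeeping remarks are fine, but without the prefix-count/variance engine the proof does not go through.
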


\begin{lemma}[{\cite[pp.\ 308-310]{mcwilliams-sloane-book}}]
\label{lem:binom-coeff-bound}
Suppose $w\in[0,1]$ satisfies $ 1\le nw\le n-1 $. 
Then
\begin{align}
    \frac{2^{nH(w)}}{\sqrt{8nw(1-w)}} &\le \binom{n}{nw} \le 2^{nH(w)} . \notag 
\end{align}
\end{lemma}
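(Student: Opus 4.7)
The plan is to prove the two inequalities separately. The upper bound is an elementary consequence of the binomial theorem, while the lower bound follows from Stirling's approximation with explicit error estimates.

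For the upper bound, I would apply the binomial theorem with $p = w$:
\[
    1 = (w + (1-w))^n = \sum_{k=0}^n \binom{n}{k} w^k (1-w)^{n-k},
\]
so every individual summand is at most $1$. Specializing to $k = nw$ gives $\binom{n}{nw} w^{nw} (1-w)^{n-nw} \le 1$, and rearranging yields $\binom{n}{nw} \le w^{-nw}(1-w)^{-(n-nw)} = 2^{nH(w)}$, where the final equality uses the definition $H(w) = -w\log_2 w - (1-w)\log_2(1-w)$.

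For the lower bound, I would invoke Stirling's formula in the two-sided form
\[
    \sqrt{2\pi m}\left(\frac{m}{e}\right)^m \le m! \le \sqrt{2\pi m}\left(\frac{m}{e}\right)^m \exp\left(\frac{1}{12m}\right),
\]
and apply it to each of $n!$, $(nw)!$, and $(n(1-w))!$ in $\binom{n}{nw} = n!/((nw)!(n(1-w))!)$. After routine simplification, the $(m/e)^m$ factors collect into $w^{-nw}(1-w)^{-n(1-w)} = 2^{nH(w)}$, the square-root prefactors combine into $1/\sqrt{2\pi n w(1-w)}$, and the Stirling error contributes a correction factor bounded below by $\exp(-1/(12nw) - 1/(12n(1-w)))$. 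Since $2\pi < 8$, one has $1/\sqrt{2\pi n w(1-w)} > 1/\sqrt{8 n w(1-w)}$ with a slack factor of $\sqrt{2\pi/8} = \sqrt{\pi/4} \approx 0.886$, so it suffices to check that the Stirling error never drops below this threshold.

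The main obstacle is controlling the Stirling correction at the extremes of the allowed range, namely when $nw$ or $n(1-w)$ equals $1$ or $2$ (forcing $n$ to be very small), where the correction factor $\exp(-1/(12nw) - 1/(12n(1-w)))$ is smallest and can dip below the $0.886$ threshold (for instance, at $n=2, w=1/2$ one gets $e^{-1/6} \approx 0.846$, though direct computation confirms the lemma holds with equality there). To close this gap I would either (i) sharpen the Stirling lower bound to Robbins's form $n! \ge \sqrt{2\pi n}(n/e)^n \exp(1/(12n+1))$ in the numerator, or (ii) verify by hand the finitely many boundary triples $(n, nw, n-nw)$ where the basic estimate breaks down; either option establishes the lower bound uniformly across the stated range.
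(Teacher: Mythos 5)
The paper does not prove this lemma; it is cited verbatim from MacWilliams--Sloane (the reference given is pp.\ 308--310 of their book, where essentially this Stirling-based argument is carried out). Your approach is therefore the standard one, and it is sound. The upper bound via the binomial theorem is clean and complete. For the lower bound, your diagnosis of the difficulty is accurate: with the crude two-sided Stirling bound the correction factor $\exp(-1/(12nw)-1/(12n(1-w)))$ must exceed $\sqrt{\pi/4}\approx 0.886$, which fails at the extreme pairs $(nw,n(1-w))\in\{(1,1),(1,2),(2,1)\}$; with Robbins's refinement $n!\ge\sqrt{2\pi n}(n/e)^n e^{1/(12n+1)}$ the only remaining failure is $(nw,n(1-w))=(1,1)$, i.e.\ $n=2$, $w=1/2$, where direct computation gives $\binom{2}{1}=2=2^{2H(1/2)}/\sqrt{8\cdot 2\cdot\tfrac14}$, equality. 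So option~(i) plus a single hand-check closes the argument, as you suggest. One small imprecision: you speak of ``finitely many boundary triples,'' but with the crude bound there are in principle infinitely many triples with $nw=1$ (one per $n$); what saves you is that only $n=2,3$ actually fail the threshold, since the error term improves as $n(1-w)$ grows. With Robbins the set of failures genuinely is finite (a single point), so I would recommend phrasing the fallback around Robbins rather than around brute-force enumeration.
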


\begin{lemma}
\label{lem:vol-bound}
For any $ q\in\bbZ_{\ge2} $ and $w\in[0,1]$ such that $ 1\le nw\le n-1 $, 
\begin{align}
    \frac{1}{\sqrt{8nw(1-w)}} \cdot q^{nH_q(w)} \le |\bham(\vq,nw)| 
    &\le nw \cdot q^{nH_q(w)} . 
    \notag 
\end{align}
Furthermore, suppose $ 1\le\ell\le q $ is an integer and 
let $ \vcY_\ell \coloneqq \{q-\ell+1,\ldots,q\}^n\in\binom{[q]}{\ell}^n $. 
Then 
\begin{align}
    \frac{1}{\sqrt{8nw(1-w)}} \cdot q^{nH_{q,\ell}(w)} \le |\blr(\vcY_\ell,nw)| 
    \le nw\cdot q^{nH_{q,\ell}(w)} . \notag
\end{align}
\end{lemma}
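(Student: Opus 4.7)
The plan is to prove both inequalities by writing each ball as a sum over ``error patterns'' of a fixed weight, then applying the binomial coefficient estimates of \Cref{lem:binom-coeff-bound} together with an algebraic identity relating $2^{nH_2(w)}$ to the $q$-ary entropies.

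First I would handle the Hamming ball. By the standard combinatorial decomposition (choose $k$ of the $n$ coordinates to disagree with $\vq$, then pick one of the $q-1$ non-$q$ symbols at each), we have
\begin{align*}
|\bham(\vq,nw)| = \sum_{k=0}^{\lfloor nw\rfloor} \binom{n}{k}(q-1)^k.
\end{align*}
The key identity is the algebraic fact that
\begin{align*}
q^{nH_q(w)} = (q-1)^{nw}\cdot w^{-nw}(1-w)^{-n(1-w)} = (q-1)^{nw}\cdot 2^{nH_2(w)},
\end{align*}
obtained by writing $H_q(w) = w\log_q((q-1)/w)+(1-w)\log_q(1/(1-w))$ and regrouping. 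For the lower bound, I would drop all but the $k=\lfloor nw\rfloor$ term and invoke the lower bound of \Cref{lem:binom-coeff-bound}:
\begin{align*}
|\bham(\vq,nw)| \ge \binom{n}{nw}(q-1)^{nw} \ge \frac{2^{nH_2(w)}}{\sqrt{8nw(1-w)}}\cdot (q-1)^{nw} = \frac{q^{nH_q(w)}}{\sqrt{8nw(1-w)}}.
\end{align*}
For the upper bound (in the regime $w\le(q-1)/q$ in which the lemma is actually used, cf.\ \Cref{thm:eb-qary-list-dec}), the ratio of consecutive terms $\binom{n}{k}(q-1)^k/\binom{n}{k-1}(q-1)^{k-1} = (n-k+1)(q-1)/k$ is at least $1$ throughout $0\le k\le nw$, so the sum is dominated by its largest (last) term times the number of terms; combining with the upper bound of \Cref{lem:binom-coeff-bound} and the same identity gives $|\bham(\vq,nw)|\le nw\cdot q^{nH_q(w)}$.

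The list-recovery case follows by the exact same template. The only changes are: the combinatorial decomposition becomes
\begin{align*}
|\blr(\vcY_\ell,nw)| = \sum_{k=0}^{\lfloor nw\rfloor} \binom{n}{k}(q-\ell)^k \ell^{n-k},
\end{align*}
where $k$ counts coordinates with $x_i\notin\{q-\ell+1,\dots,q\}$ (giving $q-\ell$ symbol choices) and the remaining $n-k$ coordinates each have $\ell$ choices; and the entropy identity generalizes to
\begin{align*}
q^{nH_{q,\ell}(w)} = (q-\ell)^{nw}\ell^{n(1-w)}\cdot w^{-nw}(1-w)^{-n(1-w)} = (q-\ell)^{nw}\ell^{n(1-w)}\cdot 2^{nH_2(w)}.
\end{align*}
Again the single-term lower bound at $k=\lfloor nw\rfloor$ paired with \Cref{lem:binom-coeff-bound} yields the claimed lower bound, and monotonicity of the terms in $k$ (for $w\le(q-\ell)/q$, the regime of use in \Cref{thm:eb-list-rec}) gives the upper bound.

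There is no serious obstacle here: both parts are straightforward once the algebraic identity expressing $q^{nH_q(w)}$ and $q^{nH_{q,\ell}(w)}$ as a product of a ``combinatorial'' factor and $2^{nH_2(w)}$ is in hand. The one point requiring mild care is the monotonicity step for the upper bounds, which relies implicitly on $w$ being at most the ``balanced'' value $(q-1)/q$ or $(q-\ell)/q$; since every invocation of \Cref{lem:vol-bound} in \Cref{sec:eb-list-dec-qary-pf,sec:eb-list-rec-pf} satisfies this constraint by construction of $w_{q,L,\tau}$ and $w_{q,\ell,L,\tau}$, this causes no issue.
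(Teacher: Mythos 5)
The paper states this lemma without proof, so I can only assess your argument on its own terms. Your approach (decompose the ball as a sum over weights, apply \Cref{lem:binom-coeff-bound}, and regroup via the identity $q^{nH_q(w)} = (q-1)^{nw}2^{nH_2(w)}$ and its list-recovery analogue) is the natural one and is essentially correct. You are also right to flag the hidden constraint: the upper bound as stated \emph{is} false for $w>(q-1)/q$ (take $q=2$, $w=(n-1)/n$: the ball has size $2^n-1$ while $nw\cdot q^{nH_q(w)}=O(n^2)$), and the lemma should be read with the implicit restriction $w\le(q-1)/q$ (resp.\ $w\le(q-\ell)/q$) which, as you note, holds in every invocation.

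One small inaccuracy in the upper bound: the sum $\sum_{k=0}^{nw}\binom{n}{k}(q-1)^k$ has $nw+1$ terms, so bounding by (number of terms)$\times$(largest term) produces $(nw+1)q^{nH_q(w)}$, not $nw\cdot q^{nH_q(w)}$. This is cosmetic, but if you want the stated constant (or better) you should instead use the standard Chernoff/tilting bound: for $w\le(q-1)/q$ and any $0\le j\le nw$,
\[
\binom{n}{j}(q-1)^j\,\frac{w^{nw}(1-w)^{n(1-w)}}{(q-1)^{nw}} \;=\; \binom{n}{j}w^j(1-w)^{n-j}\left(\frac{w}{(q-1)(1-w)}\right)^{nw-j} \;\le\; \binom{n}{j}w^j(1-w)^{n-j},
\]
since $w/( (q-1)(1-w))\le1$ exactly when $w\le(q-1)/q$. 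Summing over $j\le nw$ and using $\sum_j\binom{n}{j}w^j(1-w)^{n-j}=1$ gives $|\bham(\vq,nw)|\le q^{nH_q(w)}$ with no polynomial factor at all (and hence trivially $\le nw\cdot q^{nH_q(w)}$ since $nw\ge1$). The identical manipulation with $\ell w\le(q-\ell)(1-w)$ handles the list-recovery case. The rest of your argument — decomposition, lower bound via the single term $k=nw$, and the entropy identities — is correct.
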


\begin{lemma}[Covering]
\label{lem:cov}
There exists a set $ \{\vc_1,\cdots,\vc_K\}\subset[q]^n $ of vectors such that 
\begin{enumerate}
    \item $ \bigcup_{i=1}^K \bham(\vc_i,nw) = [q]^n $;
    \item $ K\le n\ln(q)\cdot\sqrt{8nw(1-w)}\cdot q^{n(1-H_q(w))} + 1 $. 
\end{enumerate}
\end{lemma}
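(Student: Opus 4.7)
The plan is to use the standard probabilistic (greedy) covering argument. Choose $K$ centers $\vc_1,\dots,\vc_K$ independently and uniformly at random from $[q]^n$, and show that with positive probability every point of $[q]^n$ lies within Hamming distance $nw$ of at least one $\vc_i$. This will establish the existence of a deterministic covering set of size $K$.

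Concretely, fix $\vx\in[q]^n$. By the symmetry of the Hamming metric, $\vx\in\bham(\vc,nw)$ if and only if $\vc\in\bham(\vx,nw)$, so
\begin{align*}
    \prob{\vx\notin\bigcup_{i=1}^K\bham(\vc_i,nw)}
    &= \paren{1 - \frac{|\bham(\vx,nw)|}{q^n}}^K
    \le \exp\paren{-K\cdot \frac{|\bham(\vx,nw)|}{q^n}}.
\end{align*}
A union bound over the $q^n$ choices of $\vx$ then yields
\begin{align*}
    \prob{\bigcup_{i=1}^K\bham(\vc_i,nw)\ne[q]^n}
    &\le q^n\cdot \exp\paren{-K\cdot \frac{|\bham(\vx,nw)|}{q^n}} .
\end{align*}
It therefore suffices to choose $K$ large enough that the right-hand side is strictly less than $1$, equivalently
\begin{align*}
    K &> \frac{n\ln(q)\cdot q^n}{|\bham(\vx,nw)|}.
\end{align*}

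Invoking \Cref{lem:vol-bound}, which gives $|\bham(\vx,nw)|\ge q^{nH_q(w)}/\sqrt{8nw(1-w)}$, this upper bound on $K$ is at most $n\ln(q)\cdot\sqrt{8nw(1-w)}\cdot q^{n(1-H_q(w))}$. Hence setting
\begin{align*}
    K &\coloneqq \ceil{n\ln(q)\cdot\sqrt{8nw(1-w)}\cdot q^{n(1-H_q(w))}} \le n\ln(q)\cdot\sqrt{8nw(1-w)}\cdot q^{n(1-H_q(w))} + 1
\end{align*}
ensures the failure probability is strictly less than $1$, so a covering of the required size exists.

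The main (and really only) ingredient is the lower bound on Hamming ball volume from \Cref{lem:vol-bound}; the rest is a routine random-construction plus union bound. There is no real obstacle to overcome; the only minor point to watch is the direction of inequality when passing from $(1-x)^K$ to $\exp(-Kx)$ and the consistent use of $\log_q$ versus $\ln$ (which accounts for the $\ln(q)$ factor in the final bound).
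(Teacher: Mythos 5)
Your proof is correct and follows essentially the same approach as the paper: a random construction with $K$ uniformly chosen centers, the elementary inequality $1-x\le e^{-x}$, a union bound over $[q]^n$, and the volume lower bound from \Cref{lem:vol-bound}. The only difference is presentational (you fix $\vx$ and union-bound at the end rather than writing the failure event directly), which changes nothing substantive.
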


\begin{proof}
The lemma follows from a simple random construction. 
Let $ \vC_1,\cdots,\vC_K $ be points independent and uniformly distributed in $ [q]^n $, for $K$ satisfying
\begin{align}
    K &\ge n\ln(q)\cdot\sqrt{8nw(1-w)}\cdot q^{n(1-H_q(w))} + 1 . 
    \label{eqn:k-bound}
\end{align}
Let us upper bound the probability that such a set of vectors does \emph{not} cover the whole space $ [q]^n $. 
\begin{align}
    & \prob{\exists \vx\in[q]^n,\; \forall i\in[K],\; \disth(\vx,\vC_i)>nw} \notag \\
    &\le q^n \paren{1 - \frac{|\bham(\vq, nw)|}{q^n}}^K \notag \\
    &\le q^n \paren{1 - \frac{q^{nH_q(w)}}{\sqrt{8nw(1-w)}}\cdot\frac{1}{q^n}}^K \notag \\
    &\le q^n \exp\paren{-\frac{1}{\sqrt{8nw(1-w)}} \cdot q^{-n(1-H_q(w))} \cdot K} \label{eqn:use-e-bound} \\
    &= \exp\paren{n\ln(q)-\frac{1}{\sqrt{8nw(1-w)}} \cdot q^{-n(1-H_q(w))} \cdot K} \notag \\
    &< 1 . \label{eqn:use-k-bound}
\end{align}
In \Cref{eqn:use-e-bound}, we use the elementary inequality $ (1 - x/n)^n\le e^{-x} $ for any $ |x|\le n $. 
\Cref{eqn:use-k-bound} follows from the choice of $K$ (cf.\ \Cref{eqn:k-bound}).

We have just shown that by taking a random set $ \{\vC_i\}_{i=1}^K $, the probability that it covers $ [q]^n $ is strictly positive, provided that $ K $ satisfies \Cref{eqn:k-bound}. 
This finishes the proof. 
\end{proof}

\begin{lemma}[Covering, list-recovery]
\label{lem:cov-list-rec}
There exists a set $ \{\vcC_1,\cdots,\vcC_K\}\subset\binom{[q]}{\ell}^n $ of vectors such that
\begin{enumerate}
    \item $ \bigcup_{i=1}^K \blr(\vcC_i,nw) = [q]^n $;
    \item $ K\le n\ln(q) \cdot \sqrt{8nw(1-w)} \cdot q^{n(1-H_{q,\ell}(w))} + 1 $. 
\end{enumerate}
\end{lemma}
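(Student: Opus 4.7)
The plan is to mimic the random-construction argument used for \Cref{lem:cov}, with Hamming balls replaced by list-recovery balls. Specifically, I would sample $\vcC_1,\dots,\vcC_K$ independently and uniformly at random from $\binom{[q]}{\ell}^n$ (that is, in each coordinate, sample a uniformly random $\ell$-subset of $[q]$), and show that for $K$ at most the claimed bound, with positive probability $[q]^n\subset\bigcup_{i=1}^K\blr(\vcC_i,nw)$.

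The key preliminary step is to compute the probability that a fixed $\vx\in[q]^n$ is covered by a single random $\vcC$. For this I would use the following symmetry/double-counting identity: by summing $|\{(\vx,\vcC) : \distlr(\vx,\vcC)\le nw\}|$ in two different ways and exploiting the fact that the coordinate-wise uniform distributions on $[q]$ and on $\binom{[q]}{\ell}$ are both invariant under the symmetric group $S_q$, one obtains
\begin{align*}
    \prob{\distlr(\vx,\vcC)\le nw} &= \frac{|\blr(\vcY_\ell,nw)|}{q^n} \ge \frac{1}{\sqrt{8nw(1-w)}}\cdot q^{-n(1-H_{q,\ell}(w))} ,
\end{align*}
where the inequality is \Cref{lem:vol-bound}. (Equivalently, one may observe directly that for each coordinate $i$, $\prob{x_i\notin\cC_i}=(q-\ell)/q$ since $\binom{q-1}{\ell}/\binom{q}{\ell}=(q-\ell)/q$, so $\distlr(\vx,\vcC)$ is a sum of i.i.d.\ Bernoulli random variables.)

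With this in hand, the rest of the argument is identical to that of \Cref{lem:cov}. Applying a union bound over $\vx\in[q]^n$ and the standard inequality $(1-x)^K\le e^{-Kx}$, one gets
\begin{align*}
    \prob{\exists\vx\in[q]^n,\;\forall i\in[K],\;\distlr(\vx,\vcC_i)>nw}
    &\le q^n\exp\paren{-\frac{K}{\sqrt{8nw(1-w)}}\cdot q^{-n(1-H_{q,\ell}(w))}} .
\end{align*}
Choosing $K\ge n\ln(q)\cdot\sqrt{8nw(1-w)}\cdot q^{n(1-H_{q,\ell}(w))}+1$ makes the right-hand side strictly less than $1$, so some realization of $\vcC_1,\dots,\vcC_K$ covers $[q]^n$, yielding the claim.

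I do not expect any real obstacle here: the only non-routine ingredient is identifying the correct single-coordinate success probability $(q-\ell)/q$ (equivalently, the symmetry identity above), and this falls out immediately from the definition of $\distlr$ together with the enumeration of $\ell$-subsets of $[q]$ containing or avoiding a fixed symbol. The volume estimate required to convert this into the stated bound is already supplied by \Cref{lem:vol-bound}.
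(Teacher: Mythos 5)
Your proof is correct and follows the same high-level strategy as the paper: a random construction of $K$ independent uniformly-sampled tuples from $\binom{[q]}{\ell}^n$, a union bound over $\vx\in[q]^n$, the inequality $(1-x)^K\le e^{-Kx}$, and a choice of $K$ that makes the failure probability strictly less than $1$. The one place where you proceed differently is in bounding the single-center covering probability: you first establish (via the $S_q$-symmetry / double-counting identity, or equivalently by noting the per-coordinate miss probability is $\binom{q-1}{\ell}/\binom{q}{\ell}=(q-\ell)/q$) that
\begin{align*}
\prob{\distlr(\vx,\vcC)\le nw} = \frac{|\blr(\vcY_\ell,nw)|}{q^n},
\end{align*}
and then invoke \Cref{lem:vol-bound}. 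The paper instead writes this probability out directly as $\frac{1}{\binom{q}{\ell}^n}\sum_{i=0}^{nw}\binom{n}{i}\binom{q-1}{\ell}^i\binom{q-1}{\ell-1}^{n-i}$ and re-derives the needed lower bound from scratch, effectively reproving the relevant case of \Cref{lem:vol-bound} inline. Your route is a touch cleaner in that it reuses the existing volume lemma rather than recomputing it, but the two presentations reduce to the same binomial-tail estimate and the same final choice of $K$.
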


\begin{proof}
The lemma again follows from a simple random construction. 
Let $ \vcC_1,\cdots,\vcC_K $ be sequences of subsets independent and uniformly distributed in $ \binom{[q]}{\ell}^n $, for $K$ satisfying
\begin{align}
    K &\ge n\ln(q) \cdot \sqrt{8nw(1-w)} \cdot q^{n(1-H_{q,\ell}(w))} + 1 .  \label{eqn:def-k-cov-lr}
\end{align}
Let us upper bound the probability that the union of the list-recovery balls centered around $ \vcC_1,\cdots,\vcC_K $ does \emph{not} cover $ [q]^n $. 
\begin{align}
    \prob{\exists\vx\in[q]^n,\; \forall i\in[K],\; \distlr(\vx,\vcC_i) > nw} % \notag \\
    &\le q^n \paren{1 - \probover{\vcC\sim\binom{[q]}{\ell}^n}{\distlr(\vq,\vcC)\le nw}}^K . \label{eqn:cov-lr-tobound}
\end{align}
The probability in the parentheses above can be calculated using a simple counting argument. 
\begin{align}
    \probover{\vcC\sim\binom{[q]}{\ell}^n}{\distlr(\vq,\vcC)\le nw}
    &= \frac{1}{\binom{q}{\ell}^n} \sum_{i=0}^{nw} \binom{n}{i} \binom{q-1}{\ell}^i \binom{q-1}{\ell-1}^{n-i} \notag \\
    &\ge \frac{1}{\binom{q}{\ell}^n} \binom{n}{nw} \binom{q-1}{\ell}^{nw} \binom{q-1}{\ell-1}^{n(1-w)} \notag \\
    &\ge \frac{1}{\sqrt{8nw(1-w)}} \exp_q\left(n\left[-\log_q\binom{q}{\ell} + w\log_q\frac{1}{w} + (1-w)\log_q\frac{1}{1-w} \right.\right. \notag \\
    &\qquad\qquad\qquad\qquad\qquad \left.\left. + w\log_q\binom{q-1}{\ell} + (1-w)\log_q\binom{q-1}{\ell-1} \right]\right) \notag \\
    &= \frac{1}{\sqrt{8nw(1-w)}} \exp_q\left(-n\left[w\log_q\frac{\binom{q}{\ell}}{\binom{q-1}{\ell}} + (1-w)\log_q\frac{\binom{q}{\ell}}{\binom{q-1}{\ell-1}} \right.\right. \notag \\
    &\qquad\qquad\qquad\qquad\qquad \left.\left. - w\log_q\frac{1}{w} - (1-w)\log_q\frac{1}{1-w}\right]\right) \notag \\
    &= \frac{1}{\sqrt{8nw(1-w)}} \exp_q\paren{-n\sqrbrkt{w\log_q\frac{qw}{q-\ell} + (1-w)\log_q\frac{q(1-w)}{\ell}}} \notag \\
    &= \frac{1}{\sqrt{8nw(1-w)}} q^{-n(1-H_{q,\ell}(w))} . \notag
\end{align}
Bounding \Cref{eqn:cov-lr-tobound} in a similar way to the proof of \Cref{lem:cov}, we have
\begin{align}
    \prob{\exists\vx\in[q]^n,\; \forall i\in[K],\; \distlr(\vx,\vcC_i) > nw}
    \le q^n \exp\paren{-\frac{q^{-n(1-H_{q,\ell}(w))}}{\sqrt{8nw(1-w)}}K}
    <1 , \notag
\end{align}
where the last strict inequality follows from the choice of $K$ (cf.\ \Cref{eqn:def-k-cov-lr}). 
According to the probabilistic method, the lemma follows. 
\end{proof}

\begin{remark}
\label{rk:tight-cov}
Though not needed in this paper, a converse statement can be shown that the exponents $ 1-H_q(w) $ in \Cref{lem:cov} and $ 1-H_{q,\ell}(w) $ in \Cref{lem:cov-list-rec} can not be further reduced. 
That is, \emph{any} covering (not necessarily the random ones used in \Cref{lem:cov,lem:cov-list-rec}) of $[q]^n$ must have size at least $ q^{n(1-H_q(w) - \eps)} $ w.r.t.\ the Hamming metric or at least $ q^{n(1-H_{q,\ell}(w) - \eps)} $ w.r.t.\ the list-recovery metric, for any constant $\eps>0$. 
\end{remark}

\end{document}